\newcolumntype{P}[1]{>{\centering\arraybackslash}p{#1}}
\theoremstyle{definition}
\newtheorem{prop}{Proposition}
\newtheorem{corollary}{Corollary}
\newtheorem{theorem}{Theorem}
\theoremstyle{definition}
\newtheorem{remark}{Remark}
\theoremstyle{definition}
\newtheorem{assumption}{Assumption}
\newtheorem{example}{Example}
\renewenvironment{proof}[1][\proofname]{%
  \par\pushQED{\qed}\normalfont%
  \topsep6\p@\@plus6\p@\relax
  \trivlist\item[\hskip\labelsep\bfseries#1\@addpunct{.}]%
  \ignorespaces
}{%
  \popQED\endtrivlist\@endpefalse
}
    \DeclareMathOperator*{\argmax}{argmax} % thin space, limits underneath in displays
  \DeclareMathOperator{\tr}{\mathsf{Tr}}
\newcolumntype{Y}{>{\centering\arraybackslash}X}
\renewcommand{\texttt}[2][black]{\textcolor{#1}{\ttfamily #2}}% \texttt[<color>]{<stuff>}
\setlist{noitemsep} % or \setlist{noitemsep} to leave space around whole list
\title{Canonical Portfolios: Optimal Asset and Signal Combination}
\definecolor{darkblue}{rgb}{0.0,0.0,0.66}   
\definecolor{electricultramarine}{rgb}{0.25, 0.0, 1.0}
\title{Canonical Portfolios: Optimal Asset and Signal
Combination\footnote{
The authors would like to thank Geert Bekaert (editor), Mihai Cucuringu, Brian Healy, Anthony Ledford, Slavi Marinov, Attilio Meucci, Yin Cheng Ng, Ana Maria Pires, and two anonymous referees for the helpful comments. The paper has also benefited from participants in seminars at the University of Oxford and Man AHL.  This work was supported by the Oxford-Man Institute of Quantitative Finance. 
}}
\author{ Nikan Firoozye\thanks{Department of Computer Science, University College London. E-mail: \href{mailto:n.firoozye@ucl.ac.uk}{n.firoozye@ucl.ac.uk}} \and
Vincent Tan\thanks{Oxford-Man Institute of Quantitative Finance, University of Oxford. E-mail: \href{mailto:vincent.tan@eng.ox.ac.uk}{vincent.tan@eng.ox.ac.uk}} 
 \and Stefan Zohren\thanks{Oxford-Man Institute of Quantitative Finance, University of Oxford. E-mail: \href{mailto:stefan.zohren@eng.ox.ac.uk}{stefan.zohren@eng.ox.ac.uk}}}
\newcommand*{\thisdraft}{This version: July 2023} % define command
\newcommand*{\firstdraft}{First version: February 2022}  % define command
\begin{document}
\begin{titlepage}

% \date{\mydate \today}
\date{\firstdraft \\ \thisdraft}
\maketitle
\thispagestyle{empty}

%%%%%%%%%%%%%%%%%%%%%%%%%%%%%%%%%%%%%%%%%%%    ABSTRACT
\vspace{-.5cm}

\setstretch{1.15}
\begin{abstract}
% \onehalfspacingh

This paper presents a novel framework for analyzing the optimal asset and signal combination problem. Our approach builds upon the dynamic portfolio selection problem introduced by \cite{brandt2006dynamic} and consists of two stages. First, we reformulate their original investment problem into a tractable one that allows us to derive a closed-form expression for the optimal portfolio policy that is scalable to large cross-sectional financial applications.  Second, we recast the problem of selecting a portfolio of correlated assets and signals into selecting a set of uncorrelated managed portfolios through the lens of Canonical Correlation Analysis of \cite{hotelling1936relations}. The new investment environment of uncorrelated managed portfolios offers unique economic insights into the joint correlation structure of our optimal portfolio policy. We also operationalize our theoretical framework to bridge the gap between theory and practice, showcasing the improved performance of our proposed method over natural competing benchmarks. 

\end{abstract} \medskip

\begin{flushleft} JEL Classification: G11, D81, C1. \\ \medskip Keywords: Canonical correlation analysis, dynamic portfolio selection, mean-variance analysis \end{flushleft}

\end{titlepage}

\clearpage
% \onehalfspacing

%%%%%%%%%%%%%%%%%%%%%%%%%%%%%%%%%%%%%%%%%%%    INTRODUCTION

\setstretch{1.15}
\section{Introduction}

The investment decisions of portfolio managers are often guided by return-predictive signals that reflect their expectations about future returns.\footnote{The terminology `signals' is also often referred to as predictors, alphas, attributes, characteristics, state variables, instrumental variables, cross-sectional anomalies, etc. in the finance literature.} Traditionally, the process of constructing a portfolio conditional on the investor's information set is divided into two stages according to modern portfolio theory pioneered by \cite{markowitz}. In the first stage, investors make inferences about the return-generating process by either plug-in estimation or subjective belief formation. In the second stage, optimal portfolio weights are formed using estimates from the first stage.

In this paper, we draw inspiration from \cite{brandt2006dynamic} and solve a dynamic portfolio selection problem with multiple return-predictive signals in a single stage. Our first contribution is to reformulate the authors' original investment problem into a tractable one that yields a dynamic portfolio policy that directly invests in each basis asset. Additionally, our approach provides a theoretical framework for optimal investment in the presence of cross-predictability among assets and signals, as well as correlations within them. Our proposed solution is straightforward to implement in practice as the standard static Markowitz solution and is well-suited to handle large-scale cross-sectional applications, such as the equity universe.

The second contribution of our work is to provide insights into our proposed solution by reorganizing the set of assets and accompanying signals into a set of uncorrelated managed portfolios. Each managed portfolio invests in each asset in a proportion that scales with the size of the signals \citep{cochrane2001asset}. As discussed in \citet[Section 4]{firoozye2020optimal}, this can be accomplished through the use of a powerful tool from multivariate analysis known as canonical correlation analysis (CCA), which was developed by \cite{hotelling1936relations} and has a deep economic interpretation, which we provide. CCA generalizes the principal component analysis (PCA) of \cite{hotelling1933analysis} to two sets of random variables and their joint association. This makes it particularly relevant to our case, where we have multiple correlated assets and signals that are mutually linked by a joint correlation objective. CCA allows us to express any portfolio returns derived from our framework in terms of their exposures to uncorrelated sources of returns, by reweighting the original set of managed portfolio returns while maximally retaining their correlation. We refer to these special weights as \textit{canonical portfolios} since they can be viewed as long or short positions across different assets and signals.\footnote{The idea of recasting the asset universe into their orthogonal components through PCA can be traced back to \cite{partovi2004principal} with further extensions and applications from \cite{meucci2009managing} and \cite{avellaneda2010statistical}, as well as from \cite{KozakNagelSantosh2018JF, kozak2020shrinking} in the context of risk pricing in a no near-arbitrage framework. Our work aims to broaden that horizon to \textit{two} sets of variables by also considering the signal as an important input in the analysis of their joint association with returns.} 

There are two common approaches for solving the dynamic portfolio selection problem when conditioning information is available. The first approach involves specifying the joint conditional distribution of asset returns and then forming the optimal portfolio. Alternatively, \cite{hansen1987role} proposed a second approach, which involves augmenting the basis assets to include conditional portfolios, and solving a simpler unconditional mean-variance problem. Research efforts that have built upon the latter approach to developing dynamic trading strategies that exploit time-varying investment opportunities include those by \cite{ferson2001efficient}, \cite{brandt2006dynamic}, and others. Of particular relevance to our study is \cite{brandt2006dynamic}, in which the authors analyze parametric portfolio policies that are linear in the state variables and/or firm characteristics. Through this parameterization, they are able to recast a dynamic portfolio selection problem involving basis assets into a static portfolio selection problem in an asset space augmented with managed portfolios. The optimization of parameters in the static problem leads to a dynamic strategy represented as a fixed combination of managed portfolios, which consists of  ``conditional'' and ``timing'' portfolios. By directly focusing on the portfolio weights, their optimal portfolio policy accounts for time-variation in the entire return distribution, rather than time-variation in expected returns only. 

In this study, we directly model the portfolio weights with a similar parameterization and adopt a joint Gaussian distribution to model the relationship between returns and signals. Our modeling approach allows us to make analytical progress in two main areas. First, it facilitates the derivation of a closed-form dynamic portfolio policy that is applicable even in large-dimensional settings. Second, it allows us to decompose the joint correlation structure of our optimal portfolio policy into linearly independent orthogonal portfolios using CCA, thereby providing valuable insights into the underlying factors that drive our portfolio's performance. 

However, these innovations come with additional input requirements when compared to the conventional mean-variance framework, which solely relies on the covariance of the returns matrix. Specifically, we need to consider the covariance of the signals and the cross-covariances between the returns and signals. To render our theoretical framework practicable, we utilize regularization techniques for these large-dimensional objects to encourage stability in the out-of-sample results. In particular, we apply shrinkage to the covariances, a technique commonly employed in portfolio selection problems; see, for example, \cite{frost1986empirical} and \cite{ledoit2004honey, ledoit2004well, ledoit2017nonlinear}. Additionally, we identify the leading few canonical portfolios that exhibit the highest predictability, discarding noisy and unstable ones. Through a series of backtesting simulations, we demonstrate that our proposed method consistently outperforms competitive benchmarks.

Our study extends the work of \cite{koshiyama2019avoiding} on the use of total least squares (TLS) \citep{golub1980analysis} for optimally combining signals for a univariate return. Their original study focused on a novel objective function for return prediction. While most research analysts will seek to find a good forecast for future returns via methods such as ordinary least squares (OLS), or in the nonlinear context, via a large suite of machine learning-based methods, \cite{koshiyama2019avoiding, firoozye2020optimal} argue that if the goal is to maximize the Sharpe ratio of trading strategy, then this can be achieved through a linear combination of signals that maximizes the correlation between their combination and the returns. In the linear context, the solution to this problem comes out via TLS, which is an errors-in-variables formulation of regression. TLS has been well-studied primarily among numerical analysts and is used less formally on trading desks of investment banks and hedge funds, typically under the moniker of PCA regression. Given the close relation between TLS and CCA, our work extends this research to a multivariate context, which is more relevant for problems in portfolio selection.

The remainder of the paper is organized as follows. Section \ref{sec:2} gives the description of our dynamic portfolio selection problem. Section \ref{sec:3} provides the financial interpretation of our optimal portfolio policy with CCA and details our proposed estimation approach. Section \ref{sec:empirical} describes the empirical methodology and presents the results of the out-of-sample backtest experiments with Fama-French equity sorted portfolios. Section \ref{sec:conclusion} concludes. Appendix \ref{appendix:figures}--\ref{appendix:proofofprop} contain all the figures, tables, and additional mathematical derivations.

\section{Setting the Stage} \label{sec:2}

\subsection{Notation}

In this section, we introduce the notational convention for our analysis. Let the subscript $i$ index the variables such that $i \in \{1, \ldots, N\}$, where $N$ denotes the dimension of the asset universe. The subscript $t$ indexes the trading dates such that $t \in \{1,\ldots ,T\}$, where $T$ denotes the number of observations. The notation $\mathbb{E}_t[\cdot] \coloneqq \mathbb{E}[\cdot | \mathcal{F}_t]$ represents the mathematical expectation operator of a random vector, conditioned on the information set $\mathcal{F}_t$ available at time $t$. Furthermore,  $\mathsf{Cov}(\cdot , \cdot)$ represents the covariance matrix between two random vectors,  $\mathsf{Tr}(\cdot)$ represents the trace of a square matrix, and $\mathsf{Diag}(\cdot)$ represents a diagonal matrix of a given vector. Denote $\mathbb{I}_N$ to be an identity matrix of dimensions $N \times N$. For any $N$-dimensional vector $z$, its $\ell _2$-norm is given by $\|z\|_2 \coloneqq \sqrt{z_1^2 + \ldots z_N ^2}$, and for any real matrix $Z$, its Frobenius norm is $\|Z\|_{\mathrm{F}} \coloneqq \sqrt{\mathsf{Tr}(Z'Z)}$. Here, the symbol $(')$ denotes the transpose operator of a vector or matrix, the symbol $\coloneqq$ is a definition sign, and the symbol $=$ is the equal sign.

Let $r_{t,i}$ be the return for a risky asset $i$ sampled at trading date $t$, stacked into a vector $r_{t}\coloneqq (r_{t,1}, \ldots, r_{t,N})'$.  Also, let $x_{t,i}$ be an $M$-dimensional vector return-predictive signal at date $t$ for asset $i$, stacked into a vector $x_t\coloneqq (x_{t,1}, \ldots, x_{t,N})'$.  This means every asset is accompanied by at least one signal and so $M$ is necessarily a positive integer. For example, if there are two signals for one asset, then $M=2$. The signals are understood to proxy the future expected returns conditional on the investor's information set at time $t$. 

We impose the following assumptions to both the returns and signals whenever applicable.

\begin{assumption}[Stationarity] \label{as:stationarity}
The return sequence $\{ r_{t+1} \}$ and the signal sequence $\{ x_t \}$ exhibit stationarity and ergodicity, whose moments can be calculated by taking time-series averages. 
\end{assumption}

\begin{assumption}[Gaussianity] \label{as:gaussian}
The returns and signals have multivariate distributions that are assumed to be Gaussian with zero expectations and covariances $\mathsf{Var}(r_{t+1}) \coloneqq \mathsf{Cov}(r_{t+1}, r_{t+1})  = \Sigma_r$ and $\mathsf{Var}(x_t) \coloneqq \mathsf{Cov}(x_t, x_t) = \Sigma_x$, and a cross-covariance $\mathsf{Cov}(r_{t+1}, x_t)  = \Sigma_{rx} = \Sigma_{xr}'$. The column vector of stacked subsequent returns and signals $(r_{t+1}, x_t)'$ has a joint covariance matrix expressed in the following partitioned form:
\begin{align}
    \Sigma =
    \begin{pmatrix}
    \Sigma_{r} & \Sigma_{rx} \\
    \Sigma_{rx}' & \Sigma_{x}
    \end{pmatrix} .
\end{align}
\end{assumption}

In our notation, we denote estimated quantities of their population analogs with a hat accent ($\string^$) on them. We use $S_r$, $S_x$, and $S_{rx}$ to represent the sample covariances between the asset returns and signals, the sample covariances of the signals themselves, and the sample cross-covariance between the returns and signals, respectively. Specifically, $S_r$ is computed as $T^{-1} \sum ^{T-1} _{t=0} (r_{t+1}-\bar{r}) (r_{t+1} -\bar{r})'$, where $\bar{r}$ is the sample mean of the returns. Similarly, $S_x$ is computed as $T^{-1} \sum ^{T-1} _{t=0} (x_{t} - \bar{x}) (x_{t} - \bar{x})'$, where $\bar{x}$ is the sample mean of the signals. Finally, $S_{rx}$ is computed as $T^{-1} \sum ^{T-1} _{t=0} (r_{t+1} - \bar{r}) (x_{t} - \bar{x})'$. 

Due to the symmetric positive semidefinite property of the covariance matrices, they admit a spectral decomposition. For example, $\Sigma_{r} = P_r \mathsf{Diag}(d_{r,1} , \ldots , d_{r,N}) P'_r$, where $P_r\coloneqq[p_{r,1}, \ldots, p_{r,N}]$ is an orthogonal matrix ($P_r'P_r=P_rP_r' = \mathbb{I}_N$) whose columns contain the eigenvectors and $\mathsf{Diag}(d_{r,1} , \ldots , d_{r,N})$ is a diagonal matrix containing the eigenvalues. The eigenvalues are assumed to be sorted in ascending order. The matrix factor $\Sigma_{r}^{1/2}$ can be obtained by retaining the eigenvectors of $\Sigma _r$, but reassembling them with the square roots of the eigenvalues given by $(\sqrt{d_{r,1}}, \ldots , \sqrt{d_{r,N}})'$. An analogous notation applies for the decomposition of the covariance of signals $\Sigma _x$ and its matrix of factors $\Sigma_{x}^{1/2}$.

\subsection{Dynamic Trading Strategies}

We begin our analysis by considering the portfolio policies to be linear in the vector of signals at time $t$ based on the following specification. 
\begin{assumption}[Linear Portfolio Policies]\label{as:lpp}
Let $w_t \coloneqq A' x_t$, where $A$ is a $NM \times N$ constant matrix of coefficients.\footnote{Note that the assumption that the optimal portfolio weights are linear functions of the signals is not entirely restrictive. Indeed, each of the signals in $x_t$ can be composed of non-linear functions of a more primitive set of underlying signals.}    
\end{assumption}

The portfolio weights are conditional on the strength of the signals but the matrix $A$ is a static object where column $i$ maps the signal vector to the corresponding portfolio weight invested in basis asset $i$. From this parameterization, the dynamic portfolio selection problem that an investor faces at time $t$ can be formulated in terms of a conditional mean-variance objective as follows
\begin{equation} \label{eq:cond}
    \begin{aligned} 
        &\max _{A} \mathbb{E}_t[x_t ' A r_{t+1}]  -  \frac{\gamma}{2} \mathsf{Var}_t[x_t ' A r_{t+1}] ,
\end{aligned} 
\end{equation}
where $\gamma$ is an investor's risk aversion parameter.\footnote{Other reformulations of this problem include the maximization of mean return or minimization of risk. Regardless of the choice, all three problems result in the same mean-variance trade-off.} This objective function underscores that the investor chooses to simultaneously allocate between the assets and signals in order to yield a trading strategy that optimizes the returns of the portfolio at time $t+1$, which we denote as $r_{t+1}^{w}$.

\cite{brandt2006dynamic} showed that this problem formulation is equivalent to an unconditional portfolio selection problem since the matrix of coefficients is constant through time and Assumption \ref{as:stationarity} holds. Thus, the matrix $A$ that optimizes the investor's conditional problem at any given date is the same for all dates and hence, it also optimizes the investor's unconditional problem. Hence, we can drop the subscript $t$ in \eqref{eq:cond} and consider the following revised problem instead\footnote{Note that Equations \eqref{eq:cond} and \eqref{eq:uncond} are not generally the same when time-varying covariances are present; see \cite{ang2002international}. However, we will defer the investigation of this setting to future research endeavors.}
\begin{equation} \label{eq:uncond}
    \begin{aligned} 
        &\max _{A} \mathbb{E}[x_t ' A r_{t+1}] -   \frac{\gamma}{2} \mathsf{Var}[x_t ' A r_{t+1}]  .
\end{aligned} 
\end{equation}

If we further invoke Assumption \ref{as:gaussian}, we can express this objective in terms of the unconditional second-moments with the following proposition. 
\begin{prop}\label{prop:fullobj}
The objective function to the investor's problem \eqref{eq:uncond} is given by
\begin{equation}\label{eq:obj2}
    \begin{aligned} 
        &\max _{A} \mathsf{Tr}(A \Sigma_{rx}) - \frac{\gamma}{2} \left ( \mathsf{Tr}(\Sigma_{x} A \Sigma_{r} A') + \mathsf{Tr}(\Sigma_{rx} A \Sigma_{rx} A) \right ).
\end{aligned} 
\end{equation}
A closed-form expression for matrix $A$ is provided in Theorem \ref{thm:main} from Appendix \ref{appendix:proofofprop}.
\end{prop}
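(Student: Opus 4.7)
The plan is to compute the expectation and variance in \eqref{eq:uncond} directly in coordinates, exploiting joint Gaussianity and zero means through Isserlis' theorem for the fourth-moment calculation.

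First I would handle the expectation term. Since $x_t'Ar_{t+1}$ is a scalar, I can rewrite it as $\mathsf{Tr}(Ar_{t+1}x_t')$ and pull the expectation inside the trace by linearity to obtain $\mathsf{Tr}(A\,\mathbb{E}[r_{t+1}x_t']) = \mathsf{Tr}(A\Sigma_{rx})$. This gives the first term of \eqref{eq:obj2} immediately.

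Next I would tackle the variance. Writing $Y := x_t'Ar_{t+1} = \sum_{i,j} A_{ij}(x_t)_i(r_{t+1})_j$, I would expand
\begin{equation*}
\mathbb{E}[Y^2] = \sum_{i,j,k,l} A_{ij}A_{kl}\,\mathbb{E}\!\left[(x_t)_i(r_{t+1})_j(x_t)_k(r_{t+1})_l\right].
\end{equation*}
Under the joint Gaussianity with zero means, Isserlis' theorem decomposes each fourth moment into a sum over the three pairings:
\begin{equation*}
\mathbb{E}[(x_t)_i(r_{t+1})_j(x_t)_k(r_{t+1})_l] = (\Sigma_{xr})_{ij}(\Sigma_{xr})_{kl} + (\Sigma_x)_{ik}(\Sigma_r)_{jl} + (\Sigma_{xr})_{il}(\Sigma_{xr})_{kj}.
\end{equation*}
The first pairing, after summation, yields $[\mathsf{Tr}(A\Sigma_{rx})]^2 = (\mathbb{E}Y)^2$, which cancels when forming $\mathsf{Var}(Y) = \mathbb{E}[Y^2] - (\mathbb{E}Y)^2$.

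It then remains to identify the two surviving pairings with the two trace expressions in \eqref{eq:obj2}. For the second pairing I would group the sum as $\sum_{j,l}(\Sigma_r)_{jl}(A'\Sigma_x A)_{jl}$ and recognize this as $\mathsf{Tr}(\Sigma_r A'\Sigma_x A)$, which by cyclic invariance equals $\mathsf{Tr}(\Sigma_x A \Sigma_r A')$. For the third pairing, summing over $i$ produces $(A'\Sigma_{xr})_{jl}$ and summing over $k$ produces $(A'\Sigma_{xr})_{lj}$, so the term collapses to $\mathsf{Tr}\bigl((A'\Sigma_{xr})^2\bigr)$; transposing the square and using $\Sigma_{xr}' = \Sigma_{rx}$ rewrites this as $\mathsf{Tr}(\Sigma_{rx}A\Sigma_{rx}A)$, matching the paper. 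Multiplying by $-\gamma/2$ and adding $\mathsf{Tr}(A\Sigma_{rx})$ delivers \eqref{eq:obj2}.

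The calculation is essentially routine once Isserlis is invoked; the main obstacle is purely bookkeeping — keeping the rectangular dimensions $A\in\mathbb{R}^{NM\times N}$, $\Sigma_{rx}\in\mathbb{R}^{N\times NM}$ straight, and carefully tracking transposes in the third pairing so that the two copies of $\Sigma_{rx}A$ emerge in the correct order inside the trace.
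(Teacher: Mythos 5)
Your proof is correct and follows essentially the same route as the paper's: the trace/linearity identity for the expectation and Isserlis' (Wick's) theorem for the fourth moments, with the first pairing cancelling against $(\mathbb{E}Y)^2$ and the two surviving pairings yielding $\mathsf{Tr}(\Sigma_x A\Sigma_r A')$ and $\mathsf{Tr}(\Sigma_{rx}A\Sigma_{rx}A)$ exactly as in the paper, and your handling of the transposes in the third pairing is sound. The only omission is that the paper's proof of this proposition also contains a second part establishing the closed-form solution (Theorem \ref{thm:main}) referenced in the proposition's final sentence; since that claim is delegated to a separate theorem, your derivation of the objective reformulation itself is complete.
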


It is analytically convenient to assume that $\mathsf{Tr}(\Sigma_{rx} A \Sigma_{rx} A) \approx 0$ in Proposition \ref{prop:fullobj}. This approximation offers the advantage of formulating the optimal portfolio policy solely based on unconditional moments, without any recourse to canonical correlation analysis, which is necessary to solve the investment problem \eqref{eq:obj2}. It is a reasonable approximation as long as the weighted sum contributions of squared expected returns from the dollar-neutral managed portfolios, represented as $x_t$ multiplied by the return of each asset, are small and can be approximated as zero.\footnote{For instance, if a portfolio managed on the basis of momentum signals has an average return of 5\%, then the square of its return is 0.25\%.} Notwithstanding this approximation, we have further validated both the exact and approximate solutions empirically in Section \ref{sec:empirical}.

\begin{prop}\label{prop:partialobj}
Consider the following objective function 
\begin{equation}\label{eq:aux}
    \begin{aligned} 
        &\max _{A}  \mathsf{Tr}(A \Sigma_{rx}) -\frac{\gamma}{2}  \mathsf{Tr}(\Sigma_{x} A \Sigma_{r} A') .
\end{aligned} 
\end{equation}
The solution to the investor's problem \eqref{eq:aux} is given by
\begin{align}\label{eq:optA}
    A = \frac{1}{\gamma} \Sigma^{-1}_{x}  \Sigma_{rx} ' \Sigma^{-1}_{r} .
\end{align}
%11
The weight $w_t$ allocated to each asset conditional on the signal at time $t$ is then 
\begin{align}\label{eq:optw}
    w_t = \frac{1}{\gamma} \Sigma^{-1}_{r}  \Sigma_{rx}  \Sigma^{-1}_{x} x_t .
\end{align}
\end{prop}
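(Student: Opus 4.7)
The plan is to treat Proposition \ref{prop:partialobj} as an unconstrained quadratic matrix optimization problem and solve it by first-order conditions via standard matrix calculus. Since the objective \eqref{eq:aux} is differentiable in $A$ with no constraints, I would simply compute the gradient with respect to $A$, set it to zero, and solve the resulting linear matrix equation.

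For the first term, I would use the identity $\partial \mathsf{Tr}(AB)/\partial A = B'$, which yields $\Sigma_{rx}'$. For the quadratic term, I would invoke the identity $\partial \mathsf{Tr}(B A C A')/\partial A = B A C + B' A C'$. Applying it to $\mathsf{Tr}(\Sigma_x A \Sigma_r A')$ and exploiting the symmetry of both $\Sigma_x$ and $\Sigma_r$, the two contributions coalesce into $2\Sigma_x A \Sigma_r$. Setting the full gradient of \eqref{eq:aux} to zero then gives
\begin{equation*}
\Sigma_{rx}' - \gamma\,\Sigma_x A \Sigma_r = 0,
\end{equation*}
and, since $\Sigma_x$ and $\Sigma_r$ are assumed invertible (being strictly positive definite covariance matrices), left-multiplying by $\Sigma_x^{-1}$ and right-multiplying by $\Sigma_r^{-1}$ isolates $A = \frac{1}{\gamma}\Sigma_x^{-1}\Sigma_{rx}'\Sigma_r^{-1}$, matching \eqref{eq:optA}. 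The portfolio weight expression \eqref{eq:optw} then follows by substituting into $w_t = A'x_t$ and transposing, using the symmetry of $\Sigma_x$ and $\Sigma_r$ (so that $(\Sigma_x^{-1})' = \Sigma_x^{-1}$ and likewise for $\Sigma_r$).

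To confirm that the critical point is a maximum rather than a saddle or minimum, I would verify concavity by examining the Hessian of the objective as a quadratic form in the vectorization of $A$. Using the identity $\mathsf{Tr}(\Sigma_x A \Sigma_r A') = \mathrm{vec}(A)'(\Sigma_r \otimes \Sigma_x)\mathrm{vec}(A)$, the quadratic form is governed by the Kronecker product $\Sigma_r\otimes\Sigma_x$, which is positive definite whenever $\Sigma_r$ and $\Sigma_x$ are. Thus the objective is strictly concave in $A$ and the stationary point found above is the unique global maximizer.

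The main obstacle here is essentially bookkeeping: making sure the trace-derivative identities are applied with the correct transposes and that the dimensions line up ($A$ is $NM\times N$, so $\Sigma_{rx}'$ and $\Sigma_x A\Sigma_r$ both live in $\mathbb{R}^{NM\times N}$). No deep argument is required beyond invertibility of the two covariance matrices; the result is a direct consequence of first-order optimality for a strictly concave quadratic. The more substantive content, which presumably lives in Theorem \ref{thm:main}, concerns the full objective \eqref{eq:obj2} including the dropped $\mathsf{Tr}(\Sigma_{rx}A\Sigma_{rx}A)$ term, but that is outside the scope of this proposition.
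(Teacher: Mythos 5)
Your proof is correct: the trace-derivative identities are applied with the right transposes, the first-order condition $\Sigma_{rx}' - \gamma\Sigma_x A\Sigma_r = 0$ is the right one, and the Kronecker-product argument for strict concavity is sound, so the stationary point is indeed the unique global maximizer. Your route differs from the paper's, though. The paper does not differentiate in the original variables; it first applies the whitening change of variables $\tilde{A} := \Sigma_x^{1/2} A \Sigma_r^{1/2}$, which turns the objective into $\mathsf{Tr}(\tilde{A}\,\Sigma_{\tilde{r}\tilde{x}}) - \tfrac{\gamma}{2}\mathsf{Tr}(\tilde{A}\tilde{A}')$, i.e.\ a linear term minus a pure Frobenius penalty, so the maximizer $\tilde{A} = \gamma^{-1}\Sigma_{\tilde{r}\tilde{x}}'$ can be read off entrywise and then mapped back to $A = \Sigma_x^{-1/2}\tilde{A}\Sigma_r^{-1/2}$. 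What the paper's detour buys is the connection to the CCA machinery: the solution is exhibited directly as (a transpose of) the adjusted cross-covariance matrix, which immediately yields the approximate singular-value scaling $L_i \approx s_i/\gamma$ used to compare against the full solution of Theorem \ref{thm:main}. What your direct approach buys is self-containedness and the explicit second-order/concavity verification, which the paper leaves implicit. One small point worth making explicit in your write-up is that the unconstrained FOC argument requires $\Sigma_r$ and $\Sigma_x$ to be strictly positive definite (you do state this), since otherwise neither the inversion step nor strict concavity goes through.
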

In the absence of any further constraints on $w_t$, this is the frictionless portfolio policy that invests directly in each asset, and the size of the portfolio is determined by the level of the investor's risk aversion. The dynamics of the portfolio are guided by the trajectory of the signal $x_t$ at each period. We see that the portfolio takes into account the cross-sectional information from the asset returns, the signals as well as their cross-relationships. The cross-covariance matrix $\Sigma _{rx}$ can be interpreted as a matrix consisting of managed portfolios that encapsulate the potential earnings opportunities that are available to an investor.\footnote{In \cite{kelly2020principal}, the authors term the cross-covariance matrix as the `prediction matrix', which is also another valid interpretation since the correlation between an asset return and some signal is one measure of signal-return predictability. Given that a cross-covariance matrix is not symmetric in general, the predictive strength of a signal $i$ on asset $j$ may be different from that of signal $j$ on asset $i$.} For this reason itself, we term this matrix a `conditional portfolios' matrix, which consists of returns scaled by their predictive signals.

We conclude this section by presenting two examples along with some remarks.

\begin{example}[Univariate Portfolio]
    Let us consider one return-accompanying signal and postulate that all covariances in Equation \eqref{eq:optw} are the identity matrix. In this case, the weight is proportional to the conditioning variable, that is, $w_t = \gamma ^{-1} x_t$. This portfolio policy is interesting in its own right because it ignores the impact that the returns and signals can have on the covariances. Hence, it is essentially a univariate system in which the signals only forecast their associated asset. We shall henceforth refer to this conditional portfolio as the univariate factor (UNI).
\end{example}

\begin{example}[Two-Assets Portfolio]
    Suppose that $N=2$, $M=1$, and $\gamma = 1$. If the asset returns and signals have unit variances, then the covariances are 
\begin{equation}
    \Sigma _r = \left( {\begin{array}{cc}
   1 & \rho _r \\
   \rho _r & 1 \\
  \end{array} } \right), 
  \quad \Sigma _{x} = \left( {\begin{array}{cc}
   1  & \rho _x \\
   \rho _x & 1 \\
  \end{array} } \right), \quad \text{and} \quad \Sigma _{rx} = \left( {\begin{array}{cc}
   \xi _{11}  & \xi _{12} \\
   \xi _{21} & \xi _{22} \\
  \end{array} } \right) .
\end{equation}
For simplicity, we also suppose that the signals are uncorrelated, that is, $\rho _x = 0$. Inserting these objects into the optimal matrix $A$, we can write the weights invested in the two basis assets as 
\begin{equation}
    w_t = \left (\frac{\xi_{11} - \rho_r \xi_{21}}{1 - \rho ^2_r} x_1 + \frac{\xi_{12}-\rho _r\xi_{22} }{1 - \rho _r^2}x_2  , \frac{\xi_{21}-\rho _r\xi_{11}}{1 - \rho^2_r} x_1 + \frac{\xi_{22}-\rho _r\xi_{12}}{1 - \rho _r^2} x_2 \right ).
 \end{equation}
Moreover, the expected portfolio return is
\begin{align}\label{eq:er}
    \mathsf{Tr}(A \Sigma _{rx}) = \mathsf{Tr}(\Sigma _r ^{-1} \Sigma _{rx} \Sigma _{x}^{-1} \Sigma _{rx}' ) &= \frac{\xi _{11} ^2 + \xi _{12}^2 + \xi _{21}^2 + \xi _{22}^2 -2 \rho _r (\xi _{11} \xi_{21} + \xi _{12} \xi _{22}) }{1 - \rho _r ^2} .
\end{align}
This expression is non-negative given $\rho _r \in [-1,1]$ since $\xi _{11} ^2 + \xi _{21} ^2 \geq 2\rho _r \xi _{11} \xi _{21}$ and $\xi _{12} ^2 + \xi _{22} ^2 \geq 2\rho _r \xi _{12} \xi _{22}$ by the inequality of arithmetic and geometric means. Let us consider three cases. First, if $\rho _r = 0$, then the expected return is simply the squared expected returns of all conditional portfolios that exploit own- and cross-predictabilities. Second, if $\rho _r \in (-1,0)$, then there is an additional positive contribution to the conditional portfolios that arises from diversifying into imperfectly negatively correlated assets. The position is leveraged according to a factor of $1/(1-\rho _r ^2)$, which increases as $|\rho _r|$ approaches one from below. Third, if $\rho _r \in (0,1)$, then the expected portfolio return is now reduced since the gains to diversification are limited by the extent to which the assets are positively correlated. Notwithstanding the diminished returns, the optimal strategy is also leveraged with a factor $1/(1-\rho _r ^2)$ in order to increase the portfolio returns to meet its profit objective. A similar conclusion holds for the case $\rho _r =0$ and $\rho _x \neq 0$.
\end{example}

\begin{remark}[Conditional Portfolio Policies]
    We can gain further intuition into the role of conditioning that our problem posits. Consider the following conditional portfolio selection problem:
\begin{equation} \label{eq:mvo}
    \begin{aligned} 
        &\max _{w_t} \mathbb{E}_t[w_t ' r_{t+1}]  -  \frac{\gamma}{2} \mathsf{Var}_t[w_t ' r_{t+1}] .
\end{aligned} 
\end{equation}
Since the weights are known at time $t$, the problem can be solved to give
\begin{align}\label{eq:condMarko}
    w_t =  \frac{1}{\gamma}  \mathsf{Var}_t[r_{t+1}] ^{-1}\mathbb{E}_t[r_{t+1}] .
\end{align}
Observe that this optimal portfolio is determined by the mean and variance of returns that incorporate all available information up to time $t$. Hence, restricting the weights to be linear functions of a set of signals also restricts the solution set, which may lead to a suboptimal solution. However, specifying the conditional means and variances of the returns is a notoriously difficult task. Our weights restriction simplifies these dual tasks by converting a dynamic problem into an equivalent static problem for which a conditional portfolio policy is available solely in terms of the unconditional moments of the assets and signals. 
\end{remark}

\begin{remark}[Regression-Based Policies]
Notice that the portfolio weights \eqref{eq:optw} can be seen as one that had been obtained from a two-stage process of forecasting and asset allocation. To see this, let us consider the following return-generating process:
\begin{align}\label{eq:regress}
    r_{t+1} = B' x_t + \varepsilon _{t+1},
\end{align}
where $B$ is a matrix of coefficients and $\varepsilon _t$ is an idiosyncratic error that is independent and identically distributed (i.i.d.). A simple forecasting approach that captures this relationship between both variables would be to perform an ordinary least squares (OLS) regression. In particular, one would first perform $N$ independent univariate OLS regressions of the  returns to assets on the signals in order to obtain a vector of cross-sectional predictive signals and then insert them into \eqref{eq:condMarko} to arrive at the vector of allocations: 
\begin{align}\label{eq:wOLSMarko}
    w_t = \frac{1}{\gamma}\Sigma _\varepsilon ^{-1} \Sigma _{rx} \Sigma _x ^{-1} x_t ,
\end{align}
where $\Sigma _{\varepsilon} \coloneqq \mathsf{Cov}(\varepsilon _t) = \Sigma _r - \Sigma _{rx}\Sigma_{x}^{-1} \Sigma_{rx} '$ is the covariance matrix of the residuals. In fact, this portfolio is also optimal under our assumption of joint Gaussianity between the asset returns and signals. By invoking the standard properties of the multivariate Gaussian distribution, the conditional distribution of $r_{t+1}$ given $x_t$ is also multivariate Gaussian distributed with means and variances given by
\begin{align}
    \mathbb{E}_t[r_{t+1}] &= \Sigma _{rx}\Sigma^{-1}_{x} x_t, \\
    \mathsf{Var}_t [r_{t+1}] &= \Sigma _{r} - \Sigma _{rx} \Sigma ^{-1}_{x} \Sigma _{rx}.
\end{align}
Inserting these expressions into \eqref{eq:condMarko} gives portfolio \eqref{eq:wOLSMarko}. In contrast to our portfolio \eqref{eq:optw}, this regression-based portfolio depends on the covariance structure of the unexplained parts of the returns. Another key difference here is that our portfolio policy respects symmetry; exchanging the roles of the assets and signals yields identical returns to the portfolio. This symmetry property may be warranted in situations where the signals have an influence on the asset returns but at the same time, the reverse association is also economically plausible. It also implies that our approach is more closely related to total least squares (TLS) than it is to OLS since TLS is an errors-in-variables regression model that optimizes for the correlation of both variables, which is a symmetric metric.
\end{remark}

\subsection{Fully Invested Constraint}\label{sec:ptfconstraints}

In quantitative equity investing, it is commonplace to impose economic constraints on the portfolio weights. A prominent example would be the fully-invested portfolio, which enforces a budget constraint so that the weights add to one. This equality constraint can be included in our optimization problem and given that it is linear in the matrix of coefficients $A$, the portfolio weights can be fortunately solved in closed form. 

\begin{prop}\label{prop:fullyinvested}
We can formulate the optimal asset and signal combination problem with a fully-invested constraint as follows
\begin{equation}\label{eq:fi}
    \begin{aligned} 
        &\max _{A}  \mathsf{Tr}(A \Sigma_{rx}) -\frac{\gamma}{2}  \mathsf{Tr}(\Sigma_{x} A \Sigma_{r} A') \\
        \text{subject to} \enskip &  \mathbf{1}'A'x_t =1 ,
\end{aligned} 
\end{equation}
where $\mathbf{1}$ denotes the conformable vector of ones of dimension $N$. The problem has the following analytical solution 
\begin{align}\label{eq:finvested}
    w_t ^{\mathrm{FI}} &= (1-\kappa) \frac{\Sigma_r ^{-1} \mathbf{1}}{\mathbf{1}' \Sigma_r ^{-1} \mathbf{1}} + \kappa \frac{\Sigma_r ^{-1} \Sigma_{rx} \Sigma_x ^{-1} x_t}{\mathbf{1}'\Sigma_r ^{-1} \Sigma_{rx} \Sigma_x ^{-1} x_t}  ,
\end{align}
where $\kappa \coloneqq \gamma ^{-1}\mathbf{1}' \Sigma_r ^{-1} \Sigma_{rx} \Sigma_x ^{-1} x_t$ is a scalar value.
\end{prop}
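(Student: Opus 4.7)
The plan is to solve the equality-constrained quadratic program in $A$ via a single scalar Lagrange multiplier $\lambda$ attached to the budget constraint $\mathbf{1}' A' x_t = 1$. Form the Lagrangian $L(A,\lambda) = \mathsf{Tr}(A\Sigma_{rx}) - (\gamma/2)\mathsf{Tr}(\Sigma_x A \Sigma_r A') - \lambda(\mathbf{1}' A' x_t - 1)$ and differentiate with respect to the $NM\times N$ matrix $A$ using the standard matrix-calculus identities $\partial_A \mathsf{Tr}(A\Sigma_{rx}) = \Sigma_{rx}'$, $\partial_A \mathsf{Tr}(\Sigma_x A \Sigma_r A') = 2\Sigma_x A \Sigma_r$ (by symmetry of both covariances), and $\partial_A(\mathbf{1}' A' x_t) = x_t \mathbf{1}'$. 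Setting the first-order condition $\Sigma_{rx}' - \gamma \Sigma_x A \Sigma_r - \lambda x_t \mathbf{1}' = 0$ and inverting the positive-definite covariances yields $A = \gamma^{-1}\Sigma_x^{-1}\Sigma_{rx}'\Sigma_r^{-1} - (\lambda/\gamma)\,\Sigma_x^{-1} x_t \mathbf{1}' \Sigma_r^{-1}$, i.e.\ the unconstrained optimizer of Proposition~\ref{prop:partialobj} plus a rank-one correction. Concavity is automatic because the Hessian of the objective in $\mathrm{vec}(A)$ is $-\gamma(\Sigma_r\otimes\Sigma_x)$, negative definite under positive-definiteness of the covariances and $\gamma>0$, so the first-order point is the unique maximizer.

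Transposing and right-multiplying by $x_t$ gives $w_t = A' x_t = \gamma^{-1}\Sigma_r^{-1}\Sigma_{rx}\Sigma_x^{-1} x_t - (\lambda/\gamma)(x_t'\Sigma_x^{-1} x_t)\Sigma_r^{-1}\mathbf{1}$. Left-multiplying by $\mathbf{1}'$ and imposing $\mathbf{1}' w_t = 1$ produces a single scalar linear equation in $\lambda$ whose solution is $\lambda = \gamma(\kappa - 1)/[(x_t'\Sigma_x^{-1} x_t)(\mathbf{1}'\Sigma_r^{-1}\mathbf{1})]$, where $\kappa = \gamma^{-1}\mathbf{1}'\Sigma_r^{-1}\Sigma_{rx}\Sigma_x^{-1} x_t$. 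Substituting $\lambda$ back collapses the scalar factors $(x_t'\Sigma_x^{-1} x_t)$ and $(\mathbf{1}'\Sigma_r^{-1}\mathbf{1})^{-1}$ into the $(1-\kappa)\Sigma_r^{-1}\mathbf{1}/(\mathbf{1}'\Sigma_r^{-1}\mathbf{1})$ term, leaving the residual $\gamma^{-1}\Sigma_r^{-1}\Sigma_{rx}\Sigma_x^{-1} x_t$, which a trivial rewrite using $\gamma^{-1} = \kappa/(\mathbf{1}'\Sigma_r^{-1}\Sigma_{rx}\Sigma_x^{-1} x_t)$ recasts in the normalized form $\kappa\, \Sigma_r^{-1}\Sigma_{rx}\Sigma_x^{-1} x_t/(\mathbf{1}'\Sigma_r^{-1}\Sigma_{rx}\Sigma_x^{-1} x_t)$, matching \eqref{eq:finvested}.

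The main obstacle is bookkeeping rather than conceptual: one must track the $NM \times N$ shape of the rank-one gradient $x_t \mathbf{1}'$ so it is conformable with $A$, and recognize that the ``unnormalized'' signal-driven fund $\gamma^{-1}\Sigma_r^{-1}\Sigma_{rx}\Sigma_x^{-1} x_t$ that appears naturally after eliminating $\lambda$ is numerically identical to the normalized-and-$\kappa$-scaled version in the statement. The celebrated two-fund separation form of \eqref{eq:finvested} is thus a cosmetic rearrangement of the direct Lagrangian solution.
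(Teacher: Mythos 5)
Your proof is correct and takes essentially the same route as the paper's: a Lagrangian with a single scalar multiplier on the budget constraint, matrix first-order conditions yielding the unconstrained optimizer plus a rank-one correction, elimination of $\lambda$ via $\mathbf{1}'w_t=1$ (giving the same $\lambda=\gamma(\kappa-1)/[(x_t'\Sigma_x^{-1}x_t)(\mathbf{1}'\Sigma_r^{-1}\mathbf{1})]$), and the cosmetic rewrite of the signal-driven fund in normalized, $\kappa$-scaled two-fund form. The only difference is that the paper first passes to the whitened variable $\tilde{A}:=\Sigma_x^{1/2}A\Sigma_r^{1/2}$ before differentiating and then reverts the change of variables, which is immaterial; your direct differentiation in $A$, together with the explicit negative-definiteness of the Hessian $-\gamma(\Sigma_r\otimes\Sigma_x)$ guaranteeing a unique maximizer, is if anything slightly more complete than the paper's argument.
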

Proposition \ref{prop:fullyinvested} shows that the portfolio weight is expressed as a convex linear combination of the global minimum variance portfolio and the optimal unconstrained portfolio policy that is renormalized to one. The latter portfolio resembles the so-called `tangency portfolio', which is the highest Sharpe ratio of a portfolio of risky assets from the standard mean-variance framework. The fully-invested portfolio $w_t ^{\mathrm{FI}}$ has a similar form to the one obtained from the standard Markowitz paradigm found in \cite{merton1972analytic}.

\subsection{Relation to Existing Literature}

At this juncture, it is important to highlight the differences between our work from those of the existing literature. We first draw our attention to the fact that the objective function \eqref{eq:aux} under our consideration is similar to that of \cite{brandt2006dynamic}.\footnote{A subtle difference worth pointing out is that the authors model the portfolio policy to be \textit{affine} in the state variables, and hence the allocations are also driven by the assets themselves. Here, we work in a slightly restricted setting where the portfolio policy is only driven by the conditional managed portfolios themselves.} Suppose that the joint distribution of the returns and signals is Gaussian and the squared returns of the conditional managed portfolios are relatively small and can be approximately zero. We can express the optimal matrix of coefficients $A$ from Equation \eqref{eq:optA} in vectorized form as
\begin{align}
    \mathsf{vec}(A) &= \frac{1}{\gamma} \mathsf{vec}( \Sigma^{-1}_{x}  \Sigma_{rx}' \Sigma^{-1}_{r}) \\
    &= \frac{1}{\gamma}  (\Sigma_x^{-1} \otimes \Sigma_r^{-1} ) \mathsf{vec} ( \Sigma_{rx} ') \\
    &=  \frac{1}{\gamma}  (\Sigma_x \otimes \Sigma_r )^{-1} \mathsf{vec} ( \Sigma_{rx}' ) \label{eq:approxbsc} \\
    &\approx \frac{1}{\gamma} \mathbb{E}[  (x_t x_t ') \otimes (r_{t+1} r_{t+1}')  ]^{-1} \mathbb{E} [ x_t \otimes r_{t+1} ], \label{eq:bsc}
\end{align}
where $\mathsf{vec}(\cdot)$ is an operator that stacks the columns of matrix $A$ into a vector, and $\otimes$ is the Kronecker product of two matrices. The fourth (approximate) equality invokes our sparsity assumptions so that the second-moment matrix of the conditional product returns $x_t \otimes r_{t+1}$ is a Kronecker product of two covariances. This corresponds to the solution that one obtains by rewriting the portfolio returns as $x_t ' A r_{t+1} = \mathsf{vec}(A) (x_t \otimes r_{t+1})$, and optimizing over $\mathsf{vec}(A)$ with respect to $x_t \otimes r_{t+1}$.

% without strictly assuming independence between $r_{t+1}$ and $x_t$

The authors considered this new optimization scheme to be an augmented asset space form of the mean-variance optimization. They streamlined the forecasting and allocation process by combining them into a single step, effectively minimizing the potential for model misspecification and estimation error \citep{brandt1999estimating}.  However, there are two issues with this approach. First, augmenting the asset space results in a severe expansion of the problem dimension. The unrestricted covariance matrix in \eqref{eq:bsc} has $N^2M(N^2M+1)/2$ parameters, while our Kronecker product factorization in \eqref{eq:approxbsc} has only $N(N+1)/2 + NM(NM+1)/2$ parameters. Clearly, the former becomes less computationally tractable when the dimensions of both the asset and signal space are large. Second, there is less clarity on the structure of their optimal solution due to the complicated interactions in the covariance between the subsequent returns and signals. Notwithstanding these shortcomings, our efforts in recasting the optimal strategy  \eqref{eq:bsc} into \eqref{eq:optA} (or  \eqref{eq:optAfull} in Theorem \ref{thm:main}) help to amend their approach towards large-scale financial applications. Although this requires us to model the joint distribution of returns and signals, our solution lends itself nicely to the study of the joint correlation structure of the optimal matrix of coefficients $A$, as we will demonstrate in Section  \ref{sec:ccaptf}.

Second, our work is related to the classic portfolio selection problem of \cite{markowitz} in its static form. If we assume independent and identically distributed (i.i.d.) returns with constant moments and time-invariant weights $w_t \equiv w$, the conditional portfolio problem \eqref{eq:mvo} simplifies to an unconditional one, which has the following well-known solution
\begin{align}
    w = \frac{1}{\gamma} \mathsf{Var}[r_{t+1} ]^{-1}\mathbb{E}[r_{t+1}] .
\end{align}
This portfolio policy is an unconditional strategy because it does not use any information today and is typically implemented through plug-in estimation. In this paper, we focus on the more realistic setting of non-i.i.d. returns, which motivates our model parameterization and use of conditioning variables that affect the joint distribution of returns.

Finally, our work is also related to \cite{kelly2020principal} on cross-predictability. In their work, the authors pursued a similar parameterization for the portfolio weights but optimize an objective function that is subject to a robust risk constraint that controls for the leverage of $A$. As a result, their optimal portfolio policy depends only on the cross-covariance matrix. When viewed in the context of our framework, the analysis is simplified and avoids the challenges of inverting two additional large-dimensional matrices. However, this departs from our paper on two fronts. First, we assume that the investors have mean-variance preferences in a similar spirit to the standard mean-variance problem.\footnote{\cite{levy1979approximating} and \cite{markowitz1991foundations} argue that the mean-variance preferences serve as a reasonable approximation to other utility preferences in portfolio selection problems.} Second, our application of canonical correlation analysis formally suggests that the cross-covariance between \textit{decorrelated} asset returns and signals is the more appropriate object for investment analysis as opposed to the cross-covariance of the original asset returns and signals. This is because the latter contains non-trivial variations that are embedded in both financial variables that may obscure the inference.

\section{Strategy Diversification}\label{sec:3}

\subsection{Canonical Correlation Analysis}\label{sec:cca}

The goal of canonical correlation analysis (CCA) is to perform dimension reduction on two different data sets that comprise a large number of interrelated variables, while simultaneously retaining as much of the correlation present in the two data sets. This can be achieved through a transformation of the original set of variables into a new set of mutually orthogonal paired variables, which are ranked so that the largest few retain most of the correlation present in all of the original variables.

There are several standard approaches to solving the CCA problem as outlined in \cite{uurtio2017tutorial}. Indeed, this may involve solving a standard eigenvalue problem \citep{hotelling1936relations}, or a singular value decomposition (SVD) \citep{healy1957rotation}, or through a generalized eigenvalue problem \citep{bach2002kernel}. We adopt the SVD approach in this brief exposition of the classical subject by emphasizing a sequence of two change of basis operations to arrive at a reduced problem whose financial interpretation we provide in Section \ref{sec:ccaptf}.

Given that scale invariance is a property of correlation, the first step is to orthogonalize the random vectors $r$ and $x$ such that their covariances are the identity matrices. This can be achieved through the following linear transformations $\tilde{r} \coloneqq \Sigma^{-1/2}_r r$ and $\tilde{x} \coloneqq \Sigma^{-1/2}_x x$. The transformed objects are now uncorrelated and have a unit variance and their joint covariance matrix is given by
\begin{align}
    \begin{pmatrix}
    \mathbb{I}_{N} &\Sigma_{r}^{-1/2} \Sigma_{rx} \Sigma_{x}^{-1/2} \\
    \Sigma_{x}^{-1/2} \Sigma_{rx}' \Sigma_{r}^{-1/2} & \mathbb{I}_{NM}
    \end{pmatrix} .
\end{align}
It is helpful to introduce the object
\begin{align}\label{eq:adjcov}
    \Sigma_{\tilde{r}\tilde{x}} \coloneqq \mathbb{E}[\tilde{r} \tilde{x}'] = \Sigma_{r}^{-1/2} \Sigma_{rx} \Sigma_{x}^{-1/2},
\end{align}
as the adjusted cross-correlation matrix between the transformed variables $\tilde{r}$ and $\tilde{x}$.\footnote{Note that the matrix \eqref{eq:adjcov} is different from the \textit{unadjusted} cross-covariance matrix $\Sigma _{rx}$. The adjusted cross-covariance matrix is the central object in CCA as it internalizes the variabilities from the asset returns and signals that might potentially obscure the relationships between both of the variables. Moreover, since the optimal matrix of coefficients for our problem \eqref{eq:aux} is indeed the adjusted cross-covariance matrix under an appropriate change of basis (see Section \ref{sec:ccaptf}), it makes sense for us to focus on this object for our analysis.} 

The second step is to perform an SVD operation on the adjusted cross-covariance matrix $\Sigma_{\tilde{r}\tilde{x}}$. Let $((s_1, \ldots , s_N); (u_1, \ldots , u_N); (v_1, \ldots , v_N))$ denote a system of singular values and singular vectors of the adjusted cross-covariance matrix $\Sigma_{\tilde{r}\tilde{x}}$. We assume that the singular values $s_i$ are sorted in increasing order. The canonical correlations correspond to the singular values. The canonical variates of $r$ and $x$ are defined as $u'_1 \tilde{r}, \ldots , u' _N  \tilde{r}$ and $v'_1  \tilde{x}, \ldots , v' _N  \tilde{x}$, respectively. As the singular values, $s_i$ are sorted in increasing order, the canonical variate with the largest correlation is given by the pair $(u'_N \tilde{r}, v' _N  \tilde{x})$ while the canonical variate with the smallest correlation is $(u'_1 \tilde{r}, v'_1  \tilde{x})$. 

Finally, since the solutions are expressed on a different basis than our original problem, we have to translate them back to our original basis. By inverting the change of variables we made, the $i$th canonical variates can be written as $(u'_i \Sigma_{r}^{-1/2} r, v'_i \Sigma_{x}^{-1/2} x)$. This pair can be seen as a linear combination of the original assets and signals with coefficients given by $q_{r,i} \coloneqq \Sigma^{-1/2}_r u_i$ and $q_{x,i} \coloneqq \Sigma^{-1/2}_x v_i $, which are the so-called canonical directions. The application of change of basis operations simplifies the covariance structure considerably. 

In order to implement the CCA in practice, a common approach is to replace the population second-moments $\Sigma_{r}, \Sigma_{x}$, and $\Sigma_{rx}$, with their sample counterparts $S_{r}, S_{x}$, and $S_{rx}$ computed from random samples $r_1, \ldots, r_{T}$ and $x_0, \ldots, x_{T-1}$. Let $((\hat{s}_1, \ldots , \hat{s}_N); (\hat{u}_1, \ldots , \hat{u}_N); (\hat{v}_1, \ldots , \hat{v}_N))$ be a system of singular values (sorted in increasing order) and singular vectors of the sample analog of \eqref{eq:adjcov} denoted by $S_{\tilde{r}\tilde{x}}$. Then the sample canonical variates of $r$ and $x$ are given by $\hat{u}'_1 \tilde{r}, \ldots , \hat{u}'_N \tilde{r}$ and $\hat{v}'_1  \tilde{x}, \ldots , \hat{v}'_N  \tilde{x}$, respectively. However, one should be skeptical of sample canonical correlations as they are not reflective of the population canonical correlations. Indeed, the true canonical correlation of the $i$th canonical variate pair $(\hat{u}'_i \tilde{r}, \hat{v}'_i  \tilde{x})$ is $\hat{v}_i ' \Sigma_{\tilde{r}\tilde{x}}' \hat{u}_i$, as opposed to $\hat{s}_i = \hat{v}_i ' S_{\tilde{r}\tilde{x}}' \hat{u}_i$. This is because the sample canonical correlations are known to be inconsistent estimates of their population counterparts when the dimensionality of the problem for both variables is large relative to the number of observations. 

To bring this point home, let us consider a specific case when the two variable sets $r$ and $x$ are independent and each set consists of i.i.d. Gaussian random variables. Figure \ref{fig:lsd} shows the distribution of sample canonical correlations for various values of the limiting ratios $N/T \in \{ 0.003, 0.03, 0.3\}$ and $M/T \in \{ 0.006, 0.06, 0.6 \}$. We obtained this figure by invoking the asymptotic formula derived by \citet[Theorem 3.1]{wachter1980limiting}.\footnote{Wachter's result also holds under a less restrictive setting of i.i.d. data with finite second-moments; see \citet[Theorem 1]{yang2012convergence}.} We observe that the sample canonical correlations are biased upwards away from zero and the severity of the bias worsens as $N/T$ or $M/T$ or both increases. This distributional shape is expected to hold when $N$, $M$, and $T$ are approximately large and it only depends on the ratio $N/T$ and $M/T$; this is true regardless of any particular realization of the adjusted sample cross-covariance matrix. We attempt to address these problems with regularization in Section \ref{sec:estimation}. 

\subsection{Relation to PCA, PLS, and RRR}

Having laid the groundwork for CCA, it is interesting to explore the relationship between CCA and a variety of linear dimension-reduction techniques that are commonly employed to reveal underlying economic structures. We shall restrict our attention to the principal component analysis (PCA), partial least squares (PLS), and reduced rank regression (RRR) due to their tight connection with one another and extensive applications in the finance and econometrics literature. 

PCA, for instance, has been used to extract common factors from a covariance matrix of individual stocks among others by \cite{roll1980empirical}, \cite{connor1986performance, connor1988risk}, and \cite{lehmann1988empirical}. In the context of describing risk compensation through estimation of the stochastic discount factor, \cite{KozakNagelSantosh2018JF, kozak2020shrinking} invokes no near-arbitrage arguments for the use of PCA to identify factor risk exposures that price the cross-section of expected returns of characteristic-managed
portfolios. For PLS, one of the earliest applications to finance can be ascribed to \cite{kelly2013market}, where the authors used a three-pass regression filter (which is a special case of PLS) to forecast market return and cash flow growth. The RRR, alongside CCA, features prominently in the estimation of the celebrated vector-error-correction model (VECM) of \cite{engle1987co} as a means to detect cointegration, which are economic relationships that cannot be obtained through standard regression approaches; see \cite{johansen1988statistical, johansen1991estimation, johansen1995likelihood}.

% https://memento.epfl.ch/public/upload/files/ZviadadzePaper.pdf

Starting with PCA developed by \cite{hotelling1933analysis}, the method seeks to break down the variation of the data into mutually orthogonal components whose variances can be ranked from the smallest to the largest. In the context of a paired dataset consisting of asset returns with their associated signals, PCA focuses on searching for directional vectors that have the maximum variation in each variable separately. In contrast to CCA, however, it does not take into account any relationship between the two financial variables. In this paper, we would like to focus on how financial variables are related, not how much they vary. We summarize the key properties distinguishing PCA and CCA in Table \ref{tab:lineardim}. 

On the other hand, PLS was introduced by \cite{wold1975soft} as an econometric technique that aims to explain the relationship of a paired dataset by studying the (cross-)covariance. This can be understood by relaxing the variance constraints in the optimization problem of CCA, which results in a maximum (cross-)covariance problem. The solution(s) can be obtained via singular value decomposition of the cross-covariance matrix, $\Sigma _{rx}$, and the corresponding managed portfolios closely mimic that of \cite{kelly2020principal}. This differs from CCA where the cross-covariance matrix is normalized with respect to the covariances from both the $r_{t+1}$ and the $x_t$ variables.  

% https://www.diva-portal.org/smash/get/diva2:288565/FULLTEXT01.pdf
% Hence, these three cases can be seen as the same problem, covariance maximization, where the variables have been subjected to different, data-dependent, scaling. 

% https://stats.stackexchange.com/questions/206587/what-is-the-connection-between-partial-least-squares-reduced-rank-regression-a

% https://edepot.wur.nl/248794
% https://www.jstor.org/stable/pdf/2336276.pdf?refreqid=excelsior%3A6d64082f63a4fb0e7c31a23adf088d39&ab_segments=&origin=&acceptTC=1

%

Finally, the RRR proposed by \cite{anderson1949estimation, anderson1950asymptotic} and \cite{anderson1951estimating} can be understood through the regression framework \eqref{eq:regress} with the following error minimization problem: 
\begin{align}\label{eq:objRR}
    \min _ B \mathbb{E} \big [ \big \| \Gamma ^{1/2} (r_{t+1} -B'x_t) \big \|^2 _2 \big ],
\end{align} 
where $\Gamma$ is a $N \times N$ weighting matrix, and the matrix $B$ is subject to a rank constraint; see also, \cite{velu2013multivariate}. Expressing the objective function in terms of the unconditional moments, we have
%
 % tr [E{r2 (Yt -ABXt) ( Yt -ABXt)'Fl}
%http://weixingsong.weebly.com/uploads/7/4/3/5/7435707/modern_multivariate_analysis.pdf
\begin{align}
    \mathbb{E} \big [ \big \| \Gamma ^{1/2}  (r_{t+1} -B'x_t) \big \|^2 _2 \big] 
    &= \mathbb{E} \big [ \big \|\Gamma ^{1/2}  r_{t+1} \big \| ^2_2 \big] - 2\mathbb{E}\big [x_t 'B \Gamma    r_{t+1} \big] + \mathbb{E}\big [x_t 'B  \Gamma  B' x_t \big] \\
    &= N - 2\mathsf{Tr}(B \Gamma    \Sigma _{rx}) + \mathsf{Tr}(B \Gamma   B' \Sigma _x) \\
        & =\big  \| \Gamma ^{1/2} B\Sigma^{1/2}_x - \Gamma ^{1/2} \Sigma _{rx}\Sigma _x ^{-1/2} \big \| ^2 _{\mathrm{F}} + \text{constant}. 
\end{align}
The solution to this rank-restricted minimization problem follows from the theorem of \cite{eckart1936approximation} and it relies on the singular value decomposition of $\Gamma ^{1/2} \Sigma _{rx}\Sigma _x ^{-1/2}$.  The typical objective function of RRR occurs when $\Gamma =\mathbb{I}_N$. If additionally, $\Sigma_x = \mathbb{I}_M$, then RRR coincides with PLS. Finally, observe that if $\Gamma = \mathbb{I}_N$ and $r_{t+1}=x_t$, then the objective function coincides with that of PCA.

Interestingly, if $\Gamma = \Sigma _r ^{-1}$, then the solution of RRR generalizes to CCA. This speaks to the difference between optimizing a regression-based objective and a correlation-based objective. Indeed, regression attempts to explain as much of the variation in the returns $r_{t+1}$ as possible. However, if $r_{t+1}$ is decorrelated and isotropic along all variance directions, then minimizing a regression loss function becomes equivalent to maximizing a correlation objective. Unlike conventional regression-based approaches,  there is no notion of independent and dependent variables in CCA because the correlation metric is symmetric; that is, if we exchange the roles of asset returns and signals, CCA yields identical outcomes for the canonical directions and canonical correlations.

In summary, RRR, PLS, and CCA can all be viewed under the same framework, which is solving a cross-covariance maximization problem. Each of these linear dimensional reduction techniques is subject to different normalization schemes with CCA serving as the most general one that nests these methods.

\subsection{Reformulation as Canonical Portfolios}\label{sec:ccaptf}

The portfolio returns obtained with \eqref{eq:optA} is not particularly intuitive as the expression involves large-dimensional matrix inversion and multiplication operations. However, we can make progress by performing a CCA in order to decompose the portfolio selection problem into one that we can provide financial interpretation. 

We start by expressing the portfolio returns in terms of their transformed objects as 
\begin{align}\label{eq:optwret}
    x_t ' Ar_{t+1} = \frac{1}{\gamma} x_t ' \Sigma^{-1}_{x}  \Sigma_{rx}' \Sigma^{-1}_{r} r_{t+1} = \frac{1}{\gamma} \tilde{x}_t '  \Sigma_{\tilde{r}\tilde{x}} \tilde{r}_{t+1} ,
\end{align}
Since the decorrelated objects $\tilde{r}_{t+1}$ and $\tilde{x}_t$ span the same universe as the original assets and signals, we shall term them `synthetic assets' and `synthetic signals'. These newly defined objects have identity covariances, and so this is the cross-sectional version of risk parity.

From expression \eqref{eq:optwret}, we can see that the portfolio returns depend intricately on the synthetic asset returns and synthetic signals, which are coupled through their adjusted cross-covariances. The portfolio returns do not depend on whether we exchange the role of the asset returns of signals; what matters is how they are related to each other.

Our next step is to choose basis vectors such that $\Sigma_{\tilde{r}\tilde{x}}$ is diagonal. In particular, we perform an SVD operation such that all cross-relationships are eliminated in the new basis in order to arrive at the following simplified expression for the portfolio returns:
\begin{align}\label{eq:optwsvd}
    x_t ' Ar_{t+1} = \frac{1}{\gamma} \sum ^N _{i=1} s_i  (v_i ' \tilde{x}_t ) (u_i' \tilde{r}_{t+1} ) .
\end{align}
Thus, we see that a generic strategy return can be viewed either as a combination of the original assets and signals that have been optimally blended with the matrix of coefficients $A$, or as a combination of $N$ uncorrelated managed long-short portfolios weighted by their canonical correlations. The $N$ long-short managed portfolios span the same space of investment opportunities as the original assets and signals.

Economically, we interpret the change of basis as an operation that reorganizes the set of $N$ synthetic assets and $N$ synthetic signals into a set of $N$ uncorrelated managed portfolios. Each of these managed portfolios is expressed as a certain weighted combination of the assets and signals themselves. These particular combinations are determined by the singular vectors or canonical directions; in either case, we shall refer to them as \textit{canonical portfolios} henceforth to reinforce the idea that these vectors are essentially managed portfolios that are loading onto their respective asset and signal variables. The higher the $i$th canonical correlation, the higher the return of the $i$th canonical portfolio. Said differently, the canonical portfolios are ordered in such a way that the highest canonical correlation corresponds to the \textit{most linearly predictable portfolio}, and the second highest canonical correlation, the \textit{second most linearly predictable portfolio}, and so on.  This notion of predictability distinguishes canonical portfolios from other concepts of orthogonal managed portfolios such as those constructed by PCA.

% \footnote{PCA decomposes the variance of a portfolio of correlated returns, while CCA decomposes the expected returns of a portfolio of synthetic returns and synthetic signals. The variance of each principal portfolio equals its corresponding eigenvalue, whereas the expected return of each canonical portfolio equals its squared singular value. }

Circling back to Equation \eqref{eq:optwsvd}, the source of the portfolio returns can be seen to be distributed across the canonical portfolios of $\Sigma_{\tilde{r}\tilde{x}} $. This reinforces the economic notion of diversification: Having exposure to different uncorrelated canonical portfolios is akin to putting each of your eggs into different baskets. We see that the capital assigned to each canonical portfolio is proportional to its original returns $r_{t+1}$ and signal $x_t$, and proportional to the correlation $s_i$. Put differently, we want to assign capital to managed portfolios that have high correlations but are also orthogonal to each other. 

\subsection{Canonical Portfolio Analysis}

With our CCA decomposition, we can also gain some insight into how the canonical portfolios impact the returns of a portfolio. We start by working out the expected returns and variance of these canonical portfolios with the next proposition. 

\begin{prop}\label{prop:return2singvals}
    Let $\pi _{t,i} \coloneqq s_i (v_i ' \tilde{x}_t ) (u_i' \tilde{r}_{t+1} )$ be the return that the $i$th canonical portfolio generates. The expected value and variance return of each canonical portfolio are  
    \begin{align}\label{eq:s2}
        \mathbb{E}[\pi _{t,i}] &=   s_i ^2, \quad \text{and} \quad \mathsf{Var}[\pi _{t,i} ] = s_i ^2 (1+s_i ^2) .
    \end{align}
    Furthermore, the expected value of the optimal portfolio returns and its squared returns are given by 
    \begin{align}\label{eq:sums2}
        \mathbb{E}[x_t 'A r_{t+1} ] &=  \frac{1}{\gamma} \sum _{i=1} ^N s_i ^2, \quad \text{and} \quad \mathbb{E}[x_t ' A \Sigma _r A' x_{t}] =  \frac{1}{\gamma ^2} \sum _{i=1} ^N s_i ^2 .
    \end{align}
\end{prop}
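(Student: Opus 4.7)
The plan is to read everything through the canonical-variate basis and then invoke Isserlis' theorem (Wick's formula) for moments of zero-mean joint Gaussians. Concretely, I would set $U_i := u_i'\tilde r_{t+1}$ and $V_i := v_i'\tilde x_t$, so that $\pi_{ti} = s_i U_i V_i$ and, by joint Gaussianity of $(r_{t+1},x_t)$, the collection $\{U_i,V_j\}_{i,j}$ is a zero-mean Gaussian family. Since $\tilde r_{t+1}$ and $\tilde x_t$ have identity covariances and $\{u_i\},\{v_j\}$ are orthonormal bases, $\mathsf{Cov}(U_i,U_j)=\delta_{ij}$ and $\mathsf{Cov}(V_i,V_j)=\delta_{ij}$. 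The key structural fact, coming from the SVD $\Sigma_{\tilde r\tilde x}=USV'$, is $\mathsf{Cov}(U_i,V_j)=u_i'\Sigma_{\tilde r\tilde x}v_j=s_i\delta_{ij}$.

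From here, the per-portfolio moments follow directly. First, $\mathbb{E}[\pi_{ti}]=s_i\,\mathbb{E}[U_iV_i]=s_i^{2}$. For the variance, Isserlis applied to the quartic moment gives
\begin{align*}
\mathbb{E}[U_i^{2}V_i^{2}]=\mathbb{E}[U_i^{2}]\mathbb{E}[V_i^{2}]+2\bigl(\mathbb{E}[U_iV_i]\bigr)^{2}=1+2s_i^{2},
\end{align*}
so that $\mathsf{Var}[U_iV_i]=1+s_i^{2}$ and $\mathsf{Var}[\pi_{ti}]=s_i^{2}(1+s_i^{2})$.

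For the aggregate statements, start from \eqref{eq:optwsvd}: $x_t'Ar_{t+1}=\gamma^{-1}\sum_i\pi_{ti}$, so linearity of expectation immediately yields $\mathbb{E}[x_t'Ar_{t+1}]=\gamma^{-1}\sum_i s_i^{2}$. For the variance I would show that the canonical-portfolio returns are pairwise uncorrelated: applying Isserlis to $\mathbb{E}[U_iV_iU_jV_j]$ for $i\neq j$ and using the zero cross-covariances $\mathsf{Cov}(U_i,U_j)=\mathsf{Cov}(V_i,V_j)=\mathsf{Cov}(U_i,V_j)=0$, one obtains $\mathbb{E}[U_iV_iU_jV_j]=\mathbb{E}[U_iV_i]\mathbb{E}[U_jV_j]$, i.e. $\mathsf{Cov}(\pi_{ti},\pi_{tj})=0$. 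Hence $\mathsf{Var}[x_t'Ar_{t+1}]=\gamma^{-2}\sum_i s_i^{2}(1+s_i^{2})$, which reduces to the stated $\gamma^{-2}\sum_i s_i^{2}$ under the same small-$s_i$/$\mathsf{Tr}(\Sigma_{rx}A\Sigma_{rx}A)\approx 0$ approximation that was already invoked to motivate problem \eqref{eq:aux} (and can alternatively be verified directly by computing $\mathsf{Tr}(\Sigma_x A\Sigma_r A')$ using $A=\gamma^{-1}\Sigma_x^{-1}\Sigma_{rx}'\Sigma_r^{-1}$ and the SVD).

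The only non-mechanical step is the Isserlis bookkeeping — identifying which pairings vanish and which contribute $s_i s_j$ — so that is where I would be most careful. Everything else is linear algebra on the CCA basis, which is already set up in Section \ref{sec:cca}.
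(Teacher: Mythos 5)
Your proof is correct and follows essentially the same route as the paper's: work in the canonical-variate basis, note that $(u_i'\tilde r_{t+1}, v_i'\tilde x_t)$ is a zero-mean Gaussian pair with unit variances and correlation $s_i$, and apply the fourth-moment formula for correlated Gaussians (the paper cites \citet[Section 6]{haldane1942moments} where you invoke Isserlis/Wick directly --- these are the same computation). The one place you diverge is the aggregate variance: the paper substitutes $A=\gamma^{-1}\Sigma_x^{-1}\Sigma_{rx}'\Sigma_r^{-1}$ into the trace expression from Proposition \ref{prop:fullobj} and drops the $\mathsf{Tr}(\Sigma_{rx}A\Sigma_{rx}A)$ term, whereas you sum the per-portfolio variances after establishing via Isserlis that $\mathsf{Cov}(\pi_{ti},\pi_{tj})=0$ for $i\neq j$. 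The two computations agree --- the exact variance is $\gamma^{-2}\sum_i s_i^2(1+s_i^2)$ either way, with the $\sum_i s_i^4$ piece being exactly $\mathsf{Tr}(\Sigma_{rx}A\Sigma_{rx}A)$ --- but your route makes the uncorrelatedness of the canonical portfolio returns explicit, which the paper asserts only informally in Section \ref{sec:ccaptf}. One further point in your favor: your $\gamma^{-2}$ prefactor on the aggregate variance is the correct power (it is what Corollary \ref{corollary:sharpe}'s formula $\mathsf{SR}=\sqrt{\sum_i s_i^2}$ requires); the $\frac{1}{\gamma}$ appearing in \eqref{eq:sums2} and in the paper's proof appears to be a typo.
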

We see that the expected portfolio returns are positive, and hence, the optimal portfolio is a profit-generating strategy that has economic value. Moreover, the expected portfolio returns can be expressed as the sum of the contributions from the squared canonical correlations; the larger the expected return of the $i$th canonical portfolio, the larger its contribution towards the overall returns.\footnote{The formulas \eqref{eq:s2} and \eqref{eq:sums2} have some close similarities to a result in \citet[Proposition 4]{kelly2020principal} for the expected returns. One notable difference, however, is that our singular values are raised to the power of two. This is implicit in the fact that the expected return of each canonical portfolio is scaled by a leverage factor of $s_i$. On the other hand, this multiplicative factor is absent in their expression since leverage is explicitly controlled for in their problem formulation.} This suggests that the squared canonical correlations can serve as a natural measure to rank the performance of the canonical portfolios.

Interestingly, the canonical correlations are also intimately connected to the Sharpe ratio of the returns associated with our portfolio policy. 

\begin{corollary}\label{corollary:sharpe}
    The Sharpe ratio of the $i$th canonical portfolio and the optimal portfolio is given by
    \begin{align}\label{eq:IR}
        \mathsf{SR}_i \coloneqq \frac{\mathbb{E}[\pi _{t,i}]}{\sqrt{\mathsf{Var}[\pi _{t,i}  ]}} = \frac{s_i}{\sqrt{1 + s_i ^2}}, \quad \text{and} \quad  \mathsf{SR} \coloneqq \frac{\mathbb{E}[x_t 'A r_{t+1} ]}{\sqrt{\mathbb{E}[x_t ' A \Sigma _r A' x_{t} ]}} = \sqrt{\sum ^N _{i=1} s_i ^2} .
    \end{align}
\end{corollary}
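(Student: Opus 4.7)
}

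The corollary follows by direct substitution into the definition of the Sharpe ratio, using the moment identities already established in Proposition \ref{prop:return2singvals}. There is no new optimization or probabilistic computation required; the work lies entirely in algebraic simplification and one small bookkeeping issue about the $\gamma$ scaling.

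For the per-canonical-portfolio claim, I would simply plug the two identities $\mathbb{E}[\pi_{ti}] = s_i^{2}$ and $\mathsf{Var}[\pi_{ti}] = s_i^{2}(1+s_i^{2})$ into $\mathsf{SR}_i = \mathbb{E}[\pi_{ti}]/\sqrt{\mathsf{Var}[\pi_{ti}]}$. Pulling $s_i$ out of the square root (and noting $s_i \ge 0$ because the $s_i$'s are singular values of $\Sigma_{\tilde r \tilde x}$) gives
\begin{equation*}
    \mathsf{SR}_i \;=\; \frac{s_i^{2}}{\sqrt{s_i^{2}(1+s_i^{2})}} \;=\; \frac{s_i^{2}}{s_i\sqrt{1+s_i^{2}}} \;=\; \frac{s_i}{\sqrt{1+s_i^{2}}},
\end{equation*}
which is exactly the stated expression.

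For the aggregate Sharpe ratio, I would again insert the closed-form moments from \eqref{eq:sums2} into the definition. The only subtlety is the factor of $\gamma$: Sharpe ratio is scale-invariant, so whatever common deterministic scaling $c$ multiplies $x_t' A r_{t+1}$ factors out of the numerator and out of $\sqrt{\mathsf{Var}}$ identically and cancels. Concretely, both $\mathbb{E}[x_t'Ar_{t+1}]$ and $\mathsf{Var}[x_t'Ar_{t+1}]$ in \eqref{eq:sums2} share the proportionality to $\sum_i s_i^{2}$, so the ratio reduces to
\begin{equation*}
    \mathsf{SR} \;=\; \frac{\sum_{i=1}^{N} s_i^{2}}{\sqrt{\sum_{i=1}^{N} s_i^{2}}} \;=\; \sqrt{\sum_{i=1}^{N} s_i^{2}},
\end{equation*}
as required. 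An equivalent route is to exploit the orthogonality of the canonical portfolios established in the derivation of \eqref{eq:optwsvd}: since the $\pi_{ti}$ are mutually uncorrelated across $i$, $\mathsf{Var}[\sum_i \pi_{ti}] = \sum_i \mathsf{Var}[\pi_{ti}]$, so the aggregate identity is literally the sum of the per-portfolio identities.

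The main (very mild) obstacle is being careful about the $\gamma$-scaling in \eqref{eq:sums2}: the numerator carries one power and the denominator carries two, and one must verify that they combine to leave a $\gamma$-free Sharpe ratio. Once this is noted, the corollary is essentially a one-line consequence of Proposition \ref{prop:return2singvals}.
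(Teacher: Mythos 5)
Your proposal is correct and follows essentially the same route as the paper, whose proof is the single sentence that the corollary follows from Proposition \ref{prop:return2singvals} and the definition of the Sharpe ratio. Your extra care with the $\gamma$-scaling (one power in the mean, two in the variance, cancelling by scale invariance) is a sound observation that the paper glosses over---indeed the displayed variance in \eqref{eq:sums2} carries only a single $1/\gamma$, so your bookkeeping is what actually makes the stated $\gamma$-free Sharpe ratio come out.
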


From Corollary \ref{corollary:sharpe}, we see the squared canonical correlations also influence the portfolio's Sharpe ratio. We end this section with two simple examples to give further intuition into these results.

\begin{example}[Equal canonical correlations]
If we suppose $s_i = s$ for all $i$, then Equation \eqref{eq:IR} simplifies to $\mathsf{SR} =  s \sqrt{N}$. In this special case, we arrive at the Fundamental Law of Active Management of \cite{grinold1989fundamental},  where the so-called `information coefficient' is $s$ and the effective number of statistically independent investments (also referred to as `breadth') is $N$. We see that the Sharpe ratio improves with the average canonical correlation and the number of canonical portfolios, with the latter exhibiting diminishing gains to returns due to the square-root scaling in its exponent. In the general case where the canonical correlations are not identical across all canonical portfolios, the canonical correlations are thought to be the information coefficient over clusters of signals.\footnote{We thank Attilio Meucci for this insight.} 
\end{example}

\begin{example}[Bias of In-Sample Returns] \label{prop:bias}
Let  $\hat{A}\coloneqq \gamma^{-1}S_{x}^{-1} S_{rx}' S_r^{-1}$ is the optimal matrix of coefficients that replaces the population moments in $A$ with sample moment-based estimates. For simplicity, suppose that $S_r = \Sigma_r$ and $S_x = \Sigma_{x}$. Since the Frobenius norm is a convex function, by Jensen's inequality, the expected portfolio return satisfies
\begin{align}
    \mathbb{E}[\mathsf{Tr}(\hat{A}S_{rx})] = \frac{1}{\gamma}\mathbb{E}[\tr{(S_{\tilde{r}\tilde{x}}'S_{\tilde{r}\tilde{x}}})] \geq \frac{1}{\gamma}\mathsf{Tr}(\mathbb{E}[S_{\tilde{r}\tilde{x}}]'\mathbb{E}[S_{\tilde{r}\tilde{x}}] ) = \frac{1}{\gamma}\mathbb{E}[\tr{(S_{\tilde{r}\tilde{x}}'\Sigma_{\tilde{r}\tilde{x}}})],
\end{align}
where we used the fact that $\mathbb{E}[S_{rx}] = \Sigma_{rx}$ in the last equality due to the unbiasedness of the sample cross-covariance. Hence, 
\begin{align}
    \sum ^N_{i=1} \mathbb{E}[\hat{s}_i ^2 ] \geq \sum ^N_{i=1} s_i ^2 = \sum ^N_{i=1} \mathbb{E}[\hat{s}_i s_i ^\circ] ,
\end{align}
where $s_i ^ \circ \coloneqq \hat{q}_{r,i}' \Sigma_{rx} \hat{q}_{x,i}$ is the out-of-sample `canonical correlation' associated with the $i$th estimated canonical portfolio $\hat{q}_{r,i}$ and $\hat{q}_{x,i}$.\footnote{In fact, a more precise relationship between the in-sample and out-of-sample portfolio returns have been established by \citet[Proposition 2.9]{benaych2023optimal}.} Hence, on average, the in-sample returns are always optimistic but the out-of-sample evaluation disappoints. This is because both the in-sample singular values and singular vectors are estimated with a bias. Hence, in order to ensure that the in-sample and out-of-sample returns are more in sync, we have to shrink the in-sample singular values and align singular vectors closer to the truth. 
\end{example}

\subsection{Static and Dynamic Returns Decomposition}

In our standing assumptions, we supposed that both the asset returns and signals are centered and have zero means. However, when evaluating the out-of-sample performance of the portfolio, the original non-demeaned returns and signals tend to be used instead. Moreover, the means of both financial variables may be non-negligible and contain useful information for describing the cross-predictability of future returns. In general, we can write the second-moment matrix of conditional portfolios using the following decomposition:
\begin{equation}\label{eq:sdbets}
    \Sigma _{rx} = \mathbb{E}[r_{t+1} x_t ']= \mathbb{E}[r_{t+1}]\mathbb{E}[x_t]' + 
    \mathsf{Cov}(r_{t+1}, x_t) .
\end{equation}

The first component is a rank-one outer product matrix of the unconditional means in the asset returns and signals while the second component is the cross-covariance matrix of the demeaned returns and signals. In particular, the first component places emphasis on the cross-sectional differences driven by the unconditional levels since all time-series variabilities have been averaged out, while the second component captures time-series variations in the returns and signals that are expressed as the deviations from their unconditional means. \cite{kelly2020principal} describe the first component as static bets and the second component as dynamic bets. 

We can derive the expected returns of the optimal portfolio in this setting in the next proposition.

\begin{prop}\label{prop:staticdyn}
    If the expected value of the returns and signals are $\mathbb{E}[r_{t+1}]=\mu _r$ and $\mathbb{E}[x_t]=\mu _x$, then the expected return of a portfolio utilizing the second-moment matrix \eqref{eq:sdbets} is given by
    \begin{align}\label{eq:staticdynret}
        \mathbb{E}[x_t 'A r_{t+1} ]  =  
 \frac{1}{\gamma} (\mu _r ' \Sigma _r^{-1} \mu _r)(\mu _x ' \Sigma _x^{-1} \mu _x) + \frac{1}{\gamma} \sum ^N_{i=1} s_i ^2 .
    \end{align}
\end{prop}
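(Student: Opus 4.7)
The plan is to compute the expected portfolio return as a trace and then exploit the bilinearity of the Frobenius inner product to separate the contributions of the two summands in the decomposition \eqref{eq:sdbets}. Since $A$ is non-random, the first step is to write
\begin{equation*}
\mathbb{E}[x_t'A r_{t+1}] = \mathsf{Tr}\!\left(A\,\mathbb{E}[r_{t+1}x_t']\right) = \mathsf{Tr}(A\,\Sigma_{rx}),
\end{equation*}
with $\Sigma_{rx}$ now standing for the full second-moment matrix in \eqref{eq:sdbets}. Plugging in the optimal $A = \gamma^{-1}\Sigma_x^{-1}\Sigma_{rx}'\Sigma_r^{-1}$ from Proposition \ref{prop:partialobj} gives
\begin{equation*}
\mathbb{E}[x_t'A r_{t+1}] = \frac{1}{\gamma}\,\mathsf{Tr}\!\left(\Sigma_x^{-1}\Sigma_{rx}'\Sigma_r^{-1}\Sigma_{rx}\right) = \frac{1}{\gamma}\,\bigl\|\Sigma_r^{-1/2}\Sigma_{rx}\Sigma_x^{-1/2}\bigr\|_{\mathrm{F}}^2,
\end{equation*}
which is the same whitened-norm identity used in the proof of Proposition \ref{prop:return2singvals}.

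Second, I would substitute the static/dynamic split $\Sigma_{rx} = \mu_r\mu_x' + \mathsf{Cov}(r_{t+1},x_t)$ and let $\tilde C_{rx} := \Sigma_r^{-1/2}\mathsf{Cov}(r_{t+1},x_t)\Sigma_x^{-1/2}$, whose singular values are the canonical correlations $s_i$ of the dynamic block introduced in Section \ref{sec:cca}. The matrix inside the Frobenius norm then becomes $\Sigma_r^{-1/2}\mu_r\mu_x'\Sigma_x^{-1/2} + \tilde C_{rx}$, and expanding the squared norm yields three contributions: (i) the rank-one static term, which factors as $\|\Sigma_r^{-1/2}\mu_r\mu_x'\Sigma_x^{-1/2}\|_{\mathrm{F}}^2 = (\mu_r'\Sigma_r^{-1}\mu_r)(\mu_x'\Sigma_x^{-1}\mu_x)$, since the Frobenius norm of a rank-one outer product factors into the two whitened vector norms; (ii) the dynamic term $\|\tilde C_{rx}\|_{\mathrm{F}}^2 = \sum_{i=1}^{N} s_i^2$, read off directly from the SVD; and (iii) twice a Frobenius cross term equal to $\mu_r'\Sigma_r^{-1}\mathsf{Cov}(r_{t+1},x_t)\Sigma_x^{-1}\mu_x$.

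The main obstacle I expect is disposing of that cross term, which does not vanish identically for generic means and centered cross-covariances. My plan is to interpret the proposition as a superposition of two decoupled sub-strategies: one built by plugging the static block $\mu_r\mu_x'$ into the role of $\Sigma_{rx}$ in Proposition \ref{prop:partialobj} and then evaluated against its own static second-moment contribution, and one built from the centered dynamic block $\mathsf{Cov}(r_{t+1},x_t)$ and evaluated against its own dynamic contribution. Each sub-strategy is a direct application of the identities derived above and contributes $\gamma^{-1}(\mu_r'\Sigma_r^{-1}\mu_r)(\mu_x'\Sigma_x^{-1}\mu_x)$ and $\gamma^{-1}\sum_{i=1}^{N} s_i^2$ respectively, and summing them reproduces \eqref{eq:staticdynret} without any residual cross term. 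Equivalently, for the single combined strategy, one can invoke the mild orthogonality condition $\mu_r'\Sigma_r^{-1}\mathsf{Cov}(r_{t+1},x_t)\Sigma_x^{-1}\mu_x = 0$, which is natural when the unconditional level effects and the time-series co-movement encode economically distinct sources of return and which kills the cross term exactly. Either route then collects the static and dynamic pieces, divides by $\gamma$, and finishes the proof.
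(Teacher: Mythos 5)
Your computation follows exactly the route the paper takes: substitute the decomposition \eqref{eq:sdbets} into $\gamma^{-1}\mathsf{Tr}(\Sigma_x^{-1}\Sigma_{rx}'\Sigma_r^{-1}\Sigma_{rx})$, identify the rank-one static block with $(\mu_r'\Sigma_r^{-1}\mu_r)(\mu_x'\Sigma_x^{-1}\mu_x)$ via the cyclic property, and read off $\sum_i s_i^2$ from the whitened Frobenius norm of the centered block. Those two pieces are handled correctly and match the paper's (one-line) argument.

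Where you go further is in noticing the cross term $2\gamma^{-1}\mu_r'\Sigma_r^{-1}\mathsf{Cov}(r_{t+1},x_t)\Sigma_x^{-1}\mu_x$, which does not vanish for generic $\mu_r$, $\mu_x$ and centered cross-covariance. You are right that it is there: the paper's proof of Proposition \ref{prop:staticdyn} simply substitutes and asserts the result, silently dropping this term, so the identity \eqref{eq:staticdynret} as stated holds only under an additional condition. Your two proposed repairs, however, do not close the gap so much as relocate it. The ``superposition of decoupled sub-strategies'' reading computes the sum of two different expected returns (each sub-policy evaluated only against its own block of $\Sigma_{rx}$), which is not $\mathbb{E}[x_t'Ar_{t+1}]$ for the single policy $A$ built from the full second-moment matrix --- so it proves a different statement. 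The orthogonality condition $\mu_r'\Sigma_r^{-1}\mathsf{Cov}(r_{t+1},x_t)\Sigma_x^{-1}\mu_x=0$ does make the identity exact, but it is an extra hypothesis not present in the proposition, and calling it ``mild'' or ``natural'' is not a substitute for assuming it. The honest conclusion is that \eqref{eq:staticdynret} should either carry this orthogonality assumption or include the cross term; your derivation is the correct one up to that point, and the defect you found is in the proposition's statement and the paper's proof, not in your algebra.
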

Proposition \ref{prop:staticdyn} shows that there is an additional non-negative contribution to the portfolio returns, which is due to the (squared) maximum Sharpe ratio achievable from the assets. This observation aligns nicely with our interpretation that the sum of the squared canonical correlations is related to the squared Sharpe ratio of optimal portfolio returns from Corollary \ref{corollary:sharpe}. That is, the expected portfolio returns can be attributed to the Sharpe ratios from both static and dynamic bets.

We shall let our portfolio exploit both investment opportunities that arise from the second-moment matrix of managed portfolio returns. This is possible within our framework since the assumption of zero-mean returns and signals strictly applies to the variance of the portfolio returns, and so only the covariances of the returns and signals have to be centered. However, we shall apply CCA to the cross-covariance matrix to fulfill its modeling assumptions since otherwise, it can result in a generic top canonical portfolio that is mainly driven by the unconditional levels.\footnote{CCA is sometimes applied to non-demeaned variables. In such a case, this is more closely related to using a cosine similarity objective instead of a correlation objective.}

\subsection{Estimation} \label{sec:estimation}

Our optimal portfolio policy requires the knowledge of the population covariances of the asset returns, the covariances of the signals, and cross-covariance between both variables. These objects are generally unknown to us, and so in order to render our framework to practice, we have to estimate them with real data. Given the large-dimensional nature of our problem, it is necessary to regularize the covariances of our portfolio to reduce the estimation errors. Unfortunately, simply maintaining the top canonical portfolio is not sufficient for empirical analysis. Indeed, if either the assets or the signals have rank $T$ (which tends to be the case when $N \gg T$), then the canonical portfolios can take on any arbitrary values. Moreover, since the smallest amount of variability in each dataset gets rescaled to one, CCA can produce spurious outcomes. 

The challenge of estimating the covariance matrix of financial covariances is well known amongst practitioners \citep{jobson1980estimation}. A standard approach is to use the sample covariance matrix. However, when the dimensionality of the problem is large relative to the number of observations, estimation error of the sample covariance matrix can create issues for portfolio optimizers; they tend to place extreme bets on low-risk sample eigenvectors. In fact, this observation led \cite{michaud} to refer to mean-variance optimizers as `error maximization' schemes. There have been several approaches from practice to address this problem using methods from bootstrapping \citep{michaud2008efficient} to Bayesian estimators \citep{black1992global, lai2011mean}.

We obtain regularized covariance matrices by applying the linear shrinkage technology from \cite{ledoit2004well}. The covariances of returns will be estimated as $\hat{\Sigma}_r = (1-\delta _r) S_r + \delta _r N^{-1} \mathsf{Tr}(S_r)  \mathbb{I}_N$, where the shrinkage intensity $\delta _r$ is determined based on an asymptotic formula. We also apply the same linear shrinkage technology for the covariances of signals $\hat{\Sigma}_x$ with a shrinkage intensity parameter $\delta_x$. The choice of linear shrinkage of the covariances has a nice interpretation in our context in that varying the shrinkage intensities allows us to interpolate between the maximum covariance (PLS) and maximum correlation (CCA) problems. Shrinking towards maximum covariance helps to break the singularities by considering managed canonical portfolios that have better out-of-sample properties. In the extreme regularization setting with both $\hat{\Sigma}_r$ and $\hat{\Sigma}_x$ being identity matrices, our estimated canonical portfolios mimics the principal portfolios approach of \cite{kelly2020principal}.

With both of these estimated covariances at hand, we proceed to build a regularized adjusted cross-covariance matrix by pre-conditioning the sample covariances $S_{rx}$ on both sides with the estimated matrix factors $\hat{\Sigma} _r ^{1/2}$ and $\hat{\Sigma} _x ^{1/2}$. One can then choose to maintain the top few canonical correlations of the regularized adjusted sample cross-covariance matrix defined as
\begin{align}\label{eq:regularizedxscov}
    \hat{\Sigma} _r ^{-1/2} S_{rx} \hat{\Sigma} _x ^{-1/2} ,
\end{align}
and set the bottom ones to zero.\footnote{The regularized sample adjusted cross-covariance matrix \eqref{eq:regularizedxscov} has a close similarity to that of \cite{vinod1976canonical}, where ridge regression was proposed as a means of regularizing the sample adjusted cross-covariance matrix. We resort to the class of linear shrinkage estimators due to their ability to also reduce the influence of large variance directions.} This thresholding operation has the effect of regularizing the problem as it reduces the effective number of parameters that we have to estimate by maintaining the top few canonical portfolios that are the most predictable.

\section{Empirical Analysis}\label{sec:empirical}

\subsection{Data and Portfolio Construction Rules}
For our empirical analysis, we download six datasets from Kenneth French's data library, which are characteristic-sorted long-short stock portfolios.\footnote{The description of all portfolio construction can be found on Kenneth French's website: \url{https://mba.tuck.dartmouth.edu/pages/faculty/ken.french/data_library.html}. At the time of writing, these datasets were based on the 10-2022 CRSP database.} They are daily returns on portfolios of stocks sorted on the basis of size and book-to-market (FF), size and operating profitability (ME/OP), and size and investment (ME/INV), each of which is of universe size 25 and 100. In comparison to individual stocks, each return of a  portfolio is an average return of a group of stocks sharing similar characteristics and so they are less subject to large variabilities due to idiosyncratic risks. Hence, they serve as useful test assets that may allow us to easily harvest the predictability in stock returns.

Although some of these portfolio returns have been available since 1926, we conduct most of our analysis on the period from July 1963 to October 2022, for which most of the returns are available. For simplicity, we suppose that 21 consecutive days constitute one trading `month' and 252 consecutive days as one trading `year'. We adopt a sequential updating scheme and rebalance the portfolio every `month' on a rolling walk-forward basis. To obtain a well-defined investment universe for which we can estimate the portfolios, we use the following rule. For each rebalancing date, we select test assets that have a complete return history over the most recent $T=120$ months as well as a complete return `future' over the subsequent trading month. The backward and forward restrictions ensure that we have data to estimate our models and to evaluate out-of-sample. This provides us with 578 monthly out-of-sample returns, which covers the out-of-sample investment period from 09-09-1974 through 10-21-2022. 

\subsection{Signal Construction}

As a demonstration of our method for a dynamic portfolio selection problem between equity-sorted portfolios, we start with a single conditioning variable as our signal. We consider the momentum signal as there is substantial empirical evidence that documents this anomaly in the returns of individual stocks \citep{jegadeesh1990evidence, jegadeesh1993returns}, industries \citep{moskowitz1999industries}, and of size and value portfolios \citep{lewellen2010skeptical}. This application is similar to \cite{kelly2020principal} for reproducibility.

To construct a momentum signal, we compute for each asset the lagged one-month return defined as the simple average return over the previous 21 trading days.  Similar to \cite{asness2019quality}, \cite{freyberger2020dissecting}, \cite{kozak2020shrinking} and \cite{kelly2020principal}, we rank the momentum signals across the assets from 1 to $N$, dividing the ranks by the number of assets, and then centering the normalized ranks to map the signals into the range $[-0.5, 0.5]$. This provides us with a set of dollar-neutral signals that are insensitive to outliers for which we further divide by the sum of their absolute values. This keeps the gross exposure (that is, the sum of the absolute amount of long and short positions) fixed since otherwise doubling the number of assets at any time $t$ will result in signals that are two times more aggressive even though the investment opportunities remain the same. We assume \$1 of capital is invested to \$1 of long and short positions.

We collect the individual momentum values of the $N$ assets to yield a predictive signal $x_t$ for the subsequent monthly returns. Additionally, we impose a one-day buffer between the constructed signals and the subsequent returns to limit the effects of illiquidity from driving our results and to bring our backtest simulations closer to being tradeable practice.\footnote{The challenge working with the equity portfolios from Kenneth French’s website is that they contain small illiquid stocks. Moreover, given that we are using daily returns, which are close-to-close returns, asynchronous trading at the end of the day may arise; see, for example, \cite{lo1990econometric}. Consequently, this may induce some lead-lag relationships among the stocks, which can lead to autocorrelation in the portfolio returns and spurious correlation estimates due to the \cite{epps1979comovements} effect. The latter problem can be particularly acute given our use of multivariate techniques in the estimation of portfolios. Therefore, we employ a one-day buffer along with monthly returns to mitigate the effects that asynchronous or infrequent trading can have on the portfolio returns. }

\subsection{Candidate Portfolios}

Given the time series panels of asset returns and signals, we consider the following portfolios in our study:
\begin{itemize}[label={\tiny\raisebox{1ex}{\textbullet}}]
    \item \textbf{CP2}: Our proposed portfolio contruction methodology based on Equation \eqref{eq:optw}.
    \item \textbf{MVO}: The mean-variance optimization portfolio of \cite{markowitz}.
    \item \textbf{PP2}: The Principal Portfolios of \cite{kelly2020principal}.
    \item \textbf{UNI}: The univariate factor where the weights are the signals.
\end{itemize}
The competing portfolios have been chosen because they can be subsumed in our proposed method and hence, serve as natural benchmarks for us to determine where the contribution to any improved performance comes from. The suffix number attached to the portfolio labels CP and PP indicates the number of managed portfolios that we retain; for example, CP2 means that we choose to keep the leading two most predictable managed portfolios.\footnote{We cross-sectionally demean the returns in the construction of the cross-covariance matrix for PP as suggested in \cite{kelly2020principal} to focus on the cross-sectional differences. However, we choose to ignore otherwise for the other portfolios.} 

The covariance of returns in CP2 and MVO is estimated with the linear shrinkage of \cite{ledoit2004well}. Additionally, we also apply linear shrinkage to the covariance of signals for CP2 but choose a high shrinkage intensity with value $\delta _x = 0.9$ without relying on the asymptotic formula from \cite{ledoit2004well} for this purpose since it was developed for financial returns that are assumed to be independent and identically distributed.

Finally, we renormalize the estimated portfolios so that the sum of the absolute value of their weights equals one. This allows all the portfolios to be comparable in scale. It also implies that the gross exposure for all portfolios is one dollar by construction, that is, we apply one dollar of capital for one dollar of long and short positions. This is sensible for long-short equity hedge fund managers who face institutional constraints such as limits on gross exposure by their prime brokerage.\footnote{ The insights are similar if we renormalize the portfolios to achieve a target level of volatility or return.}

\subsection{Evaluation Methodology}

To evaluate the performance of the different portfolios, we report three main out-of-sample performance measures: the average cumulative out-of-sample returns, the standard deviation of the out-of-sample returns, and the Sharpe ratio defined as the ratio of the average returns to the standard deviation of returns. For ease of interpretability, all performance measures are annualized with 12 trading `months'. The Sharpe ratio is computed with respect to the actual returns (as opposed to returns in excess of the risk-free rate) since we believe it is more relevant in our context where the portfolios are formed solely on the basis of risky assets. 

We also report three additional performance measures based on the out-of-sample returns in excess of a 6-factor benchmark; that is, the 5 factors from \cite{fama2015five} augmented with out-of-sample returns from UNI. We compute the Jensen's alpha, beta to the out-of-sample UNI returns, and information ratio from a 6-factor regression model. This is done for all portfolios except for UNI. The alpha and information ratio are annualized with 12 trading `months'. We also provide the t-statistics of the Sharpe ratio and information ratio, which are computed with approximate standard errors from \cite{lo2002statistics}.

Additionally, we report the following portfolio weight statistics averaged over the 578 trading months: turnover, proportional leverage (computed as the fraction of negative weights), the sum of negative weights, and the minimum and maximum weight. Note that these statistics are not our primary focus since our proposed method is not optimized to account for these measures. Nevertheless, they are provided to give a better overview of the different methods.

\subsection{Application of Canonical Portfolio Analysis}

We can use the results of Proposition \ref{prop:return2singvals} to identify the sources of portfolio returns by estimating the canonical portfolios that contribute most to its profitability. To this end, we will consider the FF25, ME/OP25, and ME/INV25 test assets for this purpose. The FF25 is widely studied amongst academics and it allows us to check if our method produces the expected results. Moreover, we know from \cite{lewellen2010skeptical} that the size and value portfolio returns have a strong factor structure explained mostly by the three-factor model \cite{fama1993common}. \cite{KozakNagelSantosh2018JF} finds that retaining the first three principal components extracted from FF25 closely reproduces the Fama and French three-factor model. Thus, this observation could potentially be exploited in our method.

The left panels of Figure \ref{fig:svals} show the squared sample canonical correlations of the demeaned conditional portfolios matrix for the three test assets. It is also overlaid with the squared canonical correlations generated from randomly permuting the signals observations for each asset and repeating a similar exercise.\footnote{Randomly shuffling the time series helps generate a null distribution. It is a useful heuristic to determine the importance of each canonical correlation relative to a random benchmark compared to a formal statistical test of significance from \cite{yang2015independence}.} The squared canonical correlations are averaged over the rebalancing dates and ordered from the smallest to the largest. From this in-sample analysis, we can see that the leading two squared sample canonical correlations have values larger than their pseudo-random generated counterparts and that the top one `sticks out' and extends beyond the value of one.\footnote{The values of (squared) regularized canonical correlations can exceed one unlike its sample-based analog, which is constrained to the interval (0,1).} Moreover, the spacing between the sample canonical correlations is wider at the top end of the spectrum and more uniform at the bottom end of the spectrum. 

We contrast these findings with the right panels of Figure \ref{fig:svals}, which shows the out-of-sample returns of each canonical portfolio computed as the product of the in-sample and out-of-sample canonical correlation, accompanied by their standard error. Not surprisingly, we find a performance deterioration in the out-of-sample performance due to the bias in the in-sample predictions. Nevertheless, there is some coherence between the in-sample predictions and the out-of-sample returns in that the top canonical portfolio possesses the most realized returns followed by the second one, while the bottom ones are close to zero. 

Given the prominence of the leading canonical correlation, we plot its corresponding weights. The left panels of Figure \ref{fig:weights} show the top canonical portfolios (that is, the canonical directions) averaged over the rebalancing dates. To ensure the signs of the canonical portfolios are consistent across time, we flip the sign of the canonical portfolios at any given rebalancing date if its cosine similarity with the canonical portfolios obtained from the previous rebalancing date is negative. While there is not any clear pattern of trades that we can immediately discern, it nonetheless differs from other research findings vis-\'{a}-vis PCA; for example, \cite{kelly2020principal} finds that the top principal component of a symmetrized cross-covariance matrix goes long (short) on big (small) equity portfolios, and long (short) on value (growth) portfolios. The right panels of Figure \ref{fig:weights} provide the final weights invested in each asset.

\subsection{Empirical Results}

Table \ref{tab:performance} summarizes the performance of the various portfolio methods for the different test assets. Restricting our attention to the FF25 column, we see that CP2 has an average return, which is lower than that of PP2 (2.20\% versus 3.76\%) but has much lower volatility (2.18\% versus 5.85\%).\footnote{Note that the magnitudes of the returns for all portfolios appear to be low in comparison to those published in hedge fund return indices. This is expected since the gross exposures of our portfolios are all constrained to one. In practice, one would typically apply a leverage factor greater than one in order to magnify the returns.} Altogether, this translates into Sharpe ratio of $1.01$ which is a 50\% improvement over the closest competitor, PP2, at $0.66$. The alpha of the CP2 is 1.84\% and has a low beta to the univariate factor return of 0.15. Adjusting the alpha by the idiosyncratic volatility of 2.04\% gives an information ratio of 0.9 for CP2. A similar conclusion holds for the ME/OP25 and ME/INV25 test assets, although we observe a deterioration in the Sharpe ratio for all portfolios in larger-sized test assets possibly due to estimation errors. CP2 also underperforms PP2 in the ME/INV100 dataset.   

% The benefits of exploiting the joint time-variation of the entire return distribution using our method is apparent. 

Table \ref{tab:weightstat} describes the distribution of the portfolio weights of the estimated portfolios. In the FF25 column, we see that CP2 has the lowest turnover. This is interesting given that we make no effort to control the trajectory of the weights. The average sum of negative weights in the CP2 is $-0.496$ and the average proportional leverage in CP2 is less than 0.5 indicating a slight tilt towards long positions. The weights are CP2 appear to be the most dispersed but do but are relatively not extreme. We draw a similar conclusion for the other test assets.

It would be interesting to investigate where the performance improvement of CP2 comes from: Is better at selecting assets that have historically performed well on average or at varying the positions in the assets dynamically? One way to discern between the two possible explanations is to decompose the portfolio returns into the following components:
\begin{align}\label{eq:statdyn}
    \mathbb{E}[w_t ' r_{t+1}] =  \underbrace{\mathbb{E}[w_t]'\mathbb{E}[r_{t+1}]}_{\text{static}} + \underbrace{\mathsf{Cov}(r_{t+1}, w_t)}_{\text{dynamic}} .
\end{align}
Intuitively, static bets refer to the investor's ability to get the long-run allocations right. On the other hand, the covariance between the signal and subsequent returns refers to an investor's ability to tactically `time' the market, this is, to accurately identify movements in the assets and gain exposure to those assets accordingly.

Table \ref{tab:decom} reports the results of this decomposition, where we estimate the static and dynamic components of returns according to Equation \eqref{eq:statdyn} with their corresponding sample analogs. Panels A and B show that both the static and dynamic components contribute positively to the overall performance of the portfolio. Panel C of Table \ref{tab:decom} breaks down the share of total returns due to taking dynamic bets for CP2 and P2. It shows that on average the returns of CP2 and PP2 come from taking dynamic bets. Therefore, the performance of our proposed method does not come from its ability to long (short) more highly performing (underperforming) assets; it must emanate from its ability to tactically time the market.

Finally, we also isolate the portfolio returns due to the long and short legs of the trade. We write $r^w_t = l((r^w_t)_+ - (r^w_t)_-)$ where $(r^w_t)_+$ is the return on the long leg and $(r^w_t)_-$ is the return on the short leg, with weights of both trade legs normalized to sum to one. Here, $l$ denotes the leverage of the long-short portfolio, but since our estimated portfolios have a unit gross exposure by construction, $l$ is close to one. Panel D and E of Table \ref{tab:decom} present the statistics of the long and short legs, respectively. For CP2, the average return on the long leg is 8.01\% and the short leg is 5.71\%. This indicates that the profits of our portfolio come from the long side of the trades.

\subsection{Robustness Checks}

In this section, we inspect whether the outperformance of our proposed portfolio construction methodology is robust to different revisions in the current empirical set-up. In particular, we will be interested in results based on (1) subsample period, (2) forecast horizon (3) shrinkage in the signal covariance, (4) momentum lookback window, and (5) `approximate' versus `actual' portfolio policy.

\subsubsection{Sub-Period Analysis}

In this section, we check if there are any peculiar subsample effects that may drive the performances of our proposed scheme. We divide the out-of-sample period into four roughly equally-sized subsamples of 144 months (that is, 12 trading years) each: (1) 1986-09-25 to 1986, (2) 1986 to 1998, (3) 1998 to 2010, and (4) 2010 to 2022. Then we perform the same procedure in each subsample. The results are provided in Table \ref{tab:subsample}. 

Generally, the performance of all portfolios appears stronger in the earlier periods of the sample but poorer in the recent decade. We see that the outperformance of CP2 over the competing portfolios is consistent over time for FF25 and ME/INV25. However, CP2 underperforms PP2 in (1) ME/INV100 for most of the subsamples, and (2) FF100, ME/OP25, and ME/OP100 in the earlier subsample.

\subsubsection{Forecast Horizon}

To make a forecast of subsequent returns, we have used a horizon length of 21 trading days, which roughly corresponds to one month. Given our use of non-overlapping observations between the signals and subsequent returns, this implies that the data are sampled on a monthly basis. which also corresponds to the frequency with which we rebalance our portfolios. We now change the forecast horizon from 21 days to 1, 5, and 10 days. Each of these forecast horizons covers an investment period from (1) 01-06-1965 to 10-21-2022, (2) 12-06-1966 to 10-21-2022, and (3) 06-12-1969 to 10-21-2022; this provides us with $14{,}550$ daily, $2{,}814$ weekly, and $1{,}347$ fortnightly out-of-sample returns, respectively. 

Table \ref{tab:rebalfreq} shows that the annualized Sharpe ratio of all portfolios tends to be better at shorter holding periods. Barring the potential side effects of illiquidity in daily returns, this suggests that the conditioning information becomes more relevant as the holding period decreases. This makes sense since the conditional portfolios become more reactive to changes in the states of the market. Overall, the ranking of the methods remains similar relative to Table \ref{tab:performance} with the exception of the 1-day horizon, where CP2 is the best performer for all test assets.

\subsubsection{Shrinkage Intensity}

We examine the effect of shrinkage on the covariance of signals in our proposed method. Our default choice in the analysis was $\delta _x = 0.9$. The annualized Sharpe ratios for CP2 corresponding to shrinkage values of $\delta_x \in \{0,0.1,0.5,0.8,1\}$  are presented in Table \ref{tab:shrinkage} over different subsamples. 

We see that there is no specific level of shrinkage that provides consistent outperformance for all test assets. This indicates that the shrinkage intensity $\delta _x$ is time-varying in nature. Shrinkage values greater than or equal to 0.5 appears to work well across different subsamples for larger-sized equity portfolios up until the recent decade, where there is some benefit of using more sample information from the signal correlations in datasets ME/OP100 and ME/INV100. 

\subsubsection{Momentum Lookback Window}

The 21-day momentum signal that we used as a base case is perceived to be a relatively `fast' signal, reacting quickly to changes in market conditions. We now consider using momentum signals computed with a longer lookback window of sizes 42 days, 63 days, 84 days, 126 days, and 252 days. The remaining details remain similar.

Table \ref{tab:momspeed} demonstrates that CP2 generally outperforms the other portfolios in terms of annualized Sharpe ratio for different momentum signals and different test assets. Although we observe some performance degradation for all portfolios as we increase the lookback size from 21 days to 126 days, the ranking of the methods remains similar to Table \ref{tab:performance}. There is, however, a significant performance gain in CP2 not seen in the other portfolios as we extend the momentum lookback size from 126 days to 252 days. This is comforting since momentum signals with longer lookback horizons tend to have lower turnover.

\subsubsection{Approximate Versus Actual Solution}

We have worked with the `approximated' mean-variance problem from Proposition \ref{prop:partialobj} throughout this paper since it helped us to simplify the analysis. Given that we also have a closed-form expression to the `full' mean-variance problem from  Theorem \ref{thm:main} in Proposition \ref{prop:fullobj}, we now check if it offers any practical benefits over our approximate solution. The only difference between both solutions is in terms of how the canonical correlations enter into the reconstruction of the optimal portfolio policy. The full solution essentially applies a nonlinear adjustment to the canonical correlations, while no adjustment takes place in the approximated solution. Figure \ref{fig:approx} shows that both solutions are similar for small canonical correlation values but for large canonical correlation values, the conservative behavior of the optimal strategy is reinforced by downweighting its influence. This makes sense because we are taking into account more terms that affect the risk profile of the strategy.

From Table \ref{tab:approx}, we see that our approximated formula generally performs better than the full formula across different test assets. One reason for this observation is that the nonlinear adjustment of the canonical correlations may be too conservative relative to the unadjusted one. For example, a canonical correlation value of one gets reduced by half through the nonlinear adjustment. This holds mechanically irrespective of the data. Consequently, this behavior may inadvertently under-leverage the leading two canonical portfolios, which are the most profitable streams. 

\subsection{Extension: Two-Signal Case}
Our canonical portfolios modeling framework is flexible enough to accommodate multiple signals. In this empirical exercise, we expand our signal vector to include two momentum signals of different lookback windows; one with a 21-day lookback window, and another with a 252-days lookback window. This expanded $2N$-dimensional signal vector will serve as input for both CP and PP. On the other hand, we assume that both MVO and UNI take in an equal-weighted average of the two momentum signals as inputs since these methods do not have an `optimal' way of blending different signals together in a single stage. The rest of the empirical setup remains unchanged.

From Table \ref{tab:twosig}, we see that the Sharpe ratio of all portfolios, with the exception of MVO, generally improves in comparison to the one signal case. More importantly, the outperformance of CP2 over competing methods continues to hold up for different test assets. 

Turning our attention to Table \ref{tab:twosigswgtstat}, we see that the turnover is reduced for all portfolios compared to the base case. This observation can be attributed to the inclusion of the signal with a long lookback window, which tends to have a lower turnover than one with a short lookback window. This is appealing since we can expect the performance after factoring in transaction costs to be better than using a single conditioning variable. We also observe lower proportional leverage throughout the portfolios indicating a tilt toward long positions. The weights also appear to be less dispersed than in the base case.

\section{Conclusion}\label{sec:conclusion}

In this paper, we provide a novel framework for portfolio managers and academics to conceptualize the optimal asset and signal combination problem with canonical correlation analysis (CCA). Our contribution can be summarized as follows. First, we recast the original investment problem of \cite{brandt2006dynamic} into a tractable one that allows us to derive an optimal portfolio policy that is applicable to large cross-sectional financial applications. Our portfolio policy is able to ingest multiple return-predictive signals, and account for cross-predictability and correlations in both the returns and signals. All of these properties are achieved by solving the portfolio selection problem in a single stage. Second, we attempted to lift the veil of complexity from our large-dimensional investment problem through a novel application of CCA. In particular, CCA breaks down the correlations of all of the asset returns and signals into independent long–short managed portfolios, which we term as \textit{canonical portfolios}. Each of the canonical portfolios can be ranked from the one with the smallest correlation to the one with the highest; the canonical portfolios with the highest predictable returns get scaled up the most.

Having established the theoretical contents of our method, we bring it to the empirical test. We ran backtest simulations on Fama-French equity sorted portfolios with a momentum signal. Our findings indicate that our proposed method consistently outperforms natural benchmarks. The performance of our method further improves when the analysis is extended to two momentum signals of different lookback windows. These results are made possible by introducing regularization techniques to overcome estimation errors and exploiting the most predictable dimensions of the data.

In terms of future work, our proposed modeling framework is not set in stone and is flexible enough to accommodate further improvements. There are several potential avenues for research, such as incorporating nonlinearities or regime-switching into the modeling process. Another interesting avenue to explore would be to apply these ideas to develop a test for asset pricing models or to form portfolios on different asset classes such as individual stocks portfolios, fixed-income portfolios, currency portfolios, and so forth. Furthermore, extending the framework to incorporate transaction costs will be pursued in subsequent work. Last but not least, recasting the portfolio selection problem into a CCA framework enables researchers to leverage the insights and techniques from the rich literature of CCA that has been expanded by developments in machine learning.

\newpage
% \printbibliography
% https://www.economics.utoronto.ca/osborne/latex/BIBTEX.HTM
\bibliographystyle{apalike}
\bibliography{papers}  %%% Remove comment to use the external .bib file (using bibtex).
%%% and comment out the ``thebibliography'' section.

%%% Comment out this section when you \bibliography{references} is enabled.
% \begin{thebibliography}{1}

% \end{thebibliography}

\newpage

\clearpage
\pagestyle{empty}

\appendix
\numberwithin{equation}{section}

\section{Figures and Tables}\label{appendix:figures}

\vspace*{\fill}
\begin{figure}[H] 
\caption{Wachter Law}
\captionsetup{font=footnotesize}
\caption*{This figure displays the limiting spectral density of the sample canonical correlations of \cite{wachter1980limiting} for different ratios $N/T$ and $M/T$. The population cross-covariance matrix is the zero matrix, and so all the population canonical correlation coefficients are equal to zero. }
    \centering
    \includegraphics[width=0.49\textwidth]{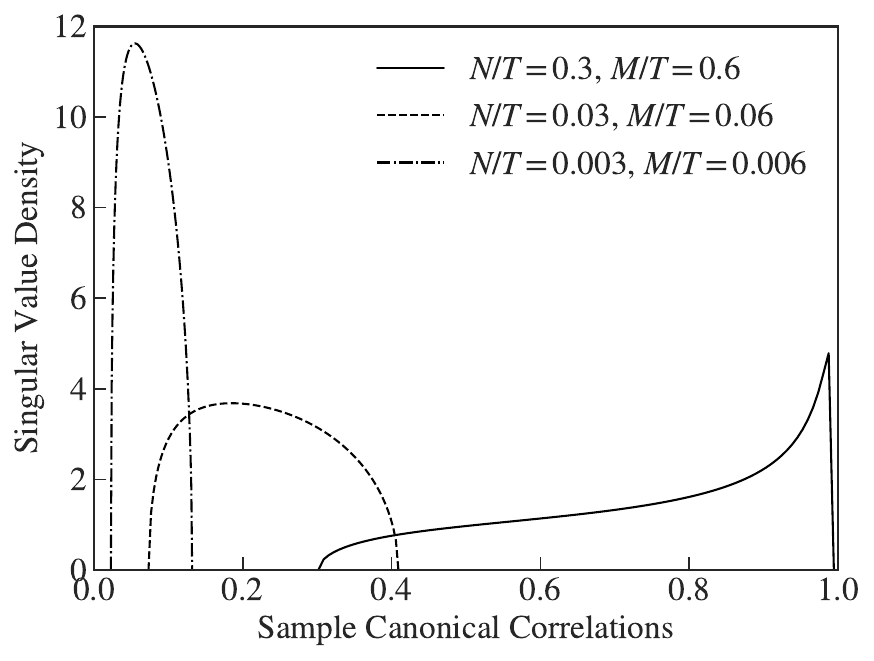}
    \label{fig:lsd}
\end{figure}

\begin{figure}[H] 
\caption{Optimal Singular Value Adjustment Versus Linear Approximation}
\captionsetup{font=footnotesize}
\caption*{This figure displays the optimal adjustment of the $i$th singular value versus an unadjusted $i$th singular value.}
    \centering
    \includegraphics[width=0.49\textwidth]{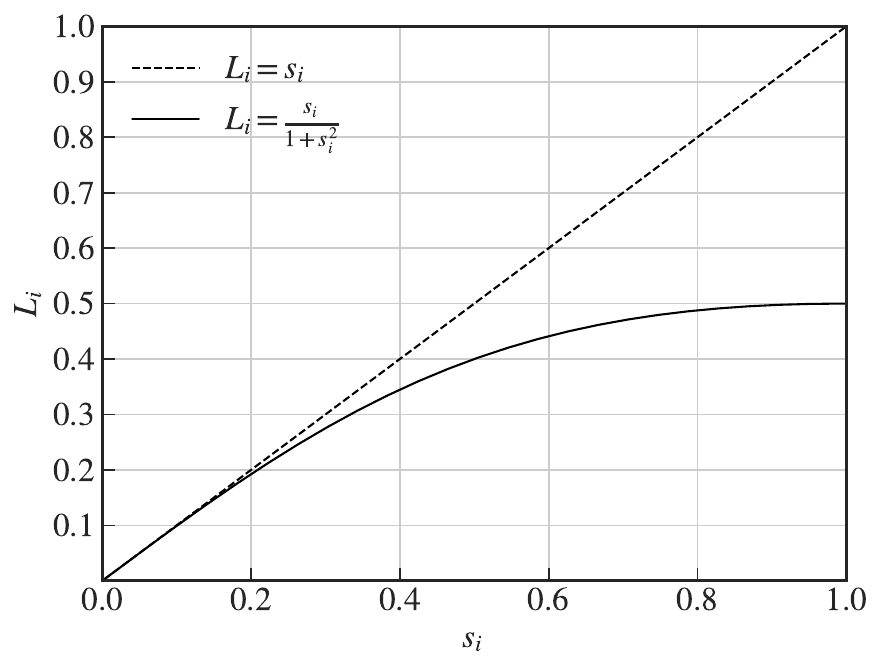}
    \label{fig:approx}
\end{figure}

\vspace{\fill}

\newpage

% \vspace*{\fill}
\begin{figure}[H] 
\caption{In-Sample and Out-of-Sample Canonical Portfolio Returns}
\captionsetup{font=footnotesize}
\caption*{This figure displays the in-sample and out-of-sample returns for each canonical portfolio. The left panel shows the squared in-sample regularized canonical correlation coefficients (black dots) overlaid with their pseudo-random generated counterparts (maroon dots). The right panel shows the out-of-sample returns of each canonical portfolio overlaid with a $\pm 2$ standard error band. Each dot corresponds to values that are averaged over the rebalancing dates. The estimates are obtained by using a 21-day momentum signal to predict subsequent monthly returns and 120 non-overlapping monthly returns. The test assets are the FF25 (top panel), ME/OP25 (middle panel), and MEINV25 (bottom panel). The out-of-sample period is covered from 09-09-1974 until 10-21-2022.}
    \centering
    \begin{subfigure}[b]{\textwidth}
        \caption{25 Size and Book-to-Market}
         \centering
         \includegraphics[width=0.49\textwidth]{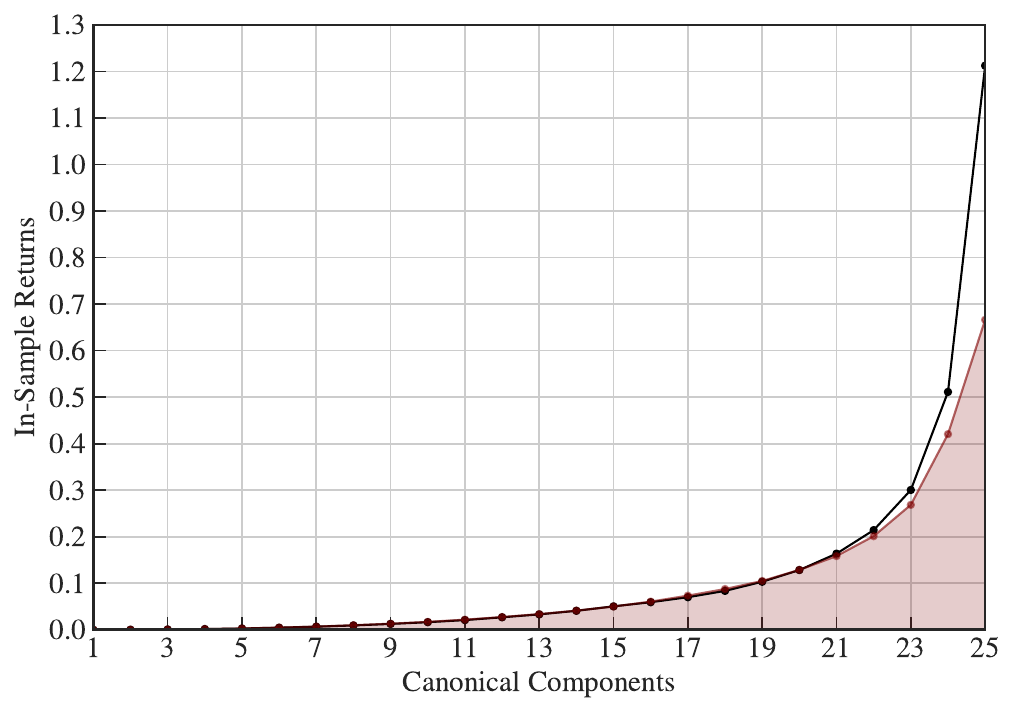}
         \centering
         \includegraphics[width=0.49\textwidth]{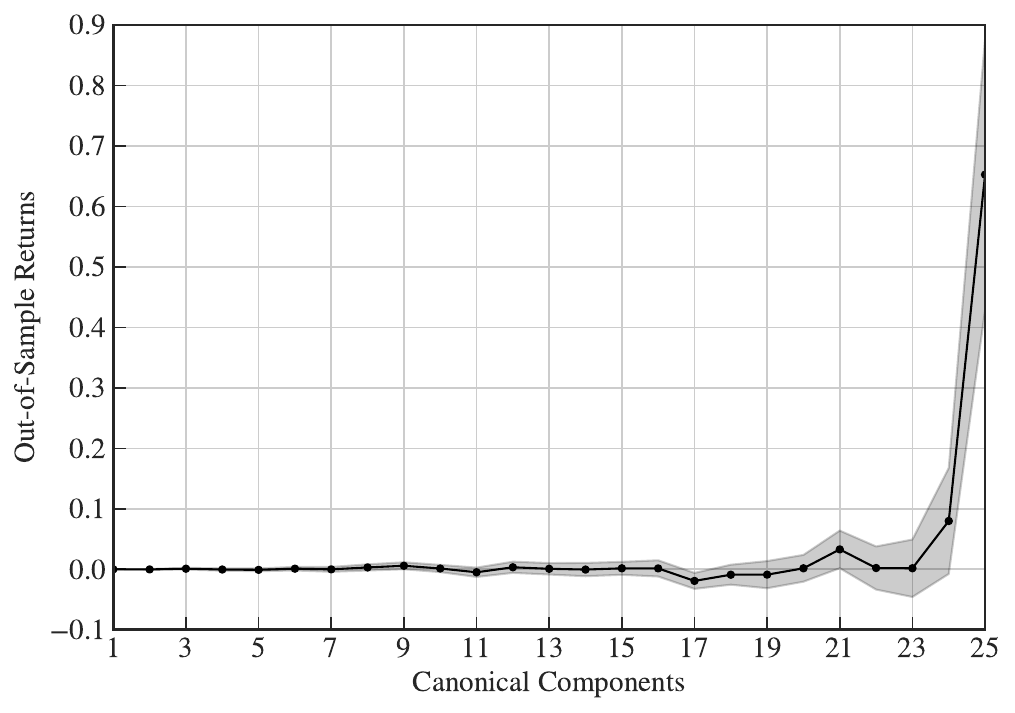}
     \end{subfigure}
     \begin{subfigure}[b]{\textwidth}
        \caption{25 Size and Operating Profitability}
         \centering
         \includegraphics[width=0.49\textwidth]{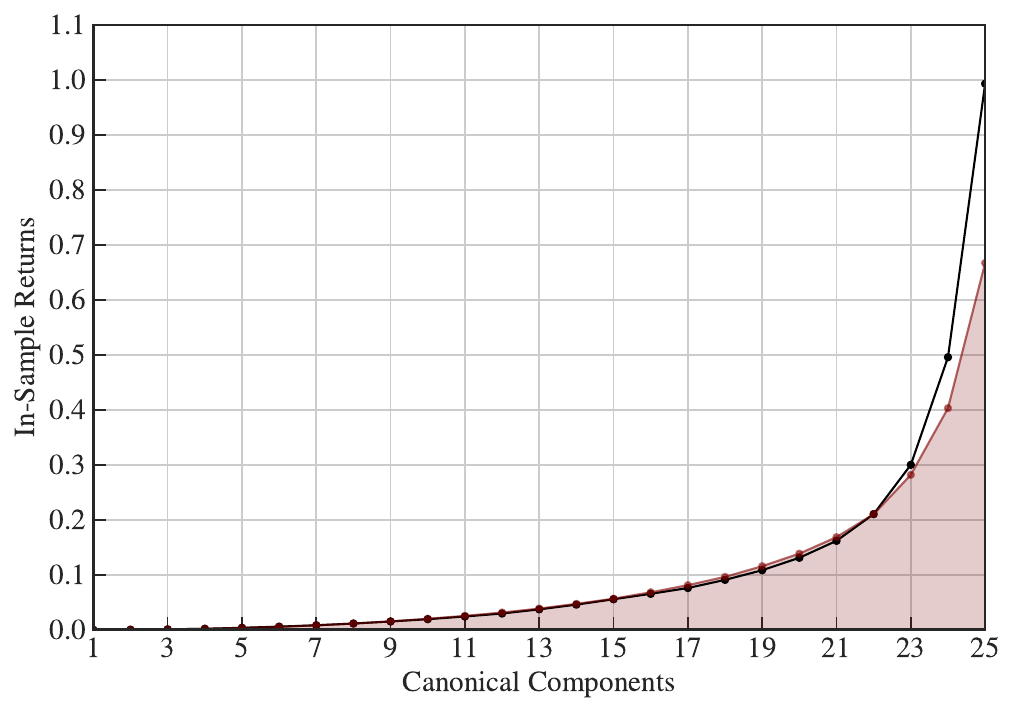}
         \centering
         \includegraphics[width=0.49\textwidth]{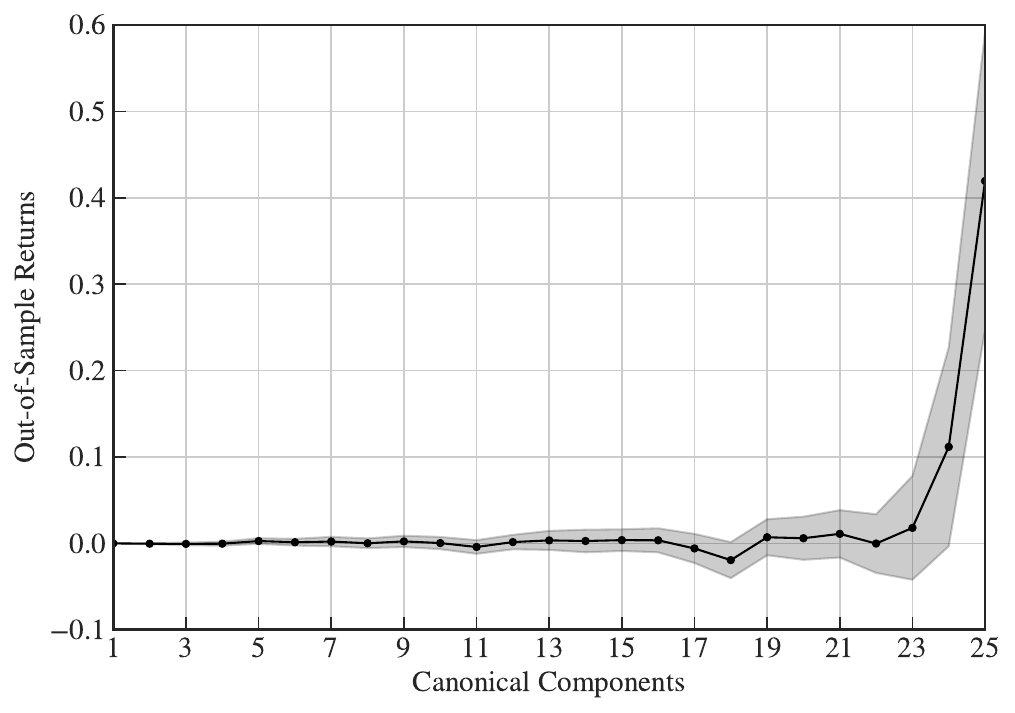}
     \end{subfigure}
     \begin{subfigure}[b]{\textwidth}
         \caption{25 Size and Investment}
         \centering
         \includegraphics[width=0.49\textwidth]{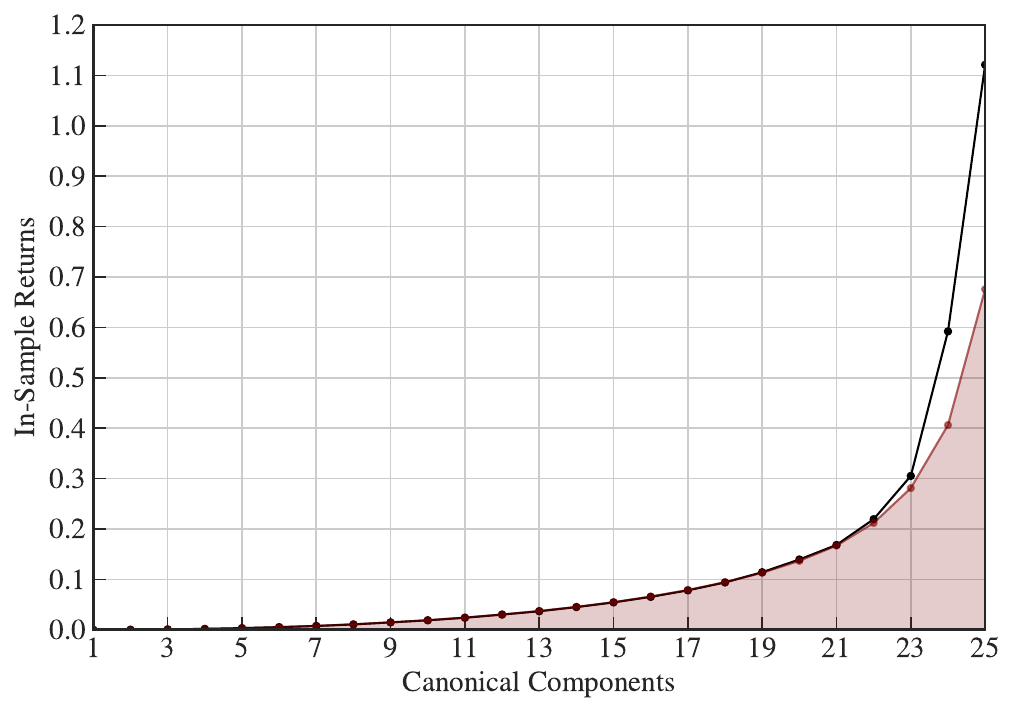}
         \centering
         \includegraphics[width=0.49\textwidth]{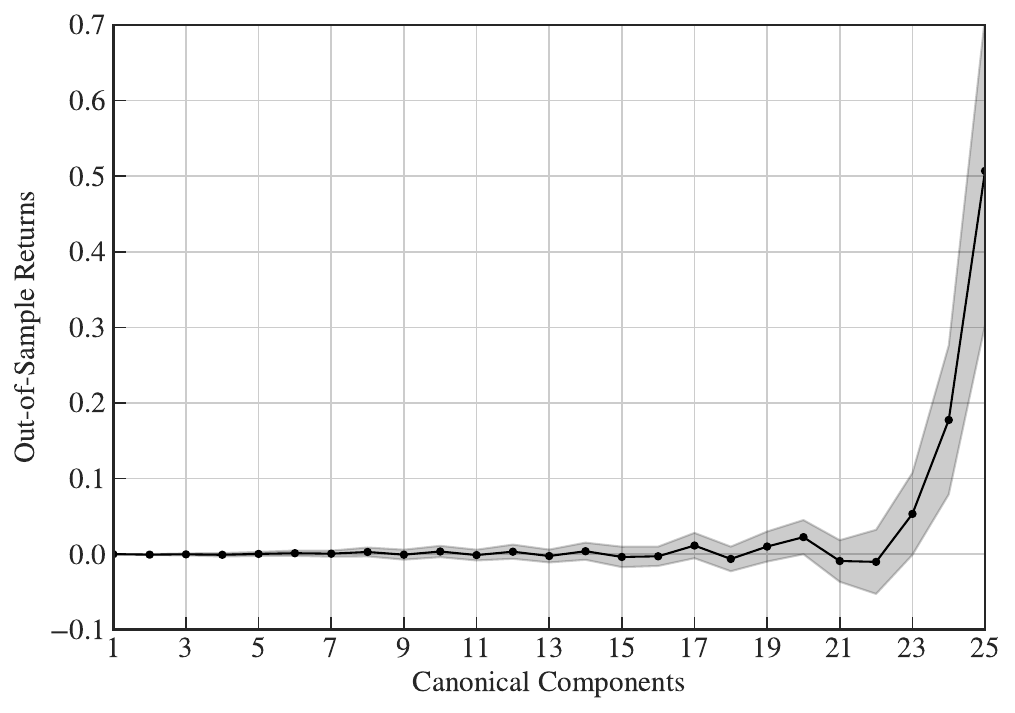}
     \end{subfigure}
    \label{fig:svals}
\end{figure}

\begin{figure}[H] 
\caption{Weights for the Top Canonical Portfolio}
\captionsetup{font=footnotesize}
\caption*{This figure displays the weights of the leading canonical portfolio. The left panel shows the canonical `weights' on the returns (blue) and signals (red). The right panel shows the final portfolio weights invested in each obtained from combining the returns and signals together through their canonical weights and scaling them up with the corresponding canonical correlation. The height of each bar corresponds to values that are averaged over the rebalancing dates, and they are overlaid with a $\pm 2$ standard error band. The estimates are obtained by using a 21-day momentum signal to predict subsequent monthly returns and 120 non-overlapping monthly returns. The test assets are the FF25 (top panel), ME/OP25 (middle panel), and MEINV25 (bottom panel). The out-of-sample period is covered from 09-09-1974 until 10-21-2022.}
    \centering
    \begin{subfigure}[b]{\textwidth}
        \caption{25 Size and Book-to-Market}
         \centering
         \includegraphics[width=0.495\textwidth]{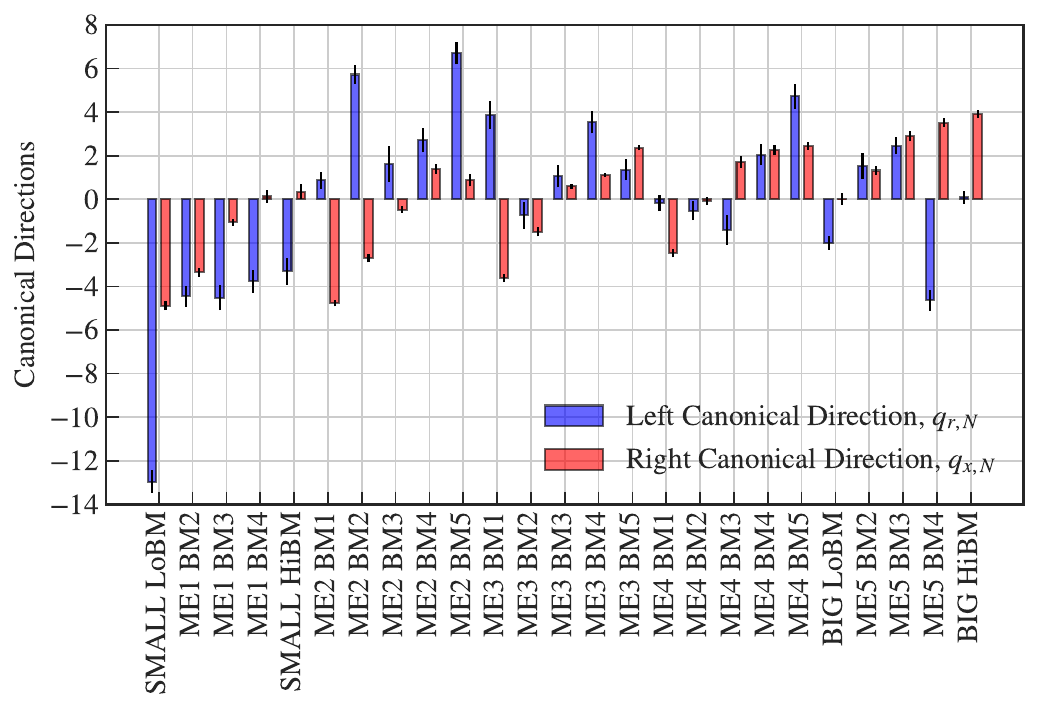}
         \centering
        \includegraphics[width=0.495\textwidth]{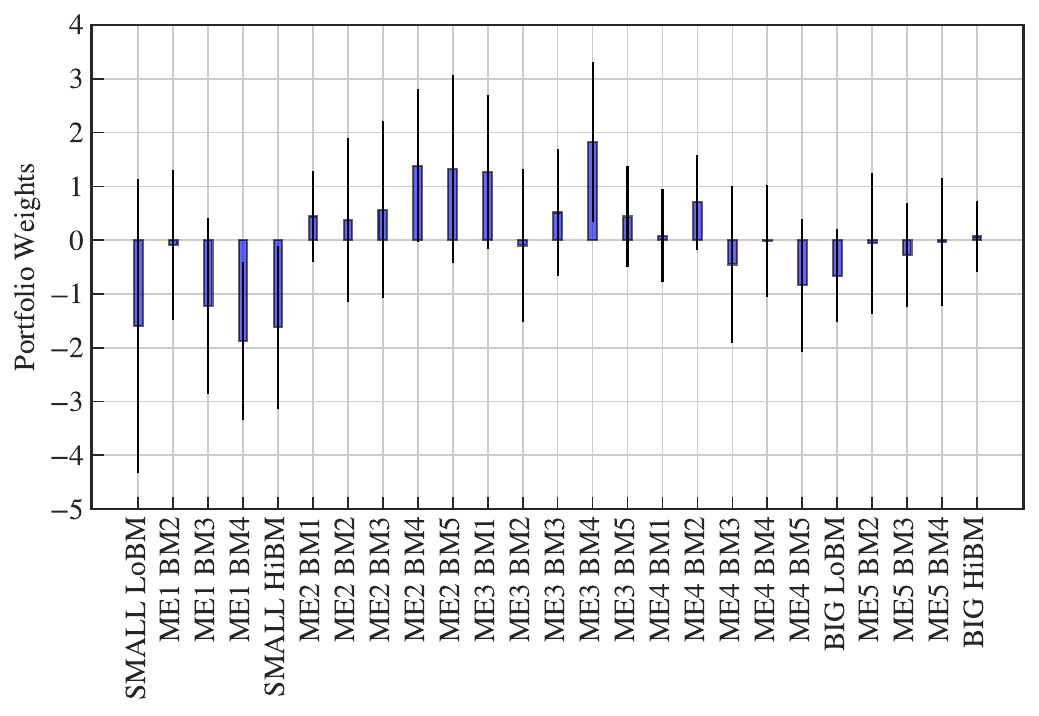}
     \end{subfigure}
     \begin{subfigure}[b]{\textwidth}
        \caption{25 Size and Operating Profitability}
         \centering
         \includegraphics[width=0.495\textwidth]{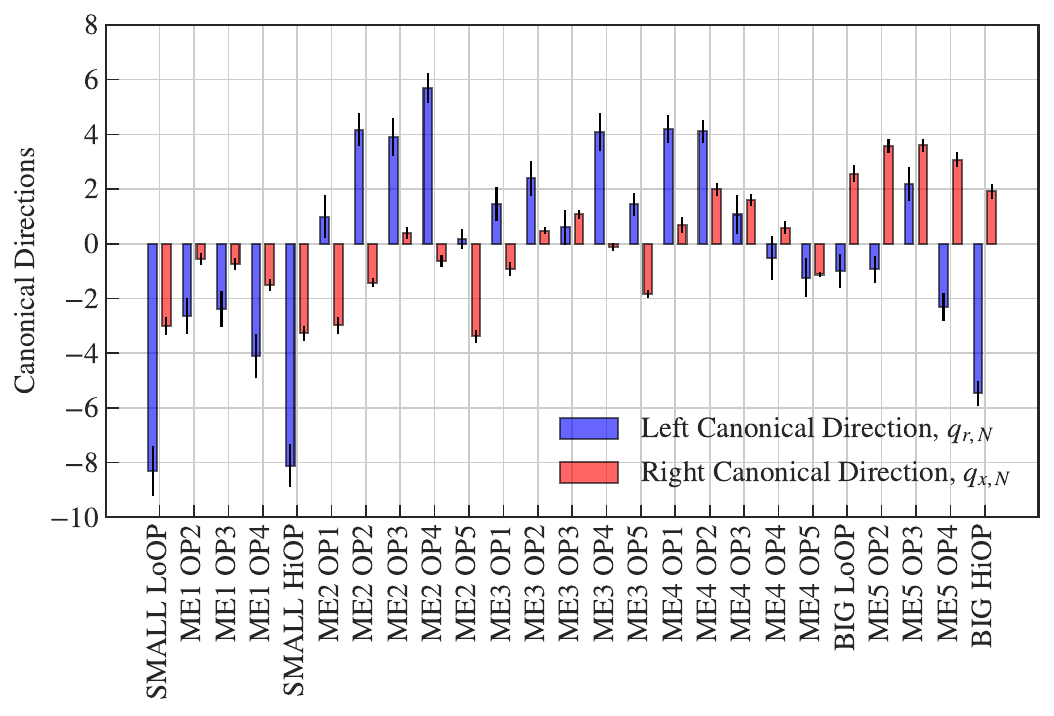}
         \centering
         \includegraphics[width=0.495\textwidth]{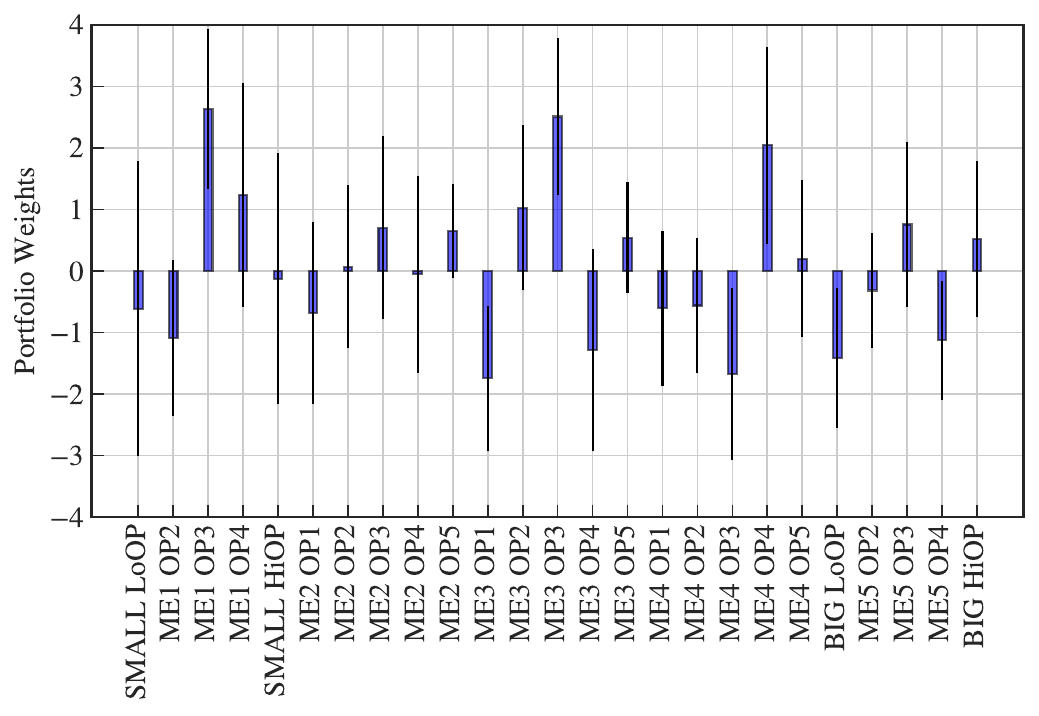}
     \end{subfigure}
     \begin{subfigure}[b]{\textwidth}
         \caption{25 Size and Investment}
         \centering
         \includegraphics[width=0.495\textwidth]{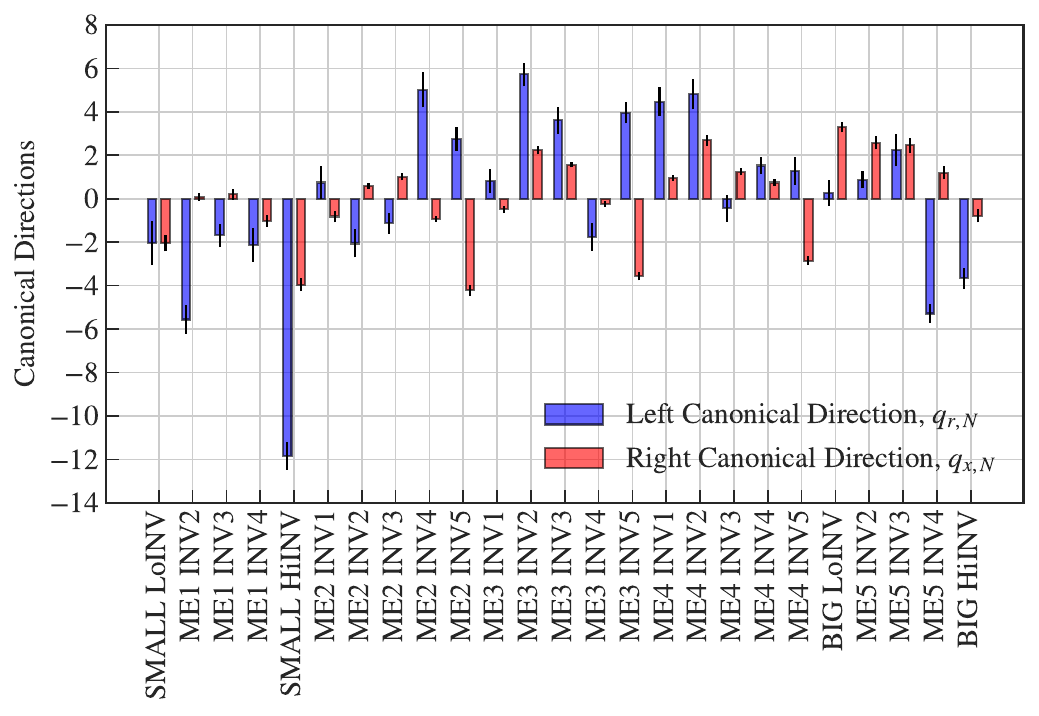}
         \centering
         \includegraphics[width=0.495\textwidth]{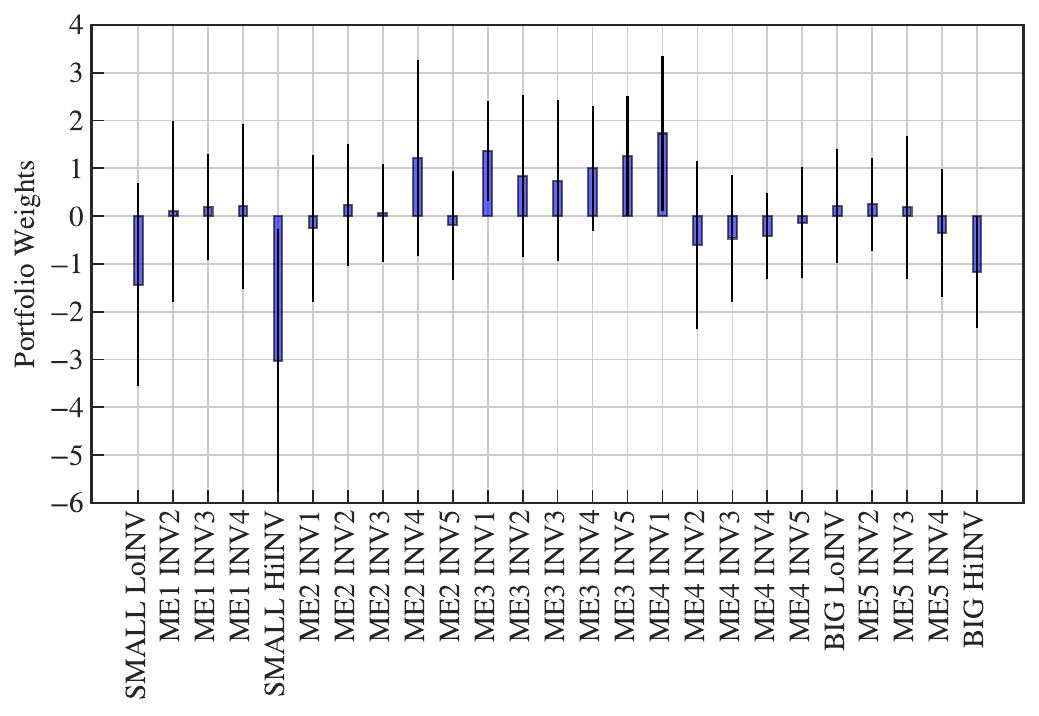}
     \end{subfigure}
    \label{fig:weights}
\end{figure}

\newpage

\vspace*{\fill}
\begin{table}[htbp]
\centering
\renewcommand{\arraystretch}{2.5}
\caption{Summary of PCA and CCA}
    \small
    \captionsetup{font=footnotesize}
    \caption*{This table summarizes the key distinguishing features of PCA and CCA applied to a paired dataset of asset returns $r_{t+1}$ and signals $x_t$. For CCA (PCA), it displays (1) the correlation (variance) maximization objective, (2) the equivalent error minimization objective, (3) the $i$th canonical (principal) direction, and (4) the $i$th correlation (variance) of the corresponding canonical (principal) direction. }
    \centering
    \begin{tabular}{|c|l|c|}
    \hline
    \multirow{ 5}{*}{PCA} & \makecell[l]{(1) Maximize Variance} & \makecell[c]{ $\displaystyle \max _{\substack{q_r, q_x  \\ \| q_r\|_2 = \| q_x\|_2 = 1} } \mathsf{Var}(q_r ' r_{t+1}) + \mathsf{Var}(q_x ' x_{t}) $} \\
    \cline{2-3} 
    & (2) Minimize Error & \makecell[c]{\(\displaystyle \min _{\| q_r\|_2 = \| q_x\|_2 = 1 } \mathbb{E} \big [ \big \| r_{t+1} - q_rq'_r r_{t+1} \big \|^2 _2 \big ] +  \mathbb{E} \big [ \big \| x_{t} - q_xq'_x x_{t} \big \|^2_2  \big ] \)} \\ 
    \cline{2-3} & (3) Principal Direction & \makecell[c]{\(\displaystyle q_{r,i}, q_{x,i} = \argmax _{\substack{\| q_r\|_2 = \| q_x\|_2 = 1 \\q_r'q_{r,j}=0, \text{ } j=1,\ldots, i-1 \\ q_x'q_{x,j}=0, \text{ } j=1,\ldots, i-1 } } \mathsf{Var}(q_r ' r_{t+1}) + \mathsf{Var}(q_x ' x_{t})  \) }\\ 
     \cline{2-3} & (4) Principal Variance & \makecell[c]{\(\displaystyle d_{r,i} = \mathsf{Var}(q_{r,i} ' r_{t+1}), \quad d_{x,i} = \mathsf{Var}(q_{x,i} ' x_{t}) \)} \\ 
    \hline
    \multirow{5}{*}{CCA} & (1) Maximize Correlation & \makecell[c]{\(\displaystyle \max _{\mathsf{Var}(q_r' r_{t+1}) = \mathsf{Var}(q_x' x_{t}) =1} \mathsf{Cov}(q_r ' r_{t+1}, q_x ' x_t) \)} \\
    \cline{2-3}  & (2) Minimize Error & \makecell[c]{\(\displaystyle \min _{\mathsf{Var}(q_r' r_{t+1}) = \mathsf{Var}(q_x' x_{t}) =1} \mathbb{E}\big [\big\| q_r'r_{t+1} - q'_x x_t \big \|^2_2 \big ]\)}  \\
    \cline{2-3} & (3) Canonical Direction &\makecell[c]{ \(\displaystyle q_{r,i}, q_{x,i} = \argmax _{\substack{q_r' \Sigma _r q_r = q_x ' \Sigma _x q_x =1 \\ q_r'q_{r,j}=0 , \text{ } j=1,\ldots, i-1 \\ q_x'q_{x,j} =0, \text{ } j=1,\ldots, i-1} } \mathsf{Cov}(q_r ' r_{t+1}, q_x ' x_t)  \) }\\ 
    \cline{2-3} & (4) Canonical Correlation & \makecell[c]{\(\displaystyle s_i = \frac{\mathsf{Cov}(q_{r,i} ' r_{t+1}, q_{x,i} ' x_t)}{\sqrt{\mathsf{Var}(q_{r,i}' r_{t+1})} \sqrt{\mathsf{Var}(q_{x,i}' x_{t})}}  \) }\\ 
    \hline
    \end{tabular}
\label{tab:lineardim}
\end{table}
\vspace{\fill}

\newpage

\vspace*{\fill}
\begin{table}[H]
    \footnotesize
    \caption{Performance Summary}
    \captionsetup{font=footnotesize}
    \caption*{This table shows the out-of-sample performances (in percentages) of various portfolio selection methods using a 21-day momentum signal. The test assets are the 25 and 100 equity portfolios independently sorted on size and three other factors (book-to-market, operating profitability, and investment). Panels A--G display the average returns, the Sharpe ratio of returns, the alpha with respect to the Fama-French 5-factor regression augmented with the univariate factor (UNI), the beta of the simple factor, the idiosyncratic volatility, and the information ratio. The numbers in parenthesis are the t-statistics for the portfolio's Sharpe ratio and information ratio using the standard error of \cite{lo2002statistics}. The covariances are estimated on a rolling walk-forward basis with $T=120$ non-overlapping monthly returns. All measures are computed with 579 monthly portfolio weights from 09-09-1974 until 10-21-2022.}
    \centering
\begin{tabular}{lcccccc} \toprule
 & FF25 & FF100 & \makecell{ME/OP \\ 25} & \makecell{ME/OP \\ 100} & \makecell{ME/INV \\ 25} & \makecell{ME/INV \\ 100}\\ \cmidrule{1-7}
\multicolumn{7}{l}{Panel A. Average Returns (\%)} \\\cmidrule{1-7}
CP2 & $2.196$ & $1.334$ & $1.741$ & $0.953$ & $1.809$ & $0.782$\\ 
PP2 & $3.762$ & $3.462$ & $3.806$ & $3.490$ & $3.377$ & $3.432$\\ 
MVO & $0.381$ & $0.188$ & $0.358$ & $-0.097$ & $0.365$ & $0.097$\\ 
UNI & $2.268$ & $1.787$ & $2.261$ & $1.477$ & $2.141$ & $1.473$\\ 
\cmidrule{1-7}
\multicolumn{7}{l}{Panel B. Standard Deviation of Returns (\%)} \\\cmidrule{1-7}
CP2 & $2.175$ & $1.366$ & $2.048$ & $1.260$ & $2.057$ & $1.248$\\ 
PP2 & $5.853$ & $5.690$ & $4.912$ & $4.931$ & $4.642$ & $4.433$\\ 
MVO & $1.857$ & $1.366$ & $1.685$ & $1.301$ & $1.595$ & $1.261$\\ 
UNI & $5.280$ & $4.814$ & $4.173$ & $3.663$ & $4.384$ & $3.970$\\ 
\cmidrule{1-7}
\multicolumn{7}{l}{Panel C. Sharpe Ratio} \\\cmidrule{1-7}
CP2 & $\underset{(6.872)}{ 1.010}$ & $\underset{(6.655)}{ 0.977}$ & $\underset{(5.837)}{ 0.854}$ & $\underset{(5.208)}{ 0.759}$ & $\underset{(6.028)}{ 0.883}$ & $\underset{(4.341)}{ 0.631}$\\ 
PP2 & $\underset{(4.554)}{ 0.662}$ & $\underset{(4.325)}{ 0.628}$ & $\underset{(5.389)}{ 0.786}$ & $\underset{(4.954)}{ 0.722}$ & $\underset{(5.077)}{ 0.740}$ & $\underset{(5.377)}{ 0.785}$\\ 
MVO & $\underset{(1.485)}{ 0.214}$ & $\underset{(1.004)}{ 0.145}$ & $\underset{(1.528)}{ 0.220}$ & $\underset{(-0.473)}{-0.068}$ & $\underset{(1.639)}{ 0.236}$ & $\underset{(0.577)}{ 0.083}$\\ 
UNI & $\underset{(3.125)}{ 0.452}$ & $\underset{(2.719)}{ 0.393}$ & $\underset{(3.843)}{ 0.557}$ & $\underset{(2.899)}{ 0.419}$ & $\underset{(3.493)}{ 0.506}$ & $\underset{(2.690)}{ 0.389}$\\ 
\cmidrule{1-7}
\multicolumn{7}{l}{Panel D. Alpha (\%)} \\\cmidrule{1-7}
CP2 & $1.844$ & $1.189$ & $1.116$ & $0.823$ & $1.340$ & $0.683$\\ 
PP2 & $2.252$ & $2.240$ & $2.183$ & $2.347$ & $2.069$ & $2.565$\\ 
MVO & $-0.210$ & $-0.097$ & $-0.342$ & $-0.391$ & $-0.208$ & $-0.214$\\ 
\cmidrule{1-7}
\multicolumn{7}{l}{Panel E. Beta to Simple Factor} \\\cmidrule{1-7}
CP2 & $0.146$ & $0.072$ & $0.259$ & $0.086$ & $0.172$ & $0.054$\\ 
PP2 & $0.684$ & $0.695$ & $0.703$ & $0.734$ & $0.583$ & $0.583$\\ 
MVO & $0.256$ & $0.168$ & $0.300$ & $0.201$ & $0.260$ & $0.188$\\ 
\cmidrule{1-7}
\multicolumn{7}{l}{Panel F. Idiosyncratic Volatility (\%)} \\\cmidrule{1-7}
CP2 & $2.041$ & $1.321$ & $1.749$ & $1.219$ & $1.910$ & $1.228$\\ 
PP2 & $4.583$ & $4.583$ & $3.951$ & $4.157$ & $3.889$ & $3.778$\\ 
MVO & $1.281$ & $1.102$ & $1.136$ & $1.077$ & $1.119$ & $1.018$\\ 
\cmidrule{1-7}
\multicolumn{7}{l}{Panel G. Information Ratio} \\\cmidrule{1-7}
CP2 & $\underset{(6.170)}{ 0.903}$ & $\underset{(6.149)}{ 0.900}$ & $\underset{(4.393)}{ 0.638}$ & $\underset{(4.642)}{ 0.675}$ & $\underset{(4.819)}{ 0.701}$ & $\underset{(3.835)}{ 0.556}$\\ 
PP2 & $\underset{(3.396)}{ 0.491}$ & $\underset{(3.379)}{ 0.489}$ & $\underset{(3.810)}{ 0.553}$ & $\underset{(3.892)}{ 0.565}$ & $\underset{(3.671)}{ 0.532}$ & $\underset{(4.668)}{ 0.679}$\\ 
MVO & $\underset{(-1.137)}{-0.164}$ & $\underset{(-0.609)}{-0.088}$ & $\underset{(-2.087)}{-0.301}$ & $\underset{(-2.512)}{-0.363}$ & $\underset{(-1.292)}{-0.186}$ & $\underset{(-1.456)}{-0.210}$\\ 
\bottomrule \end{tabular}
\label{tab:performance}
\end{table}
\vspace{\fill}

\newpage

\vspace*{\fill}
\begin{table}[H]
    \caption{Weight Statistics}
    \captionsetup{font=footnotesize}
    \caption*{This table shows the average monthly weight statistics of various portfolio selection methods estimated using a 21-day momentum signal. The test assets are the 25 and 100 equity portfolios independently sorted on size and three other factors (book-to-market, operating profitability, and investment). Panels A--F display the turnover, the proportional leverage, the sum of negative weights in the portfolio, and the minimum and maximum portfolio weights. The covariances are estimated on a rolling walk-forward basis with $T=120$ non-overlapping monthly returns. All measures are computed with 579 monthly portfolio weights from 09-09-1974 until 10-21-2022. }
    % \vspace*{5mm}
    \centering
    \footnotesize
\begin{tabular}{lcccccc} \toprule
 & FF25 & FF100 & \makecell{ME/OP \\ 25} & \makecell{ME/OP \\ 100} & \makecell{ME/INV \\ 25} & \makecell{ME/INV \\ 100}\\ \cmidrule{1-7}
\multicolumn{7}{l}{Panel A. Turnover} \\\cmidrule{1-7}
CP2 & $1.158$ & $1.202$ & $1.179$ & $1.225$ & $1.165$ & $1.217$\\ 
PP2 & $1.149$ & $1.148$ & $1.167$ & $1.154$ & $1.173$ & $1.162$\\ 
MVO & $1.377$ & $1.424$ & $1.378$ & $1.416$ & $1.381$ & $1.428$\\ 
UNI & $1.252$ & $1.282$ & $1.261$ & $1.295$ & $1.263$ & $1.298$\\ 
\cmidrule{1-7}
\multicolumn{7}{l}{Panel B. Proportional Leverage} \\\cmidrule{1-7}
CP2 & $0.493$ & $0.497$ & $0.499$ & $0.494$ & $0.498$ & $0.495$\\ 
PP2 & $0.496$ & $0.497$ & $0.493$ & $0.497$ & $0.500$ & $0.502$\\ 
MVO & $0.499$ & $0.499$ & $0.504$ & $0.497$ & $0.502$ & $0.500$\\ 
UNI & $0.489$ & $0.500$ & $0.489$ & $0.500$ & $0.489$ & $0.500$\\ 
\cmidrule{1-7}
\multicolumn{7}{l}{Panel C. Sum of Negative Weights} \\\cmidrule{1-7}
CP2 & $-0.496$ & $-0.497$ & $-0.498$ & $-0.498$ & $-0.497$ & $-0.498$\\ 
PP2 & $-0.500$ & $-0.500$ & $-0.500$ & $-0.500$ & $-0.500$ & $-0.500$\\ 
MVO & $-0.500$ & $-0.500$ & $-0.501$ & $-0.501$ & $-0.500$ & $-0.500$\\ 
UNI & $-0.500$ & $-0.500$ & $-0.500$ & $-0.500$ & $-0.500$ & $-0.500$\\ 
\cmidrule{1-7}
\multicolumn{7}{l}{Panel D. Minimum Weight} \\\cmidrule{1-7}
CP2 & $-0.106$ & $-0.033$ & $-0.106$ & $-0.033$ & $-0.103$ & $-0.032$\\ 
PP2 & $-0.102$ & $-0.034$ & $-0.105$ & $-0.037$ & $-0.104$ & $-0.033$\\ 
MVO & $-0.097$ & $-0.030$ & $-0.094$ & $-0.030$ & $-0.094$ & $-0.029$\\ 
UNI & $-0.077$ & $-0.020$ & $-0.077$ & $-0.020$ & $-0.077$ & $-0.020$\\ 
\cmidrule{1-7}
\multicolumn{7}{l}{Panel E. Maximum Weight} \\\cmidrule{1-7}
CP2 & $0.104$ & $0.033$ & $0.103$ & $0.032$ & $0.103$ & $0.032$\\ 
PP2 & $0.099$ & $0.033$ & $0.100$ & $0.036$ & $0.103$ & $0.033$\\ 
MVO & $0.098$ & $0.030$ & $0.095$ & $0.029$ & $0.096$ & $0.030$\\ 
UNI & $0.077$ & $0.020$ & $0.077$ & $0.020$ & $0.077$ & $0.020$\\ 
\cmidrule{1-7}
\multicolumn{7}{l}{Panel F. Standard Deviation of Weights} \\\cmidrule{1-7}
CP2 & $0.051$ & $0.013$ & $0.051$ & $0.013$ & $0.051$ & $0.013$\\ 
PP2 & $0.051$ & $0.013$ & $0.052$ & $0.013$ & $0.052$ & $0.013$\\ 
MVO & $0.050$ & $0.012$ & $0.050$ & $0.012$ & $0.050$ & $0.012$\\ 
UNI & $0.047$ & $0.012$ & $0.047$ & $0.012$ & $0.047$ & $0.012$\\ 
\bottomrule \end{tabular}
\label{tab:weightstat}
\end{table}
\vspace{\fill}

\newpage

\vspace*{\fill}
\begin{table}[H]
    \caption{Portfolio Returns Decomposition}
    \captionsetup{font=footnotesize}
    \caption*{This table shows the average returns (in percentages) of various portfolio selection methods due to static/dynamic trades and long/short trades. Panels A--C display the results for the dynamic and static trades as well as the share of returns due to dynamic trades. Panels D-E display the results for the long and short trades. The signal is a 21-day momentum. The test assets are the 25 and 100 equity portfolios independently sorted on size and three other factors (book-to-market, operating profitability, and investment). The covariances are estimated on a rolling walk-forward basis with $T=120$ non-overlapping monthly returns. All measures are computed with 579 monthly portfolio weights from 09-09-1974 until 10-21-2022. }
    % \vspace*{5mm}
    \centering
    \footnotesize
\begin{tabular}{lcccccc} \toprule
 & FF25 & FF100 & \makecell{ME/OP \\ 25} & \makecell{ME/OP \\ 100} & \makecell{ME/INV \\ 25} & \makecell{ME/INV \\ 100}\\ \cmidrule{1-7}
\cmidrule{1-7}
\multicolumn{7}{l}{Panel A. Static Returns (\%)} \\\cmidrule{1-7}
CP2 & $0.184$ & $0.204$ & $0.182$ & $0.159$ & $0.178$ & $0.138$\\ 
PP2 & $0.043$ & $0.032$ & $0.209$ & $0.165$ & $0.081$ & $0.026$\\ 
MVO & $0.118$ & $0.159$ & $0.084$ & $0.103$ & $0.142$ & $0.153$\\ 
UNI & $0.168$ & $0.170$ & $0.190$ & $0.166$ & $0.157$ & $0.158$\\ 
\cmidrule{1-7}
\multicolumn{7}{l}{Panel B. Dynamic Returns (\%)} \\\cmidrule{1-7}
CP2 & $2.008$ & $1.127$ & $1.561$ & $0.791$ & $1.640$ & $0.657$\\ 
PP2 & $3.718$ & $3.429$ & $3.614$ & $3.349$ & $3.302$ & $3.426$\\ 
MVO & $0.263$ & $0.029$ & $0.268$ & $-0.191$ & $0.217$ & $-0.067$\\ 
UNI & $2.097$ & $1.614$ & $2.059$ & $1.303$ & $1.988$ & $1.316$\\ 
\cmidrule{1-7}
\multicolumn{7}{l}{Panel C. Share of Dynamic Returns (\%)} \\\cmidrule{1-7}
CP2 & $91.466$ & $84.510$ & $89.383$ & $83.179$ & $90.064$ & $82.603$\\ 
PP2 & $98.802$ & $99.069$ & $94.401$ & $95.229$ & $97.558$ & $99.237$\\ 
MVO & $68.917$ & $15.496$ & $75.939$ & $217.015$ & $60.439$ & $-77.498$\\ 
UNI & $92.467$ & $90.326$ & $91.399$ & $88.562$ & $92.652$ & $89.249$\\ 
\cmidrule{1-7}
\multicolumn{7}{l}{Panel D. Long Leg Returns (\%)} \\\cmidrule{1-7}
CP2 & $8.012$ & $7.621$ & $7.666$ & $7.505$ & $8.021$ & $7.725$\\ 
PP2 & $8.893$ & $8.532$ & $8.664$ & $8.506$ & $8.716$ & $8.924$\\ 
MVO & $7.506$ & $7.306$ & $7.248$ & $7.197$ & $7.570$ & $7.457$\\ 
UNI & $8.057$ & $7.916$ & $7.940$ & $7.822$ & $8.255$ & $8.015$\\ 
\cmidrule{1-7}
\multicolumn{7}{l}{Panel E. Short Leg Returns (\%)} \\\cmidrule{1-7}
CP2 & $5.714$ & $6.297$ & $5.785$ & $6.522$ & $6.011$ & $6.848$\\ 
PP2 & $4.764$ & $4.747$ & $4.543$ & $4.697$ & $5.081$ & $5.240$\\ 
MVO & $6.939$ & $6.979$ & $6.637$ & $7.152$ & $7.001$ & $7.224$\\ 
UNI & $5.476$ & $5.861$ & $5.442$ & $6.163$ & $5.866$ & $6.347$\\ 
\bottomrule \end{tabular}
\label{tab:decom}
\end{table}
\vspace{\fill}

\newpage
\vspace*{\fill}
\begin{table}[H]
    \caption{Subsample Analysis}
    \captionsetup{font=footnotesize}
    \caption*{This table shows the annualized out-of-sample Sharpe ratio of various portfolio selection methods over different subsamples. Panels A--D display the results for Sample 1 (1974  through 1986), Sample 2 (1986  through 1998), Sample 3 (1998 through 2010), and Sample 4 (2010 through 2022). The signal is a 21-day momentum. The test assets are the 25 and 100 equity portfolios independently sorted on size and three other factors (book-to-market, operating profitability, and investment). The numbers in parenthesis are the t-statistics for the portfolio's Sharpe ratio and information ratio using the standard error of \cite{lo2002statistics}. The covariances are estimated on a rolling walk-forward basis with $T=120$ non-overlapping monthly returns. }
    \centering
    \footnotesize
\begin{tabular}{lccccccc} \toprule
 & FF25 & FF100 & \makecell{ME/OP \\ 25} & \makecell{ME/OP \\ 100} & \makecell{ME/INV \\ 25} & \makecell{ME/INV \\ 100}\\ \cmidrule{1-7}
\multicolumn{7}{l}{Panel A. Sharpe Ratio, Sample Period 1974 to 1986} \\\cmidrule{1-7}
CP2 & $\underset{(4.208)}{ 1.254}$ & $\underset{(2.874)}{ 0.842}$ & $\underset{(3.685)}{ 1.094}$ & $\underset{(1.378)}{ 0.401}$ & $\underset{(4.411)}{ 1.324}$ & $\underset{(2.163)}{ 0.632}$\\ 
PP2 & $\underset{(3.741)}{ 1.107}$ & $\underset{(3.979)}{ 1.182}$ & $\underset{(4.224)}{ 1.264}$ & $\underset{(3.660)}{ 1.086}$ & $\underset{(3.973)}{ 1.184}$ & $\underset{(4.093)}{ 1.222}$\\ 
MVO & $\underset{(0.056)}{ 0.016}$ & $\underset{(-0.527)}{-0.152}$ & $\underset{(1.232)}{ 0.358}$ & $\underset{(-0.580)}{-0.168}$ & $\underset{(1.419)}{ 0.412}$ & $\underset{(0.707)}{ 0.205}$\\ 
UNI & $\underset{(3.181)}{ 0.935}$ & $\underset{(2.607)}{ 0.762}$ & $\underset{(3.802)}{ 1.130}$ & $\underset{(2.773)}{ 0.814}$ & $\underset{(3.446)}{ 1.020}$ & $\underset{(2.850)}{ 0.838}$\\ 
\cmidrule{1-7}
\multicolumn{7}{l}{Panel B. Sharpe Ratio, Sample Period 1986 to 1998} \\\cmidrule{1-7}
CP2 & $\underset{(5.117)}{ 1.549}$ & $\underset{(5.199)}{ 1.577}$ & $\underset{(4.124)}{ 1.232}$ & $\underset{(3.854)}{ 1.146}$ & $\underset{(4.100)}{ 1.224}$ & $\underset{(3.604)}{ 1.069}$\\ 
PP2 & $\underset{(3.560)}{ 1.051}$ & $\underset{(3.989)}{ 1.185}$ & $\underset{(4.239)}{ 1.269}$ & $\underset{(3.892)}{ 1.159}$ & $\underset{(4.094)}{ 1.222}$ & $\underset{(4.074)}{ 1.216}$\\ 
MVO & $\underset{(2.025)}{ 0.589}$ & $\underset{(0.977)}{ 0.283}$ & $\underset{(1.185)}{ 0.344}$ & $\underset{(-0.431)}{-0.125}$ & $\underset{(1.616)}{ 0.470}$ & $\underset{(0.134)}{ 0.039}$\\ 
UNI & $\underset{(2.755)}{ 0.806}$ & $\underset{(2.466)}{ 0.720}$ & $\underset{(2.230)}{ 0.652}$ & $\underset{(1.393)}{ 0.405}$ & $\underset{(2.134)}{ 0.623}$ & $\underset{(1.307)}{ 0.380}$\\ 
\cmidrule{1-7}
\multicolumn{7}{l}{Panel C. Sharpe Ratio, Sample Period 1998 to 2010} \\\cmidrule{1-7}
CP2 & $\underset{(3.010)}{ 0.883}$ & $\underset{(4.237)}{ 1.263}$ & $\underset{(3.086)}{ 0.909}$ & $\underset{(3.788)}{ 1.126}$ & $\underset{(3.390)}{ 1.002}$ & $\underset{(3.034)}{ 0.894}$\\ 
PP2 & $\underset{(1.407)}{ 0.408}$ & $\underset{(1.342)}{ 0.389}$ & $\underset{(2.747)}{ 0.807}$ & $\underset{(3.418)}{ 1.011}$ & $\underset{(2.369)}{ 0.693}$ & $\underset{(2.692)}{ 0.790}$\\ 
MVO & $\underset{(0.977)}{ 0.283}$ & $\underset{(1.214)}{ 0.351}$ & $\underset{(0.693)}{ 0.201}$ & $\underset{(-0.378)}{-0.109}$ & $\underset{(0.557)}{ 0.161}$ & $\underset{(0.416)}{ 0.121}$\\ 
UNI & $\underset{(1.140)}{ 0.330}$ & $\underset{(1.137)}{ 0.329}$ & $\underset{(1.606)}{ 0.467}$ & $\underset{(1.483)}{ 0.431}$ & $\underset{(1.373)}{ 0.399}$ & $\underset{(1.151)}{ 0.334}$\\ 
\cmidrule{1-7}
\multicolumn{7}{l}{Panel D. Sharpe Ratio, Sample Period 2010 to 2022} \\\cmidrule{1-7}
CP2 & $\underset{(1.539)}{ 0.448}$ & $\underset{(1.339)}{ 0.389}$ & $\underset{(0.819)}{ 0.238}$ & $\underset{(1.105)}{ 0.322}$ & $\underset{(0.401)}{ 0.117}$ & $\underset{(-0.599)}{-0.174}$\\ 
PP2 & $\underset{(1.205)}{ 0.350}$ & $\underset{(0.515)}{ 0.149}$ & $\underset{(0.034)}{ 0.010}$ & $\underset{(-0.731)}{-0.213}$ & $\underset{(-0.468)}{-0.136}$ & $\underset{(-0.185)}{-0.054}$\\ 
MVO & $\underset{(-0.212)}{-0.062}$ & $\underset{(-0.021)}{-0.006}$ & $\underset{(0.186)}{ 0.054}$ & $\underset{(0.132)}{ 0.038}$ & $\underset{(-0.140)}{-0.041}$ & $\underset{(-0.552)}{-0.160}$\\ 
UNI & $\underset{(-0.037)}{-0.011}$ & $\underset{(-0.108)}{-0.031}$ & $\underset{(0.505)}{ 0.147}$ & $\underset{(0.412)}{ 0.120}$ & $\underset{(0.358)}{ 0.104}$ & $\underset{(0.327)}{ 0.095}$\\ 
\bottomrule \end{tabular}
\label{tab:subsample}
\end{table}
\vspace{\fill}

\newpage

\vspace*{\fill}
\begin{table}[H]
    \caption{Forecast Horizon}
    \captionsetup{font=footnotesize}
    \caption*{This table shows the annualized out-of-sample Sharpe ratio of various portfolio selection methods for different rebalancing frequencies. Panels A--C display the results for 1 day, 5 days, and 10 days. The signal is a 21-day momentum. The test assets are the 25 and 100 equity portfolios independently sorted on size and three other factors (book-to-market, operating profitability, and investment). The numbers in parenthesis are the t-statistics for the portfolio's Sharpe ratio and information ratio using the standard error of \cite{lo2002statistics}. The covariances are estimated on a rolling walk-forward basis with $T=120$ non-overlapping monthly returns. Panel A is based on $14{,}550$ daily out-of-sample returns from 01-06-1965 until 10-21-2022, Panel B is based on 2814 weekly out-of-sample returns from 12-06-1966 until 10-21-2022, Panel C is based on 1347 fortnightly out-of-sample returns from 06-12-1969 until 10-21-2022.  }
    \centering
    \footnotesize
\begin{tabular}{lcccccc} \toprule
 & FF25 & FF100 & \makecell{ME/OP \\ 25} & \makecell{ME/OP \\ 100} & \makecell{ME/INV \\ 25} & \makecell{ME/INV \\ 100}\\ \cmidrule{1-7}
\multicolumn{7}{l}{Panel A. Sharpe Ratio, Horizon: 1 day} \\\cmidrule{1-7}
CP2 & $\underset{(15.920)}{ 2.104}$ & $\underset{(9.869)}{ 1.301}$ & $\underset{(14.031)}{ 1.853}$ & $\underset{(7.535)}{ 0.993}$ & $\underset{(13.717)}{ 1.811}$ & $\underset{(8.280)}{ 1.091}$\\ 
PP2 & $\underset{(9.299)}{ 1.226}$ & $\underset{(8.119)}{ 1.070}$ & $\underset{(8.767)}{ 1.155}$ & $\underset{(6.042)}{ 0.796}$ & $\underset{(10.133)}{ 1.336}$ & $\underset{(7.266)}{ 0.957}$\\ 
MVO & $\underset{(8.264)}{ 1.089}$ & $\underset{(-2.896)}{-0.381}$ & $\underset{(5.971)}{ 0.786}$ & $\underset{(-5.518)}{-0.727}$ & $\underset{(6.195)}{ 0.816}$ & $\underset{(-5.201)}{-0.685}$\\ 
UNI & $\underset{(8.702)}{ 1.147}$ & $\underset{(5.016)}{ 0.660}$ & $\underset{(7.444)}{ 0.981}$ & $\underset{(2.434)}{ 0.320}$ & $\underset{(7.679)}{ 1.012}$ & $\underset{(2.762)}{ 0.364}$\\ 
\cmidrule{1-7}
\multicolumn{7}{l}{Panel B. Sharpe Ratio, Horizon: 5 days} \\\cmidrule{1-7}
CP2 & $\underset{(11.279)}{ 1.527}$ & $\underset{(8.810)}{ 1.187}$ & $\underset{(10.156)}{ 1.372}$ & $\underset{(6.313)}{ 0.848}$ & $\underset{(10.233)}{ 1.383}$ & $\underset{(7.980)}{ 1.074}$\\ 
PP2 & $\underset{(7.448)}{ 1.002}$ & $\underset{(7.676)}{ 1.033}$ & $\underset{(7.728)}{ 1.040}$ & $\underset{(6.946)}{ 0.934}$ & $\underset{(8.953)}{ 1.207}$ & $\underset{(8.004)}{ 1.078}$\\ 
MVO & $\underset{(4.140)}{ 0.555}$ & $\underset{(-1.096)}{-0.147}$ & $\underset{(3.097)}{ 0.415}$ & $\underset{(-3.351)}{-0.449}$ & $\underset{(3.627)}{ 0.486}$ & $\underset{(-1.336)}{-0.179}$\\ 
UNI & $\underset{(7.222)}{ 0.971}$ & $\underset{(5.075)}{ 0.681}$ & $\underset{(6.410)}{ 0.861}$ & $\underset{(3.378)}{ 0.453}$ & $\underset{(6.350)}{ 0.853}$ & $\underset{(3.469)}{ 0.465}$\\ 
\cmidrule{1-7}
\multicolumn{7}{l}{Panel C. Sharpe Ratio, Horizon: 10 days} \\\cmidrule{1-7}
CP2 & $\underset{(9.356)}{ 1.302}$ & $\underset{(7.416)}{ 1.025}$ & $\underset{(9.917)}{ 1.382}$ & $\underset{(7.469)}{ 1.033}$ & $\underset{(9.255)}{ 1.287}$ & $\underset{(7.160)}{ 0.989}$\\ 
PP2 & $\underset{(6.702)}{ 0.925}$ & $\underset{(6.892)}{ 0.952}$ & $\underset{(6.882)}{ 0.950}$ & $\underset{(6.005)}{ 0.827}$ & $\underset{(6.266)}{ 0.864}$ & $\underset{(6.627)}{ 0.914}$\\ 
MVO & $\underset{(3.078)}{ 0.422}$ & $\underset{(1.123)}{ 0.154}$ & $\underset{(2.666)}{ 0.365}$ & $\underset{(-1.125)}{-0.154}$ & $\underset{(3.351)}{ 0.459}$ & $\underset{(0.202)}{ 0.028}$\\ 
UNI & $\underset{(5.727)}{ 0.789}$ & $\underset{(4.993)}{ 0.686}$ & $\underset{(6.245)}{ 0.861}$ & $\underset{(4.555)}{ 0.626}$ & $\underset{(5.650)}{ 0.778}$ & $\underset{(4.142)}{ 0.569}$\\ 
\bottomrule \end{tabular}
\label{tab:rebalfreq}
\end{table}
\vspace{\fill}

\newpage

\vspace*{\fill}
\begin{table}[H]
    \caption{Shrinkage Intensity}
    \captionsetup{font=footnotesize}
    \caption*{This table shows the annualized out-of-sample Sharpe Ratio of Canonical Portfolio with the leading two components for different shrinkage intensities in the estimation of the covariance of the signals. Panels A--E display the results for the full sample period, Sample 1 (1974  through 1986), Sample 2 (1986  through 1998), Sample 3 (1998 through 2010), and Sample 4 (2010 through 2022). The rows within each panel correspond to results with shrinkage parameter $\delta_x \in \{ 0, 0.2, 0.5, 1 \}$. The signal is a 21-day momentum. The test assets are the 25 and 100 equity portfolios independently sorted on size and three other factors (book-to-market, operating profitability, and investment). The numbers in parenthesis are the t-statistics for the portfolio's Sharpe ratio and information ratio using the standard error of \cite{lo2002statistics}. The covariances are estimated on a rolling walk-forward basis with $T=120$ non-overlapping monthly returns.  All measures are computed with 579 monthly portfolio weights from 09-09-1974 until 10-21-2022. }
    \centering
    \footnotesize
\begin{tabular}{lcccccc} \toprule
 & FF25 & FF100 & \makecell{ME/OP \\ 25} & \makecell{ME/OP \\ 100} & \makecell{ME/INV \\ 25} & \makecell{ME/INV \\ 100}\\ \cmidrule{1-7}
\multicolumn{7}{l}{Panel A. Sharpe Ratio, Full Sample Period} \\\cmidrule{1-7}
CP2 ($\delta _x = 1)$ & $\underset{(6.707)}{ 0.986}$ & $\underset{(6.400)}{ 0.939}$ & $\underset{(5.752)}{ 0.842}$ & $\underset{(4.359)}{ 0.634}$ & $\underset{(5.798)}{ 0.849}$ & $\underset{(4.047)}{ 0.588}$\\ 
CP2 ($\delta _x = 0.8)$ & $\underset{(6.764)}{ 0.994}$ & $\underset{(6.766)}{ 0.995}$ & $\underset{(5.791)}{ 0.848}$ & $\underset{(5.817)}{ 0.851}$ & $\underset{(5.970)}{ 0.875}$ & $\underset{(4.545)}{ 0.661}$\\ 
CP2 ($\delta _x = 0.5)$ & $\underset{(6.624)}{ 0.973}$ & $\underset{(5.805)}{ 0.849}$ & $\underset{(5.712)}{ 0.836}$ & $\underset{(6.232)}{ 0.914}$ & $\underset{(5.935)}{ 0.869}$ & $\underset{(4.994)}{ 0.728}$\\ 
CP2 ($\delta _x = 0.2)$ & $\underset{(6.679)}{ 0.981}$ & $\underset{(3.393)}{ 0.491}$ & $\underset{(4.949)}{ 0.721}$ & $\underset{(3.778)}{ 0.548}$ & $\underset{(6.125)}{ 0.898}$ & $\underset{(3.694)}{ 0.536}$\\ 
% CP2 ($\delta _x = 0.0)$ & $\underset{(6.252)}{ 0.916}$ & $\underset{(2.770)}{ 0.400}$ & $\underset{(4.474)}{ 0.651}$ & $\underset{(3.911)}{ 0.568}$ & $\underset{(5.271)}{ 0.769}$ & $\underset{(3.404)}{ 0.493}$\\ 
\cmidrule{1-7}
\multicolumn{7}{l}{Panel B. Sharpe Ratio, Subsample Period 1974 to 1986} \\\cmidrule{1-7}
CP2 ($\delta _x = 1)$ & $\underset{(3.969)}{ 1.178}$ & $\underset{(3.284)}{ 0.966}$ & $\underset{(3.558)}{ 1.054}$ & $\underset{(1.364)}{ 0.396}$ & $\underset{(4.092)}{ 1.222}$ & $\underset{(1.482)}{ 0.431}$\\ 
CP2 ($\delta _x = 0.8)$ & $\underset{(4.401)}{ 1.315}$ & $\underset{(2.593)}{ 0.757}$ & $\underset{(3.714)}{ 1.103}$ & $\underset{(1.701)}{ 0.495}$ & $\underset{(4.598)}{ 1.384}$ & $\underset{(2.566)}{ 0.752}$\\ 
CP2 ($\delta _x = 0.5)$ & $\underset{(4.278)}{ 1.276}$ & $\underset{(1.672)}{ 0.485}$ & $\underset{(4.040)}{ 1.205}$ & $\underset{(1.645)}{ 0.479}$ & $\underset{(4.990)}{ 1.513}$ & $\underset{(2.896)}{ 0.852}$\\ 
CP2 ($\delta _x = 0.2)$ & $\underset{(4.291)}{ 1.280}$ & $\underset{(0.203)}{ 0.059}$ & $\underset{(3.179)}{ 0.937}$ & $\underset{(-0.123)}{-0.036}$ & $\underset{(5.192)}{ 1.580}$ & $\underset{(1.670)}{ 0.486}$\\ 
% CP2 ($\delta _x = 0.0)$ & $\underset{(4.084)}{ 1.214}$ & $\underset{(-0.209)}{-0.060}$ & $\underset{(2.270)}{ 0.664}$ & $\underset{(-0.264)}{-0.076}$ & $\underset{(4.735)}{ 1.429}$ & $\underset{(1.632)}{ 0.475}$\\ 
\cmidrule{1-7}
\multicolumn{7}{l}{Panel C. Sharpe Ratio, Subsample Period 1986 to 1998} \\\cmidrule{1-7}
CP2 ($\delta _x = 1)$ & $\underset{(4.863)}{ 1.465}$ & $\underset{(4.680)}{ 1.406}$ & $\underset{(3.814)}{ 1.134}$ & $\underset{(2.485)}{ 0.728}$ & $\underset{(3.892)}{ 1.159}$ & $\underset{(3.521)}{ 1.043}$\\ 
CP2 ($\delta _x = 0.8)$ & $\underset{(5.267)}{ 1.599}$ & $\underset{(5.493)}{ 1.676}$ & $\underset{(4.348)}{ 1.303}$ & $\underset{(4.518)}{ 1.358}$ & $\underset{(4.158)}{ 1.243}$ & $\underset{(3.644)}{ 1.081}$\\ 
CP2 ($\delta _x = 0.5)$ & $\underset{(5.571)}{ 1.703}$ & $\underset{(5.113)}{ 1.548}$ & $\underset{(4.300)}{ 1.288}$ & $\underset{(4.917)}{ 1.489}$ & $\underset{(3.580)}{ 1.061}$ & $\underset{(3.854)}{ 1.147}$\\ 
CP2 ($\delta _x = 0.2)$ & $\underset{(6.245)}{ 1.939}$ & $\underset{(4.176)}{ 1.244}$ & $\underset{(4.023)}{ 1.200}$ & $\underset{(3.500)}{ 1.036}$ & $\underset{(3.336)}{ 0.986}$ & $\underset{(2.669)}{ 0.783}$\\ 
% CP2 ($\delta _x = 0.0)$ & $\underset{(5.455)}{ 1.663}$ & $\underset{(3.988)}{ 1.184}$ & $\underset{(3.773)}{ 1.121}$ & $\underset{(3.547)}{ 1.051}$ & $\underset{(3.017)}{ 0.888}$ & $\underset{(2.609)}{ 0.765}$\\ 
\cmidrule{1-7}
\multicolumn{7}{l}{Panel D. Sharpe Ratio, Subsample Period 1998 to 2010} \\\cmidrule{1-7}
CP2 ($\delta _x = 1)$ & $\underset{(3.009)}{ 0.883}$ & $\underset{(3.896)}{ 1.156}$ & $\underset{(3.244)}{ 0.958}$ & $\underset{(4.052)}{ 1.209}$ & $\underset{(3.594)}{ 1.065}$ & $\underset{(3.599)}{ 1.067}$\\ 
CP2 ($\delta _x = 0.8)$ & $\underset{(2.907)}{ 0.852}$ & $\underset{(4.338)}{ 1.295}$ & $\underset{(2.954)}{ 0.869}$ & $\underset{(3.463)}{ 1.025}$ & $\underset{(3.096)}{ 0.912}$ & $\underset{(2.720)}{ 0.798}$\\ 
CP2 ($\delta _x = 0.5)$ & $\underset{(2.322)}{ 0.677}$ & $\underset{(3.819)}{ 1.132}$ & $\underset{(2.441)}{ 0.715}$ & $\underset{(2.556)}{ 0.749}$ & $\underset{(2.473)}{ 0.724}$ & $\underset{(2.193)}{ 0.641}$\\ 
CP2 ($\delta _x = 0.2)$ & $\underset{(1.687)}{ 0.489}$ & $\underset{(1.547)}{ 0.448}$ & $\underset{(2.621)}{ 0.769}$ & $\underset{(0.980)}{ 0.285}$ & $\underset{(2.765)}{ 0.812}$ & $\underset{(2.025)}{ 0.591}$\\ 
% CP2 ($\delta _x = 0.0)$ & $\underset{(1.948)}{ 0.566}$ & $\underset{(0.865)}{ 0.250}$ & $\underset{(2.777)}{ 0.816}$ & $\underset{(1.162)}{ 0.338}$ & $\underset{(2.746)}{ 0.806}$ & $\underset{(1.658)}{ 0.483}$\\ 
\cmidrule{1-7}
\multicolumn{7}{l}{Panel E. Sharpe Ratio, Subsample Period 2010 to 2022} \\\cmidrule{1-7}
CP2 ($\delta _x = 1)$ & $\underset{(1.671)}{ 0.487}$ & $\underset{(1.295)}{ 0.376}$ & $\underset{(0.782)}{ 0.228}$ & $\underset{(0.416)}{ 0.121}$ & $\underset{(0.298)}{ 0.087}$ & $\underset{(-1.059)}{-0.308}$\\ 
CP2 ($\delta _x = 0.8)$ & $\underset{(1.240)}{ 0.360}$ & $\underset{(1.348)}{ 0.392}$ & $\underset{(0.802)}{ 0.233}$ & $\underset{(1.764)}{ 0.516}$ & $\underset{(0.390)}{ 0.113}$ & $\underset{(-0.078)}{-0.023}$\\ 
CP2 ($\delta _x = 0.5)$ & $\underset{(1.501)}{ 0.437}$ & $\underset{(0.937)}{ 0.272}$ & $\underset{(1.288)}{ 0.376}$ & $\underset{(3.318)}{ 0.984}$ & $\underset{(1.133)}{ 0.330}$ & $\underset{(1.160)}{ 0.338}$\\ 
CP2 ($\delta _x = 0.2)$ & $\underset{(1.799)}{ 0.524}$ & $\underset{(0.762)}{ 0.221}$ & $\underset{(0.490)}{ 0.143}$ & $\underset{(3.083)}{ 0.912}$ & $\underset{(1.442)}{ 0.421}$ & $\underset{(0.898)}{ 0.261}$\\ 
% CP2 ($\delta _x = 0.0)$ & $\underset{(1.779)}{ 0.518}$ & $\underset{(0.915)}{ 0.265}$ & $\underset{(0.304)}{ 0.088}$ & $\underset{(3.215)}{ 0.952}$ & $\underset{(0.595)}{ 0.173}$ & $\underset{(0.840)}{ 0.244}$\\ 
\bottomrule \end{tabular}
\label{tab:shrinkage}
\end{table}
\vspace{\fill}

\newpage

\vspace*{\fill}
\begin{table}[H]
    \caption{Momentum Lookback Window}
    \captionsetup{font=footnotesize}
    \caption*{This table shows the annualized out-of-sample Sharpe Ratio of various portfolio selection methods based on momentum signals of different lookback window sizes. Panels A--E display the results for 42 days, 63 days, 84 days, 126 days, and 252 days. The test assets are the 25 and 100 equity portfolios independently sorted on size and three other factors (book-to-market, operating profitability, and investment). The numbers in parenthesis are the t-statistics for the portfolio's Sharpe ratio and information ratio using the standard error of \cite{lo2002statistics}. The covariances are estimated on a rolling walk-forward basis with $T=120$ non-overlapping monthly returns.  All measures are computed with 579 monthly portfolio weights from 09-09-1974 until 10-21-2022. }
    \centering
    \footnotesize
\begin{tabular}{lcccccc} \toprule
 & FF25 & FF100 & \makecell{ME/OP \\ 25} & \makecell{ME/OP \\ 100} & \makecell{ME/INV \\ 25} & \makecell{ME/INV \\ 100}\\ \cmidrule{1-7}
\multicolumn{7}{l}{Panel A. Sharpe Ratio, Lookback Window Size: 42} \\\cmidrule{1-7}
CP2 & $\underset{(6.054)}{ 0.887}$ & $\underset{(6.837)}{ 1.006}$ & $\underset{(5.895)}{ 0.863}$ & $\underset{(5.074)}{ 0.740}$ & $\underset{(6.198)}{ 0.909}$ & $\underset{(4.853)}{ 0.707}$\\ 
PP2 & $\underset{(4.465)}{ 0.649}$ & $\underset{(4.298)}{ 0.624}$ & $\underset{(4.587)}{ 0.668}$ & $\underset{(3.759)}{ 0.545}$ & $\underset{(5.342)}{ 0.780}$ & $\underset{(4.728)}{ 0.688}$\\ 
MVO & $\underset{(1.647)}{ 0.238}$ & $\underset{(1.380)}{ 0.199}$ & $\underset{(1.444)}{ 0.208}$ & $\underset{(-0.355)}{-0.051}$ & $\underset{(1.468)}{ 0.212}$ & $\underset{(0.761)}{ 0.110}$\\ 
UNI & $\underset{(3.031)}{ 0.438}$ & $\underset{(2.532)}{ 0.366}$ & $\underset{(3.462)}{ 0.502}$ & $\underset{(2.748)}{ 0.398}$ & $\underset{(2.982)}{ 0.432}$ & $\underset{(2.265)}{ 0.327}$\\ 
\cmidrule{1-7}
\multicolumn{7}{l}{Panel B. Sharpe Ratio, Lookback Window Size: 63} \\\cmidrule{1-7}
CP2 & $\underset{(4.575)}{ 0.665}$ & $\underset{(6.681)}{ 0.982}$ & $\underset{(5.948)}{ 0.871}$ & $\underset{(5.389)}{ 0.787}$ & $\underset{(6.143)}{ 0.901}$ & $\underset{(6.199)}{ 0.909}$\\ 
PP2 & $\underset{(4.216)}{ 0.612}$ & $\underset{(3.962)}{ 0.575}$ & $\underset{(4.913)}{ 0.716}$ & $\underset{(4.555)}{ 0.663}$ & $\underset{(4.627)}{ 0.674}$ & $\underset{(4.479)}{ 0.652}$\\ 
MVO & $\underset{(2.273)}{ 0.328}$ & $\underset{(1.995)}{ 0.288}$ & $\underset{(1.284)}{ 0.185}$ & $\underset{(-1.354)}{-0.195}$ & $\underset{(1.753)}{ 0.253}$ & $\underset{(1.229)}{ 0.177}$\\ 
UNI & $\underset{(2.901)}{ 0.420}$ & $\underset{(2.536)}{ 0.366}$ & $\underset{(3.308)}{ 0.479}$ & $\underset{(2.325)}{ 0.336}$ & $\underset{(2.716)}{ 0.393}$ & $\underset{(1.979)}{ 0.286}$\\ 
\cmidrule{1-7}
\multicolumn{7}{l}{Panel C. Sharpe Ratio, Lookback Window Size: 84} \\\cmidrule{1-7}
CP2 & $\underset{(4.306)}{ 0.625}$ & $\underset{(6.196)}{ 0.908}$ & $\underset{(4.628)}{ 0.674}$ & $\underset{(3.743)}{ 0.543}$ & $\underset{(4.237)}{ 0.616}$ & $\underset{(5.834)}{ 0.854}$\\ 
PP2 & $\underset{(3.504)}{ 0.508}$ & $\underset{(3.129)}{ 0.453}$ & $\underset{(2.826)}{ 0.409}$ & $\underset{(3.329)}{ 0.482}$ & $\underset{(3.211)}{ 0.465}$ & $\underset{(3.460)}{ 0.502}$\\ 
MVO & $\underset{(1.829)}{ 0.264}$ & $\underset{(2.110)}{ 0.305}$ & $\underset{(1.117)}{ 0.161}$ & $\underset{(-0.675)}{-0.097}$ & $\underset{(1.480)}{ 0.214}$ & $\underset{(2.169)}{ 0.313}$\\ 
UNI & $\underset{(1.917)}{ 0.277}$ & $\underset{(1.891)}{ 0.273}$ & $\underset{(2.352)}{ 0.340}$ & $\underset{(1.796)}{ 0.259}$ & $\underset{(1.715)}{ 0.248}$ & $\underset{(1.678)}{ 0.242}$\\ 
\cmidrule{1-7}
\multicolumn{7}{l}{Panel D. Sharpe Ratio, Lookback Window Size: 126} \\\cmidrule{1-7}
CP2 & $\underset{(4.605)}{ 0.670}$ & $\underset{(6.042)}{ 0.885}$ & $\underset{(3.755)}{ 0.545}$ & $\underset{(3.352)}{ 0.486}$ & $\underset{(4.179)}{ 0.607}$ & $\underset{(5.270)}{ 0.769}$\\ 
PP2 & $\underset{(3.169)}{ 0.459}$ & $\underset{(2.957)}{ 0.428}$ & $\underset{(3.055)}{ 0.442}$ & $\underset{(3.865)}{ 0.561}$ & $\underset{(2.486)}{ 0.359}$ & $\underset{(1.907)}{ 0.275}$\\ 
MVO & $\underset{(2.172)}{ 0.314}$ & $\underset{(2.620)}{ 0.379}$ & $\underset{(0.808)}{ 0.117}$ & $\underset{(-0.743)}{-0.107}$ & $\underset{(2.331)}{ 0.337}$ & $\underset{(1.780)}{ 0.257}$\\ 
UNI & $\underset{(2.455)}{ 0.355}$ & $\underset{(2.154)}{ 0.311}$ & $\underset{(2.390)}{ 0.346}$ & $\underset{(1.910)}{ 0.276}$ & $\underset{(2.040)}{ 0.295}$ & $\underset{(1.760)}{ 0.254}$\\ 
\cmidrule{1-7}
\multicolumn{7}{l}{Panel E. Sharpe Ratio, Lookback Window Size: 252} \\\cmidrule{1-7}
CP2 & $\underset{(6.851)}{ 1.008}$ & $\underset{(8.089)}{ 1.200}$ & $\underset{(5.511)}{ 0.805}$ & $\underset{(4.953)}{ 0.722}$ & $\underset{(6.460)}{ 0.949}$ & $\underset{(7.629)}{ 1.129}$\\ 
PP2 & $\underset{(4.035)}{ 0.585}$ & $\underset{(3.526)}{ 0.511}$ & $\underset{(3.683)}{ 0.534}$ & $\underset{(3.654)}{ 0.530}$ & $\underset{(3.521)}{ 0.511}$ & $\underset{(2.282)}{ 0.330}$\\ 
MVO & $\underset{(3.458)}{ 0.501}$ & $\underset{(3.986)}{ 0.578}$ & $\underset{(2.502)}{ 0.362}$ & $\underset{(0.300)}{ 0.043}$ & $\underset{(3.240)}{ 0.469}$ & $\underset{(2.701)}{ 0.391}$\\ 
UNI & $\underset{(3.023)}{ 0.437}$ & $\underset{(2.672)}{ 0.386}$ & $\underset{(3.612)}{ 0.524}$ & $\underset{(3.103)}{ 0.449}$ & $\underset{(3.042)}{ 0.441}$ & $\underset{(2.650)}{ 0.383}$\\ 
\bottomrule \end{tabular}
\label{tab:momspeed}
\end{table}
\vspace{\fill}

\newpage

\vspace*{\fill}
\begin{table}[H]
    \caption{Approximate versus Actual Solution}
    \captionsetup{font=footnotesize}
    \caption*{This table shows the out-of-sample performances (in percentages) of the `approximated' versus `actual' portfolio policy using a 21-day momentum signal. Panels A--G display the results for the average returns, the Sharpe ratio of returns, the alpha with respect to the Fama-French 5-factor regression augmented with the univariate factor (UNI), the beta of the simple factor, the idiosyncratic volatility, and the information ratio. The test assets are the 25 and 100 equity portfolios independently sorted on size and three other factors (book-to-market, operating profitability, and investment).  The numbers in parenthesis are the t-statistics for the portfolio's Sharpe ratio and information ratio using the standard error of \cite{lo2002statistics}. The covariances are estimated on a rolling walk-forward basis with $T=120$ non-overlapping monthly returns. All measures are computed with 579 monthly portfolio weights from 09-09-1974 until 10-21-2022.}
    \centering
    \footnotesize
\begin{tabular}{lcccccc} \toprule
 & FF25 & FF100 & \makecell{ME/OP \\ 25} & \makecell{ME/OP \\ 100} & \makecell{ME/INV \\ 25} & \makecell{ME/INV \\ 100}\\ \cmidrule{1-7}
\multicolumn{7}{l}{Panel A. Average Returns (\%)} \\\cmidrule{1-7}
CP2 (Approx) & $2.199$ & $1.341$ & $1.731$ & $0.957$ & $1.797$ & $0.777$\\ 
CP2 (Full) & $2.253$ & $1.175$ & $1.578$ & $0.822$ & $1.675$ & $0.596$\\ 
\cmidrule{1-7}
\multicolumn{7}{l}{Panel B. Standard Deviation of Returns (\%)} \\\cmidrule{1-7}
CP2 (Approx) & $2.177$ & $1.366$ & $2.049$ & $1.260$ & $2.057$ & $1.248$\\ 
CP2 (Full) & $2.210$ & $1.356$ & $2.119$ & $1.229$ & $2.061$ & $1.309$\\ 
\cmidrule{1-7}
\multicolumn{7}{l}{Panel C. Sharpe Ratio} \\\cmidrule{1-7}
CP2 (Approx) & $\underset{(6.872)}{ 1.011}$ & $\underset{(6.685)}{ 0.982}$ & $\underset{(5.797)}{ 0.848}$ & $\underset{(5.226)}{ 0.763}$ & $\underset{(5.985)}{ 0.877}$ & $\underset{(4.308)}{ 0.626}$\\ 
CP2 (Full) & $\underset{(6.933)}{ 1.020}$ & $\underset{(5.934)}{ 0.868}$ & $\underset{(5.140)}{ 0.750}$ & $\underset{(4.620)}{ 0.672}$ & $\underset{(5.588)}{ 0.817}$ & $\underset{(3.177)}{ 0.460}$\\ 
\cmidrule{1-7}
\multicolumn{7}{l}{Panel D. Alpha (\%)} \\\cmidrule{1-7}
CP2 (Approx) & $1.849$ & $1.197$ & $1.109$ & $0.826$ & $1.334$ & $0.679$\\ 
CP2 (Full) & $1.908$ & $1.043$ & $0.942$ & $0.739$ & $1.224$ & $0.536$\\ 
\cmidrule{1-7}
\multicolumn{7}{l}{Panel E. Beta to Simple Factor} \\\cmidrule{1-7}
CP2 (Approx) & $0.146$ & $0.071$ & $0.259$ & $0.086$ & $0.172$ & $0.053$\\ 
CP2 (Full) & $0.144$ & $0.067$ & $0.267$ & $0.057$ & $0.165$ & $0.046$\\ 
\cmidrule{1-7}
\multicolumn{7}{l}{Panel F. Idiosyncratic Volatility (\%)} \\\cmidrule{1-7}
CP2 (Approx) & $2.043$ & $1.321$ & $1.749$ & $1.220$ & $1.911$ & $1.229$\\ 
CP2 (Full) & $2.083$ & $1.315$ & $1.807$ & $1.208$ & $1.921$ & $1.297$\\ 
\cmidrule{1-7}
\multicolumn{7}{l}{Panel G. Information Ratio} \\\cmidrule{1-7}
CP2 (Approx) & $\underset{(6.176)}{ 0.905}$ & $\underset{(6.183)}{ 0.906}$ & $\underset{(4.360)}{ 0.634}$ & $\underset{(4.651)}{ 0.677}$ & $\underset{(4.792)}{ 0.698}$ & $\underset{(3.809)}{ 0.553}$\\ 
CP2 (Full) & $\underset{(6.247)}{ 0.916}$ & $\underset{(5.434)}{ 0.793}$ & $\underset{(3.596)}{ 0.522}$ & $\underset{(4.210)}{ 0.612}$ & $\underset{(4.379)}{ 0.637}$ & $\underset{(2.858)}{ 0.414}$\\ 
\bottomrule \end{tabular}
\label{tab:approx}
\end{table}
\vspace{\fill}

\newpage
\vspace*{\fill}
\begin{table}[H]
    \caption{Performance Summary for Two Signals Case}
    \captionsetup{font=footnotesize}
    \caption*{This table shows the out-of-sample performances (in percentages) of various portfolio selection methods using two momentum signals with a lookback window of 21 days and 252 days. Panels A--G display the results for the average returns, the Sharpe ratio of returns, the alpha with respect to the Fama-French 5-factor regression augmented with the univariate factor (UNI), the beta of the simple factor, the idiosyncratic volatility, and the information ratio. The test assets are 25 and 100 equity portfolios independently sorted on size and three other factors (book-to-market, operating profitability, and investment). The numbers in parenthesis are the t-statistics for the portfolio's Sharpe ratio and information ratio using the standard error of \cite{lo2002statistics}. The covariances are estimated on a rolling walk-forward basis with $T=120$ non-overlapping monthly returns. All measures are computed with 579 monthly portfolio weights from 09-09-1974 until 10-21-2022.}
    \centering
    \footnotesize
\begin{tabular}{lcccccc} \toprule
 & FF25 & FF100 & \makecell{ME/OP \\ 25} & \makecell{ME/OP \\ 100} & \makecell{ME/INV \\ 25} & \makecell{ME/INV \\ 100}\\ \cmidrule{1-7}
\multicolumn{7}{l}{Panel A. Average Returns (\%)} \\\cmidrule{1-7}
CP2 & $2.925$ & $1.807$ & $2.308$ & $1.027$ & $2.274$ & $1.551$\\ 
PP2 & $4.160$ & $4.143$ & $3.779$ & $4.082$ & $2.893$ & $2.522$\\ 
MVO & $0.761$ & $0.581$ & $0.595$ & $-0.032$ & $0.746$ & $0.386$\\ 
UNI & $2.882$ & $2.316$ & $2.743$ & $2.047$ & $2.431$ & $1.838$\\ 
\cmidrule{1-7}
\multicolumn{7}{l}{Panel B. Standard Deviation of Returns (\%)} \\\cmidrule{1-7}
CP2 & $2.343$ & $1.384$ & $2.153$ & $1.312$ & $2.072$ & $1.293$\\ 
PP2 & $6.042$ & $6.053$ & $5.288$ & $5.488$ & $4.707$ & $4.760$\\ 
MVO & $1.893$ & $1.431$ & $1.807$ & $1.332$ & $1.642$ & $1.347$\\ 
UNI & $5.489$ & $5.097$ & $4.543$ & $4.095$ & $4.619$ & $4.280$\\ 
\cmidrule{1-7}
\multicolumn{7}{l}{Panel C. Sharpe Ratio} \\\cmidrule{1-7}
CP2 & $\underset{(8.365)}{ 1.244}$ & $\underset{(8.735)}{ 1.302}$ & $\underset{(7.266)}{ 1.072}$ & $\underset{(5.387)}{ 0.786}$ & $\underset{(7.426)}{ 1.096}$ & $\underset{(8.073)}{ 1.197}$\\ 
PP2 & $\underset{(4.849)}{ 0.706}$ & $\underset{(4.823)}{ 0.702}$ & $\underset{(5.005)}{ 0.729}$ & $\underset{(5.197)}{ 0.758}$ & $\underset{(4.338)}{ 0.630}$ & $\underset{(3.777)}{ 0.548}$\\ 
MVO & $\underset{(2.836)}{ 0.410}$ & $\underset{(2.851)}{ 0.412}$ & $\underset{(2.336)}{ 0.337}$ & $\underset{(-0.120)}{-0.017}$ & $\underset{(3.184)}{ 0.461}$ & $\underset{(2.028)}{ 0.293}$\\ 
UNI & $\underset{(3.766)}{ 0.546}$ & $\underset{(3.285)}{ 0.476}$ & $\underset{(4.263)}{ 0.619}$ & $\underset{(3.561)}{ 0.516}$ & $\underset{(3.751)}{ 0.544}$ & $\underset{(3.093)}{ 0.448}$\\ 
\cmidrule{1-7}
\multicolumn{7}{l}{Panel D. Alpha (\%)} \\\cmidrule{1-7}
CP2 & $2.287$ & $1.576$ & $1.508$ & $0.781$ & $1.641$ & $1.300$\\ 
PP2 & $2.048$ & $2.097$ & $1.691$ & $2.263$ & $1.318$ & $1.205$\\ 
MVO & $0.025$ & $0.185$ & $-0.195$ & $-0.391$ & $0.117$ & $-0.003$\\ 
\cmidrule{1-7}
\multicolumn{7}{l}{Panel E. Beta to Simple Factor} \\\cmidrule{1-7}
CP2 & $0.206$ & $0.100$ & $0.277$ & $0.106$ & $0.202$ & $0.115$\\ 
PP2 & $0.735$ & $0.866$ & $0.783$ & $0.922$ & $0.600$ & $0.716$\\ 
MVO & $0.254$ & $0.174$ & $0.291$ & $0.171$ & $0.257$ & $0.198$\\ 
\cmidrule{1-7}
\multicolumn{7}{l}{Panel F. Idiosyncratic Volatility (\%)} \\\cmidrule{1-7}
CP2 & $2.040$ & $1.285$ & $1.745$ & $1.240$ & $1.856$ & $1.198$\\ 
PP2 & $4.494$ & $4.155$ & $3.909$ & $3.986$ & $3.858$ & $3.673$\\ 
MVO & $1.288$ & $1.124$ & $1.235$ & $1.142$ & $1.136$ & $1.058$\\ 
\cmidrule{1-7}
\multicolumn{7}{l}{Panel G. Information Ratio} \\\cmidrule{1-7}
CP2 & $\underset{(7.584)}{ 1.121}$ & $\underset{(8.254)}{ 1.226}$ & $\underset{(5.906)}{ 0.864}$ & $\underset{(4.336)}{ 0.630}$ & $\underset{(6.037)}{ 0.884}$ & $\underset{(7.349)}{ 1.084}$\\ 
PP2 & $\underset{(3.149)}{ 0.456}$ & $\underset{(3.485)}{ 0.505}$ & $\underset{(2.990)}{ 0.433}$ & $\underset{(3.914)}{ 0.568}$ & $\underset{(2.364)}{ 0.342}$ & $\underset{(2.272)}{ 0.328}$\\ 
MVO & $\underset{(0.133)}{ 0.019}$ & $\underset{(1.139)}{ 0.164}$ & $\underset{(-1.095)}{-0.158}$ & $\underset{(-2.373)}{-0.343}$ & $\underset{(0.715)}{ 0.103}$ & $\underset{(-0.019)}{-0.003}$\\ 
\bottomrule \end{tabular}
\label{tab:twosig}
\end{table}
\vspace{\fill}

\newpage

\vspace*{\fill}
\begin{table}[H]
    \caption{Weight Statistics for Two Signals Case}
    \captionsetup{font=footnotesize}
    \caption*{This table shows the average monthly weight statistics of various portfolio selection methods estimated using a 21-day momentum signal. Panels A--F display the results for the turnover, the proportional leverage, the sum of negative weights in the portfolio, and the minimum and maximum portfolio weights. The test assets are the 25 and 100 equity portfolios independently sorted on size and three other factors (book-to-market, operating profitability, and investment). The covariances are estimated on a rolling walk-forward basis with $T=120$ non-overlapping monthly returns. All measures are computed with 579 monthly portfolio weights from 09-09-1974 until 10-21-2022. }
    % \vspace*{5mm}
    \centering
    \footnotesize
\begin{tabular}{lcccccc} \toprule
 & FF25 & FF100 & \makecell{ME/OP \\ 25} & \makecell{ME/OP \\ 100} & \makecell{ME/INV \\ 25} & \makecell{ME/INV \\ 100}\\ \cmidrule{1-7}
\multicolumn{7}{l}{Panel A. Turnover} \\\cmidrule{1-7}
CP2 & $0.749$ & $0.437$ & $0.806$ & $0.542$ & $0.762$ & $0.466$\\ 
PP2 & $0.806$ & $0.763$ & $0.878$ & $0.794$ & $0.891$ & $0.856$\\ 
MVO & $1.064$ & $1.075$ & $1.065$ & $1.082$ & $1.060$ & $1.079$\\ 
UNI & $0.959$ & $0.978$ & $0.970$ & $0.990$ & $0.969$ & $0.989$\\ 
\cmidrule{1-7}
\multicolumn{7}{l}{Panel B. Proportional Leverage} \\\cmidrule{1-7}
CP2 & $0.477$ & $0.473$ & $0.480$ & $0.488$ & $0.470$ & $0.486$\\ 
PP2 & $0.478$ & $0.483$ & $0.472$ & $0.473$ & $0.477$ & $0.484$\\ 
MVO & $0.506$ & $0.499$ & $0.509$ & $0.503$ & $0.497$ & $0.497$\\ 
UNI & $0.485$ & $0.498$ & $0.488$ & $0.497$ & $0.485$ & $0.498$\\ 
\cmidrule{1-7}
\multicolumn{7}{l}{Panel C. Sum of Negative Weights} \\\cmidrule{1-7}
CP2 & $-0.483$ & $-0.480$ & $-0.486$ & $-0.482$ & $-0.481$ & $-0.479$\\ 
PP2 & $-0.500$ & $-0.500$ & $-0.500$ & $-0.500$ & $-0.500$ & $-0.500$\\ 
MVO & $-0.499$ & $-0.499$ & $-0.500$ & $-0.501$ & $-0.498$ & $-0.498$\\ 
UNI & $-0.500$ & $-0.500$ & $-0.500$ & $-0.500$ & $-0.500$ & $-0.500$\\ 
\cmidrule{1-7}
\multicolumn{7}{l}{Panel D. Minimum Weight} \\\cmidrule{1-7}
CP2 & $-0.121$ & $-0.035$ & $-0.113$ & $-0.036$ & $-0.114$ & $-0.035$\\ 
PP2 & $-0.104$ & $-0.034$ & $-0.108$ & $-0.037$ & $-0.104$ & $-0.033$\\ 
MVO & $-0.099$ & $-0.031$ & $-0.096$ & $-0.030$ & $-0.097$ & $-0.031$\\ 
UNI & $-0.090$ & $-0.025$ & $-0.089$ & $-0.025$ & $-0.089$ & $-0.024$\\ 
\cmidrule{1-7}
\multicolumn{7}{l}{Panel E. Maximum Weight} \\\cmidrule{1-7}
CP2 & $0.109$ & $0.032$ & $0.102$ & $0.032$ & $0.102$ & $0.032$\\ 
PP2 & $0.092$ & $0.031$ & $0.096$ & $0.033$ & $0.096$ & $0.030$\\ 
MVO & $0.102$ & $0.031$ & $0.099$ & $0.031$ & $0.097$ & $0.030$\\ 
UNI & $0.088$ & $0.025$ & $0.087$ & $0.025$ & $0.089$ & $0.025$\\ 
\cmidrule{1-7}
\multicolumn{7}{l}{Panel F. Standard Deviation of Weights} \\\cmidrule{1-7}
CP2 & $0.053$ & $0.013$ & $0.052$ & $0.013$ & $0.052$ & $0.013$\\ 
PP2 & $0.051$ & $0.013$ & $0.052$ & $0.013$ & $0.051$ & $0.013$\\ 
MVO & $0.051$ & $0.013$ & $0.050$ & $0.012$ & $0.050$ & $0.012$\\ 
UNI & $0.049$ & $0.012$ & $0.049$ & $0.012$ & $0.049$ & $0.012$\\ 
\bottomrule \end{tabular}
\label{tab:twosigswgtstat}
\end{table}
\vspace{\fill}

\newpage

\newpage
\section{ Proofs for all the Propositions}\label{appendix:proofofprop}
\begin{proof}[Proof of Proposition \ref{prop:fullobj}] 
We provide the proof of this proposition in two parts. 

\noindent \textbf{Part 1}. We drop the time subscript $t$ for brevity. From the cyclic property of the trace operator and linearity of the expectation, the expected value of the portfolio returns is 
\begin{eqnarray} \label{eqn:mu}
\mathbb{E}[x' A r] 
= \mathbb{E}[ \mathsf{Tr}(A rx')]
= \mathsf{Tr}(A\Sigma_{rx}).
\end{eqnarray}
In order to analyze the variance of the portfolio returns, we appeal to the following one-dimensional from \cite{isserlis1918formula} or \cite{wick1950evaluation}, which expresses the higher moments of centered multivariate Gaussian vectors in terms of its second-moments. 
\begin{theorem}\label{thm:wick}
Let $z_1,z_2,z_3$, and $z_4$ be jointly Gaussian random variables with mean zero. Then we have the following results:
\begin{equation}
\begin{aligned}\label{eqn:wicks}
\mathbb{E}[z_1] &=0 \hspace{70pt}\\
\mathbb{E}[z_1z_2] &= \mathsf{Cov}(z_1, z_2) &\hspace{70pt} \\
\mathbb{E}[z_1z_2z_3] &=0 \hspace{70pt}\\
\mathbb{E}[z_1z_2 z_3 z_4] &= \mathbb{E}[z_1z_2] \mathbb{E}[z_3z_4] + \mathbb{E}[z_1z_3] \mathbb{E}[z_2z_4] + \mathbb{E}[z_1z_4]\mathbb{E}[z_2z_3] .
\end{aligned}
\end{equation}
\end{theorem}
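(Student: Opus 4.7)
The plan is to prove all four identities by working with the joint moment generating function (MGF) of $(z_1, z_2, z_3, z_4)$, which, under the jointly Gaussian, zero-mean assumption, has the closed form
\begin{equation*}
M(t_1, t_2, t_3, t_4) = \mathbb{E}\!\left[\exp\!\left(\sum_{i=1}^4 t_i z_i\right)\right] = \exp\!\left(\tfrac{1}{2}\sum_{i,j=1}^{4} \Sigma_{ij} t_i t_j\right), \qquad \Sigma_{ij} := \mathsf{Cov}(z_i, z_j).
\end{equation*}
The plan is then to extract the desired mixed moments by repeated differentiation of $M$ at $t = 0$, since $\mathbb{E}[z_{i_1} \cdots z_{i_k}] = \partial^{k} M / (\partial t_{i_1} \cdots \partial t_{i_k})\big|_{t=0}$.

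For the first two identities, I would simply note that $\partial M / \partial t_1 \big|_{t=0} = 0$ because the gradient of the quadratic exponent vanishes at the origin, and $\partial^{2} M / (\partial t_1 \partial t_2)\big|_{t=0} = \Sigma_{12} = \mathsf{Cov}(z_1, z_2)$ by direct application of the chain rule. For the third identity, rather than differentiate again I would give a shorter symmetry argument: if $(z_1,\ldots,z_4)$ is zero-mean Gaussian, then $-(z_1,\ldots,z_4)$ has the same distribution, so $\mathbb{E}[z_1 z_2 z_3] = \mathbb{E}[(-z_1)(-z_2)(-z_3)] = -\mathbb{E}[z_1 z_2 z_3]$, forcing it to vanish. (The same argument kills every odd-order moment.)

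The fourth identity is where the real work lies, and I expect the main obstacle to be a clean combinatorial bookkeeping of the Leibniz expansion. The plan is to compute $\partial^{4} M / (\partial t_1 \partial t_2 \partial t_3 \partial t_4)$ recursively: each differentiation $\partial / \partial t_k$ multiplies the current expression by the linear factor $L_k(t) := \sum_{j} \Sigma_{kj} t_j$, and may also act on a previously generated linear factor to produce a pure constant $\Sigma_{k\ell}$. Evaluating at $t = 0$ kills every summand that still carries an unsaturated linear factor $L_k(t)$, so only those terms survive in which each of the four differentiations has been paired up with exactly one other differentiation. These surviving terms are in bijection with the perfect matchings of $\{1,2,3,4\}$, of which there are exactly three: $\{(1,2),(3,4)\}$, $\{(1,3),(2,4)\}$, and $\{(1,4),(2,3)\}$. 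Each matching contributes $\Sigma_{ij}\Sigma_{k\ell} = \mathbb{E}[z_i z_j]\mathbb{E}[z_k z_\ell]$, reproducing exactly the claimed identity.

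If a more self-contained argument is preferred to the MGF differentiation, a clean alternative would be to condition on $(z_2, z_3, z_4)$: since $(z_1, z_2, z_3, z_4)$ is jointly Gaussian, $z_1 \mid (z_2, z_3, z_4)$ is univariate Gaussian with mean linear in $z_2, z_3, z_4$, whence $\mathbb{E}[z_1 z_2 z_3 z_4] = \mathbb{E}[\mathbb{E}[z_1 \mid z_2, z_3, z_4]\,z_2 z_3 z_4]$ reduces the fourth moment to a sum of three terms of the form $\alpha_j \mathbb{E}[z_j z_2 z_3 z_4]$, and an induction on the number of variables together with the second-moment identity closes the calculation. Either route works, but I would present the MGF approach as primary since it handles all four statements of the theorem in a unified framework.
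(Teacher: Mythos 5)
Your proposal is correct, but note that the paper itself does not prove this statement at all: Theorem \ref{thm:wick} is invoked inside the proof of Proposition \ref{prop:fullobj} purely by citation to Isserlis (1918) and Wick (1950), so there is no in-paper argument to compare against. What you supply is the standard, self-contained moment-generating-function proof, and each step checks out: the vanishing gradient of the quadratic exponent at the origin gives the first identity; the second derivative produces $\Sigma_{12}$ directly; the sign-flip symmetry $z \mapsto -z$ of a centered Gaussian kills all odd moments, which is cleaner than differentiating a third time; and your Leibniz bookkeeping for the fourth derivative is sound --- each application of $\partial/\partial t_k$ either spawns a fresh linear factor $L_k(t)$ or saturates an existing one into a constant $\Sigma_{k\ell}$, and since $L_k(0)=0$ only fully paired terms survive at $t=0$, putting the surviving summands in bijection with the three perfect matchings of $\{1,2,3,4\}$. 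The conditioning alternative you sketch is also valid. The only thing your write-up leaves implicit is the justification that moments of a Gaussian are recovered by differentiating the MGF at the origin (finiteness of all moments and interchange of differentiation and expectation), but for a multivariate Gaussian this is routine and would not be considered a gap.
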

The last equation of \eqref{eqn:wicks} takes all partitions of size two of the four variables, which gives us three separate terms. By recasting fourth-order terms using the covariance terms, we can express the variance as
% https://math.stackexchange.com/questions/4182166/expectation-of-bilinear-form-squared-with-probabilistic-matrix
%
\begin{equation}
\begin{aligned}
\mathsf{Var}[x' A r] &= \mathbb{E}[(x' A r)^2] - \mathbb{E}[x' A r]^2 \\
&= \sum_{i,j,k,l}\mathbb{E}[A_{i, j} A_{k,l} x_{i} r_{j}x_{k}  r_{l} ] -
\sum_{i,j,k,l} \mathbb{E}[ A_{i, j} x_{i} r_{j}]\mathbb{E}[A_{k,l} x_{k}  r_{l}]\\
&= \sum_{i,j,k,l}A_{i, j}A_{k,l} \mathbb{E}[x_{i} r_{j}] \mathbb{E}[x_{k}  r_{l} ] 
 + \sum_{i,j,k,l}A_{i, j}A_{k,l} \mathbb{E}[x_{i} x_{k} ] \mathbb{E}[ r_{j} r_{l} ] \\
&\qquad + \sum_{i,j,k,l}A_{i, j}A_{k,l} \mathbb{E}[x_{i} r_{l} ] \mathbb{E}[ x_{k} r_{j} ]  -
\sum_{i,j,k,l}  A_{i, j} A_{k,l} \mathbb{E}[x_{i} r_{j}] \mathbb{E}[x_{k}  r_{l}]\\
&= 
 \sum_{i,j,k,l}A_{i, j}A_{k,l} \mathbb{E}[x_{i} x_{k} ] \mathbb{E}[ r_{j} r_{l} ] +
\sum_{i,j,k,l}A_{i, j}A_{k,l} \mathbb{E}[x_{i} r_{l} ] \mathbb{E}[ x_{k} r_{j} ] \\
&= 
 \sum_{i,j,k,l} \mathbb{E}[x_{k} x_{i} ]A_{i, j} \mathbb{E}[ r_{j} r_{l} ]A_{k,l} +
\sum_{i,j,k,l} \mathbb{E}[r_{l} x_{i}  ]A_{i, j} \mathbb{E}[ r_{j}x_{k}   ] A_{k,l} ,
\end{aligned}
\end{equation}
where $A_{i, j}$ is the $(i, j)$-th entry of $A$. Reverting back to matrix notation, we have
\begin{align} \label{eqn:sigma}
\mathsf{Var}[x' A r]
&= \mathsf{Tr}(\Sigma_{x} A \Sigma_{r} A') + \mathsf{Tr}(\Sigma_{rx} A \Sigma_{rx} A).
\end{align}
\noindent \textbf{Part 2}. In order to solve the full mean-variance objective function \eqref{eq:obj2}, we use the tools from CCA that we have laid the ground in Section \ref{sec:cca}. To assist in our derivation, we expand our notation and let $Q_r$ and $Q_x$ be matrices whose columns contain the so-called $i$th canonical directions $ q_{r,i}\coloneqq\Sigma_r ^{-1/2} u_i$ and $q_{x,i}\coloneqq\Sigma_x ^{-1/2} v_i$, respectively, and $D \coloneqq \mathsf{Diag} ( s_1, s_2,\ldots, s_N)$ be a diagonal matrix containing the canonical correlations.  We now state the main theorem:
\begin{theorem}\label{thm:main}
Suppose $r_{t+1}$ and $x_t$ are $N$-dimensional jointly Gaussian random variables. Then the objective function \eqref{eq:cond} is maximized at 
\begin{align}\label{eq:optAfull}
    A = Q_x  \mathsf{Diag}( L_1 , \ldots , L_N ) Q_r',
\end{align}
where the diagonal matrix consists of the following elements
\begin{align}\label{eq:sol}
    L_{i}= \frac{1}{\gamma} \frac{s_i}{1+s_i^2}, 
\end{align}
for $i=1,\ldots , N$. 
\end{theorem}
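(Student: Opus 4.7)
The plan is to reduce the maximization over $A$ to an entrywise optimization by performing a two-stage change of basis that fully diagonalizes the quadratic form. The key feature I will exploit is that the same CCA rotation underlying the interpretation in Section \ref{sec:cca} simultaneously decouples all three trace terms, including the awkward asymmetric term $\mathsf{Tr}(\Sigma_{rx} A \Sigma_{rx} A)$.

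First, I would substitute $\tilde{A} := \Sigma_x^{1/2} A \Sigma_r^{1/2}$. Using the cyclic property of the trace together with the identity $\Sigma_{\tilde{r}\tilde{x}} = \Sigma_r^{-1/2} \Sigma_{rx} \Sigma_x^{-1/2}$, the objective becomes
\[
\mathsf{Tr}(\tilde{A}\, \Sigma_{\tilde{r}\tilde{x}}) - \frac{\gamma}{2} \mathsf{Tr}(\tilde{A}\tilde{A}') - \frac{\gamma}{2} \mathsf{Tr}(\Sigma_{\tilde{r}\tilde{x}}\, \tilde{A}\, \Sigma_{\tilde{r}\tilde{x}}\, \tilde{A}).
\]
Second, invoking the SVD $\Sigma_{\tilde{r}\tilde{x}} = U S V'$ with $S = \mathsf{Diag}(s_1,\ldots,s_N)$, I would set $B := V' \tilde{A} U$, so that $\tilde{A} = V B U'$. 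Since $U$ and $V$ are orthogonal, the three traces collapse to $\mathsf{Tr}(BS)$, $\mathsf{Tr}(BB')$, and $\mathsf{Tr}(SBSB)$, reducing the problem to the scalar optimization
\[
\max_B \; \sum_i s_i B_{ii} \;-\; \frac{\gamma}{2}\sum_{i,j} B_{ij}^2 \;-\; \frac{\gamma}{2}\sum_{i,j} s_i s_j B_{ij} B_{ji}.
\]

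Third, I would zero the gradient entrywise. For $i \ne j$, the first-order conditions form the coupled pair $B_{ij} + s_i s_j B_{ji} = 0$ and $B_{ji} + s_i s_j B_{ij} = 0$; substitution gives $(1 - s_i^2 s_j^2) B_{ij} = 0$, forcing $B_{ij} = B_{ji} = 0$ whenever $s_i s_j \ne 1$, a condition guaranteed at the population level since canonical correlations lie in $[0,1]$. The diagonal condition reads $s_i - \gamma B_{ii}(1 + s_i^2) = 0$, giving $B_{ii} = s_i /(\gamma(1+s_i^2)) = L_i$. Rotating back yields $A = \Sigma_x^{-1/2} V \mathsf{Diag}(L_1,\ldots,L_N) U' \Sigma_r^{-1/2} = Q_x \mathsf{Diag}(L_1,\ldots,L_N) Q_r'$, matching \eqref{eq:optAfull}. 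To confirm this is the global maximum, I would verify concavity in the $B$-coordinates by showing the quadratic form $\sum_{i,j} B_{ij}^2 + \sum_{i,j} s_i s_j B_{ij} B_{ji}$ is nonnegative, which follows pair by pair via completing the square once $s_i s_j \le 1$.

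The main obstacle I anticipate is the asymmetric term $\mathsf{Tr}(\Sigma_{rx} A \Sigma_{rx} A)$, which breaks the clean separable structure enjoyed by the approximate problem in Proposition \ref{prop:partialobj}. My approach tames it by noting that once expressed in the canonical basis it becomes $\sum_{i,j} s_i s_j B_{ij} B_{ji}$, coupling only the pairs $(B_{ij}, B_{ji})$ rather than generic quadruples of indices — this pairwise localization is what makes the first-order system tractable. A secondary subtlety is handling the degenerate case $s_i s_j = 1$, which does not arise at the population level but would require a limiting or perturbation argument should one wish to extend the theorem to regularized canonical correlations exceeding one.
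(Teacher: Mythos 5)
Your proof is correct and follows essentially the same route as the paper's: both reparameterize $A$ in the canonical basis (your $B = V'\Sigma_x^{1/2}A\,\Sigma_r^{1/2}U$ is exactly the paper's $L$ in $A = Q_x L Q_r'$), reduce the three traces to $\mathsf{Tr}(BS)$, $\mathsf{Tr}(BB')$, and $\mathsf{Tr}(SBSB)$, and use the first-order conditions to kill the off-diagonal entries before solving the decoupled diagonal problem. The only differences are minor refinements on your side: you resolve the off-diagonal system directly as $(1-s_i^2 s_j^2)B_{ij}=0$ rather than the paper's two-step symmetric/antisymmetric argument, and you add an explicit concavity check (positive semidefiniteness of the quadratic form when $s_i s_j \le 1$) that the paper leaves implicit when it passes from stationarity to optimality.
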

Without loss of generality, let us reparameterize the matrix of coefficients in terms of the canonical directions as $ A = Q_x L Q_r '$ where $L$ is a variable matrix that we now have to optimize on. The expected value of the portfolio returns at time $t$ can be written as
\begin{align}
\mathbb{E}[x'_t A r_{t+1}]  &= \mathsf{Tr}( A \Sigma_{rx} ) \\
&= \mathsf{Tr}(L D),
\end{align}
and the variance as
\begin{align}
\mathsf{Var}[x_t ' A r_{t+1}] &=  \mathsf{Tr}(\Sigma_x A \Sigma_r  A') + \mathsf{Tr}(\Sigma_{rx}  A \Sigma_{rx}  A) \\
&= \mathsf{Tr}(L L ') + \mathsf{Tr}(DL D L) .
\end{align}
Putting all together, the objective function \eqref{eq:obj2} can be written as:
\begin{align}
    \max _{L}   \mathsf{Tr}(L D ) - \frac{\gamma}{2} ( \mathsf{Tr}(L L' )  - \mathsf{Tr}(D  L D  L )   ).
\end{align}

 Using the rules for matrix derivatives from \cite{lutkepohl1997handbook}, we take first-order conditions with respect to the matrix $L$ to get
\begin{align}\label{eq:foc}
   D - \gamma( L +  D L' D )= 0.
\end{align}
Since the matrix $D$ is diagonal and thus symmetric, we necessarily have the following relationship
\begin{align}
    \frac{1}{\gamma} D =  L + DL'D = L' + D L D .
\end{align}
Let $L^a \coloneqq L-L'$, which is an anti-symmetric matrix (that is, $L^{a'}=- L^a$) and has diagonal elements zero. Then we have 
\begin{align}\label{eq:as}
    L^a = D L^a D.
\end{align} 
If we restrict our attention to the off-diagonal elements of \eqref{eq:as}, we see that $L^A_{i, j} = s_i s_j L^A_{i, j}$ for all $i \neq j$, so $(1-s_i s_j) L^A_{i, j}=0$. Therefore, either $s_i s_j = 1$ for all $i \neq j$ or $L^A_{i, j} =0$. Consequently, $L^a$ is identically zero and hence, $L$ must be a symmetric matrix.

The benefit of variable matrix $L$ being symmetric is that the condition \eqref{eq:foc} can now be written as
\begin{align}
    L+ D L D = \frac{1}{\gamma} D.
\end{align} 
If we focus on off-diagonal elements, we see that $(1 + s_i s_j)L_{i,j} =0$ for all $i\neq j$. But we also know that the canonical correlations satisfy the following ordering $0 \leq s_1 \leq  \ldots \leq s_N \leq 1$, and so it is impossible that $s_i s_j = -1$. Thus, $L_{i,j} =0$ for all $i \neq j$, and so $L$ must be a diagonal matrix. Thus, optimizing over the elements $L_{ii} \equiv L_i$ for $i = 1,\ldots ,N$, boils down to maximizing the following univariate problems
\begin{align}
    \max _{L_1 , \ldots , L_N} \sum_{i=1}^N L_i s_i - \frac{\gamma}{2} ( L_i^2 + s_i^2 L_i^2 )  .
\end{align}
Our problem can now be easily solved to give us
\begin{align}
    L_i = \frac{1}{\gamma} \frac{s_i}{1+s_i^2} .
\end{align}
Hence, the diagonal elements are a nonlinear function of the canonical correlations. The optimal matrix $A$ in \eqref{eq:optAfull} is effectively the optimal scaling of the canonical portfolios of the asset returns and signals. 
\end{proof}

\begin{proof}[Proof of Proposition \ref{prop:partialobj}]
Applying a change-of-variables $\tilde{A} \coloneqq \Sigma^{1/2}_x A \Sigma^{1/2}_r$, we have
\begin{align}\label{eq:obj200}
    \max _{\tilde{A}}  \mathsf{Tr}(\tilde{A}  \Sigma_{\tilde{r} \tilde{x}} '  ) -\frac{\gamma}{2} \mathsf{Tr}( \tilde{A} \tilde{A}' )   .
\end{align}
The problem \eqref{eq:obj200} can be solved to yield 
\begin{align}
    \tilde{A} =  \frac{1}{\gamma} \Sigma^{-1/2}_{x} \Sigma_{rx} ' \Sigma^{-1/2}_{r} .
\end{align} 
Since the solution is expressed in a different basis, we rescale it back to the original assets through the following operation
\begin{align} \label{eq:optAhat}
    A = \Sigma^{-1/2}_{x}  \tilde{A} \Sigma^{-1/2}_{r} = \frac{1}{\gamma} \Sigma^{-1}_{x}  \Sigma_{rx} ' \Sigma^{-1}_{r} .
\end{align}
As a result of this approximation, we now have 
\begin{align} \label{eq:approx}
    L_i \approx \frac{1}{\gamma}  s_i.
\end{align}
Figure \ref{fig:approx} contrasts the optimal singular value adjustment in \eqref{eq:sol} to the linear approximation \eqref{eq:approx} for $\gamma = 1$. We can see that for small values of $s_i$,  the optimal adjustment is approximately linear while for large values of $s_i$, the optimal adjustment downweighs the value of the canonical correlation.  
\end{proof}

\begin{proof}[Proof of Proposition \ref{prop:fullyinvested}]
% https://quant.stackexchange.com/questions/59202/derivation-of-mean-variance-portfolio-weights-as-closed-form-analytical-solution
We start by writing down the expression for the Lagrangian 
\begin{align}
    \mathsf{Tr}( A \Sigma_{rx} ) - \frac{\gamma}{2} \mathsf{Tr}(\Sigma_x A \Sigma_r  A')  + \lambda (1 - \mathbf{1}'A' x_t) ,
\end{align}
$\lambda >0$ is the Lagrange multiplier. Using the change-of-variables $\tilde{A} \coloneqq \Sigma^{1/2}_x A \Sigma^{1/2}_r$, we have
\begin{align}
    \mathsf{Tr}( \tilde{A} \Sigma_{\tilde{r}\tilde{x}} ) - \frac{\gamma}{2} \mathsf{Tr}(\tilde{A} \tilde{A}')  + \lambda (1 - \mathbf{1}' \Sigma_r^{-1/2}\tilde{A}' \Sigma_x^{-1/2} x_t) .
\end{align}
Performing the first-order conditions 
\begin{align}
    [\text{FOC } \tilde{A}]: & \enskip \Sigma_{\tilde{r}\tilde{x}}' 
 - \gamma \tilde{A} - \lambda \Sigma_x ^{-1/2} x_t \mathbf{1}' \Sigma_r ^{-1/2} = 0, \\
    [\text{FOC } \lambda]: & \enskip 1 - \mathbf{1}' w_t = 0. \label{eq:lambda2}
\end{align}
By reverting the change-of-variables we have made and using the fact that the portfolio policies are linear the signals, the portfolio weights is then
\begin{align}\label{eq:tmp}
    w_t^{\mathrm{FI}} &= \frac{1}{\gamma} \Sigma_r ^{-1} \Sigma_{rx} \Sigma_x ^{-1} x_t + \lambda (x_t ' \Sigma ^{-1}_x x_t) \mathbf{1}' \Sigma_r ^{-1} .
\end{align}
Solving for $\lambda$ gives
\begin{align}
    \lambda  = \frac{1 - \kappa}{(x_t ' \Sigma ^{-1}_x x_t) (\mathbf{1}' \Sigma_r ^{-1} \mathbf{1})} ,
\end{align}
where $\kappa \coloneqq \gamma ^{-1}\mathbf{1}' \Sigma_r ^{-1} \Sigma_{rx} \Sigma_x ^{-1} x_t$. Inserting the $\lambda$ into \eqref{eq:tmp} and rearranging the expression gives us the optimal portfolio \eqref{eq:finvested}.
\end{proof}
\begin{proof}[Proof of Proposition \ref{prop:return2singvals}]
If $r_{t+1}$ and $x_t$ are $N$-dimensional zero-mean Gaussian random variables, then their projections onto the  $i$th canonical portfolios given by $u_i ' \tilde{r}_{t+1}$ and  $v_i ' \tilde{x}_t $ are also Gaussian random variables with mean zero and unit variances and a joint correlation $s_i$. For the expected return of the $i$th canonical portfolio, we have 
\begin{align}
    \mathbb{E}[\pi _{t,i}] = s_i v_i ' \Sigma_{\tilde{r}\tilde{x}}' u_i = s_i ^2.
\end{align} 
Furthermore, the variance of the $i$th canonical portfolio is
\begin{align}
    \mathsf{Var}[\pi _{t,i}] &= \mathbb{E}[s_i ^2 (v_i ' \tilde{x}_t )^2 (u_i' \tilde{r}_{t+1} )^2] - s_i^4 \\
    &= s_i ^2 (1 + 2s_i ^2) - s_i^4 \\
    &= s_i^2(1+s_i ^2).
\end{align}
The second equality follows from the second-moment-based result of the product of two correlated Gaussian variables from \citet[Section 6]{haldane1942moments}. Substituting the expression of the optimal policy $A$ into the definition of the expected value and variance of the portfolio returns, we have:
\begin{align}
   \mathbb{E}[x_t'Ar_{t+1}] = \frac{1}{\gamma}\mathsf{Tr}(\Sigma_x ^{-1}\Sigma_{rx}'\Sigma_r ^{-1}\Sigma_{rx}) &= \frac{1}{\gamma}\sum ^N_{i=1} s_i^2 \\
   \mathbb{E}[x_t ' A \Sigma _r A' x_{t}] = \frac{1}{\gamma^2}\mathsf{Tr}(\Sigma_x ^{-1}\Sigma_{rx}'\Sigma_r ^{-1}\Sigma_{rx}) &= \frac{1}{\gamma^2}\sum ^N_{i=1} s_i^2. 
\end{align}
The proof concludes.
\end{proof}

% \begin{proof}[Proof of Proposition \ref{prop:bias}]
% % https://math.stackexchange.com/questions/3322556/jensens-inequality-over-definite-positive-matrices
    
% \end{proof}

\begin{proof}[Proof of Corollary \ref{corollary:sharpe}]
    This follows immediately Proposition \ref{prop:return2singvals} and the definition of the Sharpe ratio.
\end{proof}

\begin{proof}[Proof of Proposition \ref{prop:staticdyn}]
Substituting the second-moment matrix of managed portfolio returns \eqref{eq:sdbets} into the policy matrix $A$ from \eqref{eq:optA}, the expected returns based on this matrix of coefficients is then
    \begin{align}
         \mathbb{E}[x_t ' A r_{t+1}] = \frac{1}{\gamma} \mathsf{Tr}(\Sigma_x ^{-1}\Sigma_{rx}'\Sigma_r ^{-1}\Sigma_{rx}) = \frac{1}{\gamma}(\mu _r ' \Sigma _r^{-1} \mu _r)(\mu _x ' \Sigma _x^{-1} \mu _x) + \frac{1}{\gamma}\sum ^N_{i=1} s_i ^2,
    \end{align}
    and the claim follows.
\end{proof}

\end{document}